\journal{Journal of Differential Equations}
\newcommand{\comments}[1]{}
\def\mbf#1{\mathbf{#1}}
\def\cal#1{\mathcal{#1}}
\def\R{\right} 
\def\<{\langle} 
\def\>{\rangle}
\def\N{{\mathbb N}}
\def\Om{\Omega}
\def\ra{\rightarrow}
\newcommand{\be}{\begin{equation}}
\newcommand{\ee}{\end{equation}}
\newcommand{\bes}{\begin{equation*}}
\newcommand{\ees}{\end{equation*}}
\newcommand{\bea}{\begin{eqnarray}}
\newcommand{\eea}{\end{eqnarray}}
\newcommand{\beas}{\begin{eqnarray*}}
\newcommand{\eeas}{\end{eqnarray*}}
\newtheorem{theorem}{Theorem}
\newtheorem{lemma}{Lemma}
\newtheorem{corollary}{Corollary}
\newtheorem{proposition}{Proposition}
\newtheorem{remark}{Remark}
\def\bt{\begin{theorem}}
	\def\et{\end{theorem}}
\def\bp{\begin{proposition}}
	\def\ep{\end{proposition}}
\def\bl{\begin{lemma}}
	\def\el{\end{lemma}}
\def\bt{\begin{corollary}}
	\def\et{\end{corollary}}
\numberwithin{equation}{section}
\numberwithin{theorem}{section}
\numberwithin{proposition}{section}
\numberwithin{lemma}{section}
\numberwithin{corollary}{section}
\numberwithin{remark}{section}
\numberwithin{definition}{section}
\numberwithin{claim}{section}
\def\R{\mathbb{R}}
\begin{document}

\begin{frontmatter}

\title{Persistence time of solutions of the three-dimensional Navier-Stokes equations in 
	Sobolev-Gevrey classes}
%\tnoteref{mytitlenote}}
%\color{red}{\title[Necessary blow-up rates for solutions of the NSE in Gevrey spaces]{Necessary lower bounds on the blow-up of solutions of the 3D Navier--Stokes equations in Gevrey spaces}}
%\tnotetext[mytitlenote]{Fully documented templates are available in the elsarticle package on \href{http://www.ctan.org/tex-archive/macros/latex/contrib/elsarticle}{CTAN}.}

%% Group authors per affiliation:
\author{Animikh Biswas}
\address{Department of Mathematics \& Statistics, 
	University of Maryland Baltimore County \\
	Baltimore, MD 21250, USA}
%\fntext[myfootnote]{Since 1880.}
\author{Joshua Hudson}
\address{Johns Hopkins University Applied Physics Laboratory\\
	Laurel, MD 20723, USA}
%% or include affiliations in footnotes:
%\author[mymainaddress,mysecondaryaddress]{Elsevier Inc}
%\ead[url]{www.elsevier.com}
\author{Jing Tian\corref{mycorrespondingauthor}}
\address{Department of Mathematics, Towson University\\
	Towson, MD 21252, USA}
\cortext[mycorrespondingauthor]{Corresponding author}
\ead{jtian@towson.edu}
%\author[mysecondaryaddress]{Global Customer Service\corref{mycorrespondingauthor}}
%\cortext[mycorrespondingauthor]{Corresponding author}
%\ead{support@elsevier.com}
%
%\address[mymainaddress]{1600 John F Kennedy Boulevard, Philadelphia}
%\address[mysecondaryaddress]{360 Park Avenue South, New York}
%\author{Animikh Biswas $^{1}$}
%\author{Joshua Hudson$^{2}$}
%\author{Jing Tian$^{3,\dagger}$}
%\address{$^1$ Department of Mathematics \& Statistics, 
%	University of Maryland Baltimore County \\
%	Baltimore, MD 21250, USA}
%\address{$^2$ Johns Hopkins University Applied Physics Laboratory\\
%	Laurel, MD 20723, USA}
%\address{$^3$ Department of Mathematics, Towson University\\
%	Towson, MD 21252, USA}
%\address{$^\dagger$ Corresponding author.\\
%	E-mail address:\ jtian@towson.edu}
\begin{abstract}
	In this paper, we study existence times of strong solutions of the three-dimensional Navier-Stokes equations in  time-varying analytic Gevrey classes based on Sobolev spaces $H^s, s> \frac{1}{2}$. This complements the seminal work of Foias and Temam (1989) on $H^1$ based Gevrey classes, thus enabling us to improve estimates of the analyticity radius of solutions for certain classes of initial data. The main thrust of the paper consists in showing that the existence times in the much stronger Gevrey norms (i.e. the norms defining the analytic Gevrey classes which quantify the radius of real-analyticity of solutions) match the best known persistence times in Sobolev classes. 
	Additionally, as in the case of persistence times in the corresponding Sobolev classes, our existence times in Gevrey norms are optimal for 
	$\frac{1}{2} < s < \frac{5}{2}$.	
\end{abstract}

\begin{keyword}
3D Navier--Stokes equations\sep Gevrey spaces\sep persistence time\sep analyticity radius
\MSC[2010] primary 35Q35\sep secondary 35Q30, 76D05
%00-01\sep  99-00
\end{keyword}
\end{frontmatter}
%\begin{itemize}
%	\item document style
%\end{itemize}
%\linenumbers
\section{Introduction}\label{sec:intro}
  We consider the incompressible Navier--Stokes equations (NSE) in a three-dimensional domain $\Omega=[0,L]^3$, equipped with the space-periodic boundary condition. The NSE, which are the governing equations of motion of a viscous, incompressible, Newtonian fluid, are given by
\begin{subequations}
	\begin{align*}
	%\label{nse}
	&\frac{\partial {u}}{\partial t}-\nu \Delta {u}+\left({u} \cdot \nabla\right) {u}+\frac{1}{\rho}\nabla p=0,\\
	&\nabla \cdot {u}=0,\\
	&{u}(x, 0)={u}^0 (x),
	%	\\ &\int_{\Omega} u dx=0, \int_{\Omega} F dx=0,
	\end{align*}
\end{subequations}
where $x=(x_1, x_2, x_3) \in \Omega$, ${u}(x,t)=(u_1, u_2, u_3)$ is the unknown velocity of the fluid, $u^0=(u^0_1, u^0_2, u^0_3)$ is the initial velocity, $\nu>0$ is the kinematic viscosity of the fluid, $\rho$ is the density, and $p$ the unknown pressure. The incompressibility constraint is manifested in the divergence free condition $\nabla \cdot u=0$.

Recently, several authors \cite{benameur2010blow, cheskidov2016lower, CM2018, cortissoz2014lower, mccormick2016lower, robinson2012lower} have obtained  ``optimal'' existence times, and the associated blow-up rates, assuming they exist, for solutions of  the 3D NSE in Sobolev spaces $H^s, s>\frac{1}{2}$. 
In particular, in \cite{robinson2012lower}, by employing a scaling argument, Robinson, Sadowski and Silva established that the optimal existence time of a (strong) solution of the NSE in the whole space $\R^3$, for initial data
in $H^s, s>\frac{1}{2}$,\comments{short of the answer to the global regularity problem for the 3D NSE being in the affirmative,} is necessarily given by 
\be  \label{optimaltime}
T(u_0) \gtrsim \frac{1}{\|u_0\|_{H^s}^{\frac{4}{2s-1}}}.
\ee
The optimality refers to the fact that if one establishes an existence time which depends solely on $\|u_0\|_{H^s}$ which is better than \eqref{optimaltime}, i.e. has the form $T \gtrsim \frac{1}{\|u_0\|_{H^s}^{\gamma}}$ with $\gamma < \frac{4}{2s-1}$, then the NSE is globally well-posed in $H^s$. Observe that an existence time of the form \eqref{optimaltime} immediately yields the  blow-up rate 
\[
\|u(t)\|_{H^s} \gtrsim \frac{1}{(T_*-t)^{\frac{2s-1}{4}}}\, ,
\]
where $T_*<\infty$ is the putative blow-up time of $\|u(t)\|_{H^s}$.
It follows from the optimality of the existence time that this blow-up rate is also optimal \cite{robinson2012lower}.\comments{ i.e., if $\|u(t)\|_{H^s}$ blows-up at the rate
$\|\|u(t)\|_{H^s} \gtrsim \frac{1}{(T_*-t)^\gamma}$ with $\gamma > \frac{2s-1}{4}$ implies that $T_*$ is not a blow-up time} In the same work \cite{robinson2012lower}, the authors obtained the following existence/persistence  times  in the space $H^s$, namely, 
\be  \label{robinsontime}
T(u_0) \gtrsim \ 
\left\{
\begin{array}{l}
\frac{1}{\|u_0\|_{H^s}^{\frac{4}{2s-1}}}, \quad
\frac{1}{2} < s < \frac{5}{2},\quad  s \neq \frac{3}{2},\\
\frac{1}{\|u_0\|_{H^s}^{\frac{5}{2s}}}, \quad s > \frac{5}{2}.
\end{array}
\right.
\ee
Evidently, the existence time is optimal for $\frac{1}{2} < s < \frac{5}{2}, s \neq \frac{3}{2}$, while the existence time for $s > \frac{5}{2}$, though not optimal, is the best known to-date. 
The borderline cases, namely $s=\frac{3}{2}, s = \frac{5}{2}$, were subsequently considered by varying methods in \cite{cheskidov2016lower, CM2018, cortissoz2014lower, mccormick2016lower},
including Littlewodd-Paley decomposition and other harmonic analysis tools, the upshot being that the
optimal existence time $T \sim  \frac{1}{\|u_0\|_{H^s}^2}$ also holds for $s=\frac{3}{2}$, while the optimal existence time in $H^{5/2}$ is still open.
\comments{$\frac{1}{2} <s < \frac{5}{2}$ while the existence time in \eqref{robinsontime} for $s>\frac{5}{2}$, though not optimal, is the best known to date.}
%namely $T \sim \frac{1}{\|u_0\|_{H^\frac{3}{2}}^2}$ holds for $s=\frac{3}%{2}$ and with a logarithmic correction for 
%$s=\frac{5}{2}$.
%The existence times for $\frac{1}{2} < s < \frac{5}{2}, s $

The purpose of our present work is to investigate as to what extent the above mentioned existence/ persistence times (and the associated blow-up rates)   hold if one considers the evolution of the NSE in an  \emph{analytic Gevrey class}, equipped with the much stronger Gevrey norm which characterizes space analyticity,  with the goal of obtaining sharper lower bounds of the space-analyticity radius of the solutions. In fluid-dynamics, the space analyticity radius has an important physical interpretation: at this length scale, the viscous effects  and the (nonlinear) inertial effects  are roughly comparable, and  below this length scale, the Fourier spectrum decays exponentially  \cite{BJMT, dt, foias, hkr, hkr1, kuk}. In other words,  the space analyticity radius yields a Kolmogorov type \emph{dissipation length scale}  encountered in conventional turbulence theory.  The exponential decay property of high frequencies can be used to show that the finite dimensional Galerkin approximations  converge exponentially fast. For instance, in the case of the complex Ginzburg-Landau equation,  analyticity estimates are used in \cite{doel-t} to rigorously explain numerical observations that the solutions to this equation can be accurately represented by a very low-dimensional  Galerkin approximation, and that the ``linear'' Galerkin approximation performs just as well as the nonlinear one.  Furthermore, a surprising connection between possible abrupt change in analyticity radius (which is necessarily shown to be intermittent in \cite{BiF} if it occurs) and (inverse) energy cascade in 3D turbulence was found in \cite{BiF}. Other applications of analyticity radius  occur in establishing sharp temporal decay rates of solutions in higher Sobolev norms  \cite{biswas2012, oliver2000remark}, establishing geometric  regularity criteria for the Navier-Stokes and related equations and in measuring the  spatial complexity of fluid flow \cite{bradshaw, g, kuk1} and in the nodal parameterization of the attractor \cite{fr,fkr}.

In a seminal work, Foias and Temam \cite{foias1989gevrey} pioneered the use of Gevrey norms for estimating space analyticity radius for the Navier-Stokes equations which was subsequently used by many authors (see \cite{biswas2012, biswas2010navier, biswas2007existence, cao1999navier, ferrari1998gevrey}, and the references there in); closely related approaches can be found in \cite{BGK, gk, gk2}. In this work, Foias and Temam showed that starting with initial data in $H^1$, one can control the much stronger Gevrey norm of the solution up a time which is comparable to the optimal existence time of the strong solution in $H^1$.
The Gevrey class approach enables one to avoid cumbersome recursive estimation  of higher order derivatives and is known to yield optimal estimates of the analyticity radius  \cite{optimaltiti}. Other approaches to analyticity can be found in \cite{pav, masuda, miura2006} for the 3D NSE, \cite{Kalantarov} for the Navier-Stokes-Voight equation, \cite{dong1, dong} for the surface quasi-geostrophic equation, \cite{Ly} for the Porous medium equation, and \cite{ang} for certain nonlinear analytic semi-flows.

The (analytic) Gevrey norm of $u$ in the Sobolev space $H^s$, which we refer to as the \emph{Sobolev-Gevrey norm} here,  is defined by 
$\|e^{\alpha A^{\frac{1}{2}}}u\|_{H^s}$, where $A$ is the Stokes operator. We recall that the norms $\|u\|_{H^s}$ and $\|A^{s/2}u\|_{L^2}$ are equivalent for mean-zero, divergence-free vector fields \cite{cf}. 
In case $\|e^{\alpha A^{\frac{1}{2}}}u\|_{H^s} < \infty $, then $u$ is 
space-analytic and the uniform space analyticity radius of $u$ is bounded below by $\alpha$. We provide below a brief summary of, and comments on, our results.
\begin{enumerate}
\item Assume that the initial data $\|e^{\beta_0 A^{\frac{1}{2}}}u_0\|_{H^s} < \infty$
with $\beta_0 \ge 0$; $\beta_0=0$ corresponds to $u_0 \in H^s$. In this case, $\sup_{t \in [0,T]} 
\|e^{(\beta_0 +\beta t)A^{\frac{1}{2}}}u\|_{H^s} < \infty $  with $0\leq\beta \leq \frac{1}{2}$ for 
$T \sim \frac{1}{\|e^{\beta_0 A^{\frac{1}{2}}}u\|_{H^s}^{\frac{4}{2s-1}}}, \frac{1}{2} < s < \frac{3}{2}$ and $T \sim \frac{1}{\|e^{\beta_0 A^{\frac{1}{2}}}u\|_{H^s}^{2}}, s > \frac{3}{2}$ (see Theorem \ref{mainthem213}). The quantity 
$\beta t$ captures the gain in  analyticity due to the dissipation.
If we set $\beta=0$, then this gives a persistence time in the 
Gevrey class corresponding to $\beta_0$.
Note that the time of persistence of the solution in the Gevrey class in this result
coincides with the optimal time of existence \eqref{optimaltime} in the range $\frac{1}{2} < s < \frac{3}{2}$ but is far from optimal in the range
$\frac{3}{2} < s < \frac{5}{2}$ and is also smaller than the best known  existence time in Sobolev classes in case $s>\frac{5}{2}$ obtained in
\cite{robinson2012lower}. The case $s=1$ is precisely the classical result of Foias and Temam \cite{foias1989gevrey}, while this result for  $ \frac{1}{2} \le s < \frac{3}{2}$ was obtained using semigroup methods in \cite{BiSw, biswas2010navier}. We provide a proof of this result using energy technique,  mainly for completeness, but also to illustrate that one can as a consequence, adapt a technique from 
\cite{dt} to obtain an improved estimate of the analyticity radius, which is possible by considering the evolution of Gevrey norm in $H^s$ with $s>1$; see  Theorem \ref{analyticthm} and 
 Remark \ref{rmk:improvedanalyticity}. This provides one of our motivations for considering the evolution of the Gevrey norm in higher-order Sobolev spaces.
 \item Subsequently, in Theorem \ref{mainthem1} and Theorem \ref{mainthem2}, 
 we  improve the existence times in the Gevrey classes given in Theorem \ref{mainthem213}
 for $s$ in the range $s \ge \frac{3}{2},\ s \neq \frac{5}{2}$. The existence time in Gevrey classes obtained in Theorem \ref{mainthem2} for $\frac{3}{2} \le s <\frac{5}{2}$ is optimal,
 	i.e. coincides with \eqref{optimaltime} while the existence time obtained in Theorem \ref{mainthem1} for $s > \frac{5}{2}$ coincides with the best known existence time in Sobolev classes $H^s$ obtained in \cite{robinson2012lower}. In order to prove these results, we first obtain 
  refined commutator estimates of the nonlinear term in Lemma \ref{lem1}, Lemma \ref{inequ3w} and Lemma \ref{inequ4w} which exploit their respective orthogonality properties. These estimates are new to the best of our knowledge and are motivated by those in \cite{BiSQG, BMSSQG} obtained for the 
  surface quasi-geostrophic equations. Using these estimates,
 for initial data in $H^s, s \ge \frac{3}{2}, s \neq \frac{5}{2}$,
 we show that 
 $\sup_{t \in [0,T]}\|e^{\beta tA^{\frac{1}{2}}}u\|_{H^s} < \infty $
 where $T$ is given as in \eqref{optimaltime} in the said range of $s$ (for large data).
 It is worth mentioning that the differential inequalities for the evolution of the Gevrey norms that one obtains in these cases are non-autonomous; estimates of existence times of these given in  Lemma \ref{lemmaonzeta} and Lemma \ref{lemmaonX}, though elementary, may be new as well. Moreover, in Corollary \ref{corollary2}, we give an alternate proof for the persistence in the Sobolev class $H^s$ for the entire range $\frac{1}{2} < s < \frac{5}{2}$, thus unifying the results in \cite{robinson2012lower} and \cite{cheskidov2016lower, cortissoz2014lower, mccormick2016lower} and showing that the case $\frac{3}{2}$ is not a borderline in our approach. Furthermore, unlike in \cite{cheskidov2016lower, mccormick2016lower},
 our method is elementary and avoids any harmonic analysis machinery such as paraproducts and Littlewood-Paley decomposition.
 \item The study of blow up in Gevrey classes is of importance for the NSE as it was shown in \cite{BiF} that in certain situations, an abrupt change in analyticity radius (which in turn is measured by a Gevrey norm) is indicative of a strong inverse energy cascade.
 The persistence time in Theorem \ref{mainthem213} (set $\beta=0, \beta_0>0$) readily yields a blow-up rate provided there exists a time $T_*$ at which the analyticity radius possibily decreases from $\beta_0$ (and consequently $\|e^{\beta_0 A^{\frac{1}{2}}}u(t)\|_{H^s}$ blows up as $t$ approcahes $T_*$).
 This is substantially different from the blow-up of a sub-analytic Gevrey norm studied in \cite{benameur2014exponential,benameur2016blow}. 
 As we show in Corollary \ref{corollary1}, a blow-up of a sub-analytic Gevrey norm can only occur if the solution itself loses regularity; whether or not a solution loses regularity is precisely one of the millennium problems.  In other words, for a globally regular solution, persistence in a sub-analytic Gevrey class is guaranteed for all times. However, this is not necessarily the case for analytic Gevrey norms. For instance, it is not difficult to show that for forced NSE, there exists a body-force, and an initial data $u_0$ in a Gevrey class, such that the solution exists globally in $H^s$ while a Gevrey norm of the form $\|e^{(\beta_0 +\beta t)A^{\frac{1}{2}}}u\|_{H^s} < \infty $ blows up in finite time. This is due to restriction posed on the solution by the analyticity radius of the driving force. To the best of our knowledge however, an example of such a phenomenon in the unforced case is  unknown. Therefore it is of interest to determine the blow-up rate in Gevrey classes even for solutions that are globally regular. Although our Theorem \ref{mainthem213} provides a blow-up rate, this  may not be optimal for $s > \frac{3}{2}$. At the very least, the blow-up rate provided in \eqref{norm13} does not correspond to the best known rate in Sobolev classes e.g. in \cite{robinson2012lower}. We leave it as an 
 \emph{open problem} to determine whether these rates can be matched. Although we obtain existence time results for Gevrey classes that matches the existence times in 
 \cite{robinson2012lower, cheskidov2016lower, mccormick2016lower} in Theorem \ref{mainthem1} and Theorem \ref{mainthem2}, they are for time-varying Gevrey classes defined by $\|e^{(\beta t)A^{\frac{1}{2}}}u\|_{H^s}$, i.e. $\beta_0=0$, and therefore $u_0 \in H^s$. A similar result on existence time for $\beta_0>0$ will yield an improvement of the blow-up rate  in Gevrey classes. This is an open problem as well.
			
\end{enumerate}

\comments{

%The domain $\Omega$ can be either the entire space $\mathbb{R}^3$ or the periodic domain $\mathbb{T}^3=[0, L]^3$. 

%\frac{\partial}{\partial t}u+(u\cdot \nabla)u=\nu \Delta u-\nabla P, \hspace{.2 in} \nabla\cdot u=0,
%\end{align}
This type of problems was first studied by Leray in 1934 \cite{leray1934mouvement}. He showed that if a smooth solution of \eqref{nse} in $(0, T)\times\mathbb{R}^3$ blows up at time $T$, then the following lower bond on the $H^1$ norm of the solution must hold
\begin{align*}
\|u(t)\|_{H^1 (\mathbb{R}^3)}\geq \frac{c_1}{(T-t)^{1/4}}.
\end{align*}
%In the same paper, Leray also gave (without proof) the following lower bound in the Lebesgue spaces for $3<p<\infty$
%\begin{align}
%\label{lpnorm}
%\|u(t)\|_{L^p (\mathbb{R}^3)}\geq \frac{c}{(T-t)^{\frac{p-3}{2p}}}.
%\end{align}
%The above lower bound has since been proved by Giga \cite{giga1986solutions}, Robinson and Sadowski \cite{robinson2014local}, and Lorenz and Zingano \cite{lorenz2017properties}.
%Applying the Sobolev embedding $\dot{H}^s (\mathbb{R}^3) \subset L^{\frac{6}{3-2s}}(\mathbb{R}^3)$ to (\ref{lpnorm}) leads to
%\begin{align}
%\|u(t)\|_{\dot{H}^s (\mathbb{R}^3)}\geq \frac{c}{(T-t)^{\frac{2s-1}{4}}},\  \frac{1}{2}<s<\frac{3}{2}.
%\end{align}
%
%For the case $s>\frac{5}{2}$, Benameur \cite{benameur2010blow} showed that
%\begin{align}
%\label{hsbound}
%\|u(t)\|_{\dot{H}^s (\mathbb{R}^3)}\geq {c_s}\|u^0\|^{\frac{3-2s}{3}}_{L^2}(T-t)^{-\frac{s}{3}}.
%\end{align}

For all $s>\frac{1}{2}$, Robinson, Sadowski, and Silva \cite{robinson2012lower} proposed the ``optimal rate''
\begin{align}
	\label{optimal}
	\|u(t)\|_{\dot{H}^s (\mathbb{R}^3)}\geq \frac{c}{(T-t)^{\frac{2s-1}{4}}}.
\end{align}

Moreover, in the same paper \cite{robinson2012lower}, they proved
\begin{align} \label{Robinson}
\|u(t)\|_{\dot{H}^s (\mathbb{R}^3)}\geq 
\begin{cases}
\begin{matrix}
c(T-t)^{-(2s-1)/4},& s\in (1/2, 5/2),\ s\neq 3/2,\\ 
{c}\|u^0\|^{(5-2s)/5}_{L^2}(T-t)^{-\frac{2s}{5}},& s>5/2.
\end{matrix}
\end{cases}
\end{align}

When $s=\frac{3}{2}$, the optimal rate was obtained by Mccormick et al. \cite{mccormick2016lower}
\begin{align}
\label{cor}
\|u(t)\|_{\dot{H}^{3/2}(\mathbb{R}^3)}\geq \frac{c}{\sqrt{(T-t)}}.
\end{align}

In \cite{cheskidov2016lower}, Cheskidov and Zaya also obtained the optimal rates in the space ${\dot{H}^{3/2}}$.

When $s=\frac{5}{2}$, the blow-up estimate for the space $\dot{H}^{5/2}$ is still weak. In \cite{mccormick2016lower}, Mccormick et al. showed
\begin{align}
\label{cor1}
\limsup_{t\to T}(T-t)\|u(t)\|_{\dot{H}^{5/2}(\mathbb{R}^3)}\geq c,
\end{align}
and a strong blowup estimate in the Besov space
\begin{align*}
\|u(t)\|_{\dot{B}_{2,1}^{5/2}(\mathbb{R}^3)}\geq \frac{c}{{(T-t)}}.
\end{align*}

%When $s$ takes the critical values $\frac{3}{2}$ and $\frac{5}{2}$, Robinson, Sadowski, and Silva \cite{robinson2012lower} obtained the optimal rate in the space $L^1$ (the space with the same scaling as $\dot{H}^{3/2}$)
%\begin{align*}
%\|u(t)\|_{L^1 (\mathbb{R}^3)}\geq \frac{c}{\sqrt{T-t}}.
%\end{align*}
%
%Moreover,  Cortissoz, Montero, and Pinilla \cite{cortissoz2014lower} proved the lower bounds
%\begin{align*}
%%\label{h352}
%&\|u(t)\|_{\dot{H}^{3/2} (\mathbb{T}^3)}\geq \frac{c}{\sqrt{(T-t)|\log (T-t)|}},\\
%& \|u(t)\|_{\dot{H}^{5/2} (\mathbb{R}^3 \text{or} (\mathbb{T}^3))}\geq \frac{c}{{(T-t)|\log (T-t)|}}.
%\end{align*}
%Their results have optimal rates (with logarithmic corrections) in $\dot{H}^{3/2}$ and $\dot{H}^{5/2}$.
%Mccormick et al. \cite{mccormick2016lower} were able to remove the logarithmic corrections, improving the results to 
%\begin{align}
%\label{cor}
%\|u(t)\|_{\dot{H}^{3/2}(\mathbb{R}^3)}\geq \frac{c}{\sqrt{(T-t)}},\  \limsup_{t\to T^{\ast}}\|u(t)\|_{\dot{H}^{5/2}(\mathbb{R}^3)}\geq c.
%\end{align}
%
%In \cite{cheskidov2016lower}, Cheskidov and Zaya also obtained the optimal rates in the space ${\dot{H}^{3/2}}$.
%From (\ref{cor}), the optimal rate in $\dot{H}^{3/2}$ is obtained. However, the blow-up estimate for the space $\dot{H}^{5/2}$ is still weak. In the same paper \cite{mccormick2016lower}, the authors proved a strong blowup estimate in the Besov space
%\begin{align*}
%\|u(t)\|_{\dot{B}_{2,1}^{5/2}(\mathbb{R}^3)}\geq \frac{c}{{(T-t)}}.
%\end{align*}

In this paper, we are interested in exploring the blow-up criteria for solutions of the NSE in a higher order regularity space, specifically, in a Gevrey class. 
The use of Gevrey norms was pioneered by Foias and Temam in \cite{foias1989gevrey}, they showed that the solutions of the NSE are analytic in time with values in a Gevrey class of functions for a small interval of time.
In \cite{benameur2014exponential}, Benameur obtained the following blow-up results for solutions in the subanalytic Sobolev-Gevrey spaces $H^{s}_{a, \sigma}$, with $s>\frac{3}{2}$ and $u^0\in ({H^{s}_{a, \sigma}(\mathbb{R}^3)})^3$
\begin{align}
\label{J-1}
\|u(t)\|_{H^{s}_{a, \sigma}}\geq C_1 (T^{\ast}-t)^{-s/3}\exp\left(aC_2 (T^{\ast}-t)^{-\frac{1}{3\sigma}}\right),
\end{align}
where $T^{\ast}<\infty$ is the minimum blow-up time and $\displaystyle \|f\|_{H^{s}_{a, \sigma}}:=\|e^{aA^{1/\sigma}}f\|_{H^{s}}$. In \cite{benameur2016blow}, Benameur and Jlali improved (\ref{J-1}), allowing less regularity on the intial condition.

Rather than studying the Sobolev-Gevrey spaces as in \cite{benameur2014exponential, benameur2016blow}, we mainly consider the analytic Gevrey spaces, $Gv(s, \beta t)$, consisting of smooth functions $\psi$, which have finite norm,
$$\|\psi\|_{Gv(s, \beta t)}:=\|A^{\frac{s}{2}}e^{\beta tA^{\frac{1}{2}}}\psi\|_{L^2} < \infty.$$
These spaces have the property that if $\psi \in Gv(s, \beta t)$, then $\psi$ is analytic, with a radius of analyticity bounded below by $\beta t$. We will consider solutions $u$ of \eqref{nse} with the property that $\|u(t)\|_{Gv(s, \beta t)} < \infty$ for all $t$ in some interval $[0,T]$, and hence, have a spatial radius of analyticity bounded below by a linearly increasing function of time.

In this paper, we study the local existence times and lower bounds on putative blow-up solutions of the incompressible 3D Navier--Stokes equations in Gevrey spaces: $Gv_{s,\beta t}$ when $\frac{1}{2}< s<\frac{5}{2}$ and $s>\frac{5}{2}$. When $s>\frac{5}{2}$, we obtain results by deriving bounds on the nonlinear term. After obtaining a commutator estimate of the nonlinear term, we study a differential inequality on the Gevrey norm using functional analysis and a nonlinear Gronwall inequality. A lower bound on the minimum blow-up time followed by analyzing the inequality. Similar results for the Sobolev spaces with $s>\frac{5}{2}$ are also presented. For the case $s>\frac{5}{2}$, compared with the results in \cite{robinson2012lower}, our main Theorem \ref{mainthem1} looks like a weaker conclusion with a weaker assumption. However, our proof can be tailored into an alternative way to obtain the same results in Sobolev spaces as in \cite{robinson2012lower}. Also, our results improve the results of \cite{robinson2012lower} since we can derive the results in \cite{robinson2012lower} with weaker assumptions, as demonstrated in Corollary \ref{corollary1}. Moreover, an optimal rate has been obtained for the case $\frac{1}{2}<s<\frac{3}{2}$.

For the case $\frac{3}{2}\leq s<\frac{5}{2}$, we study the vorticity formulation of the NSE. Doing so, the analysis on the nonlinear term becomes more complicated than the $s>\frac{5}{2}$ case. Nontheless, as before we obtain a differential inequality, this time on the Gevrey norm of the vorticity. We study the differential inequality following a similar procedure as in the case $s>\frac{5}{2}$. Throughout our proofs, various interpolation techniques and Fourier expansions are used. In Corollary \ref{corollary3}, we have shown that these are the same rates as in \cite{robinson2012lower} when phrased in terms of Sobolev spaces. An improvement of the results can be found in Corollary \ref{corollary2}.

Moreover, notice that, in \cite{cheskidov2016lower, cortissoz2014lower, mccormick2016lower, robinson2012lower}, the border line cases $s=\frac{3}{2}$ and $s=\frac{5}{2}$ were always treated separately by using techniques like  Littlewood-Paley decomposition or Besov spaces. In this paper, by studying the vorticity, we eliminate the $s=\frac{3}{2}$ border case. 
}
\section{Main results}\label{sec:results}
Before describing our main results, we first establish some notation, concepts, and settings. Using the notation $\displaystyle \kappa_0=\frac{2\pi}{L}$, define the dimensionless length, time, velocity, and pressure variables
\begin{align*}
\tilde{x}=\kappa_0 x, \tilde{t}=\nu \kappa_0^2 t, \tilde{u}=\frac{u}{\nu \kappa_0}, \tilde{p}=\frac{p}{\rho \nu^2 \kappa_0^2}.
\end{align*}
Using this transformation, the NSE transform to
\begin{subequations}
	\begin{align*}
	%\label{nse}
	&\frac{\partial {\tilde{u}}}{\partial \tilde{t}}-\tilde{\Delta} {\tilde{u}}+\left({\tilde{u}} \cdot \tilde{\nabla}\right) {\tilde{u}}+\tilde{\nabla} \tilde{p}=0,
	\\ &\tilde{\nabla} \cdot {\tilde{u}}=0,
	\\ &\tilde{u}(x, 0)=\tilde{u}^0 (x).
	%	\\ &\int_{\Omega} u dx=0, \int_{\Omega} F dx=0,
	\end{align*}
\end{subequations}
$\tilde{\Delta}$ and $\tilde{\nabla}$ denote the gradient and Laplacian operators with respect to the primed variables.
Henceforth, for simplicity, we assume that $\nu=1$, $L=2\pi$, $\rho=1$, and $\kappa_0=\frac{2\pi}{L}=1$. We have the dimensionless version of the NSE as
\begin{subequations}\label{nse}
	\begin{align}
	%\label{nse}
	&\frac{\partial {{u}}}{\partial {t}}-{\Delta} {{u}}+\left({{u}} \cdot {\nabla}\right) {{u}}+{\nabla} p=0,\label{nse1}
	\\ &{\nabla} \cdot {{u}}=0,\label{nse2}
	\\ &{u}(x, 0)={u}^0 (x),\label{nse3}
	%	\\ &\int_{\Omega} u dx=0, \int_{\Omega} F dx=0,
	\end{align}
\end{subequations}
after dropping the tildes. 

Moreover, we will focus on $\Omega=[0, 2\pi]^3$, employ the Galilean invariance of the NSE, take $u$ to be mean free, i.e.,
$\displaystyle \int_{\Omega}u=0.$
%It is well known that solutions in $\mathbb{R}^3$ of (\ref{nse}) have the following scaling property: if $u$ is a solution with initial data $u^0$, then $(x, t) \mapsto \lambda u(\lambda x, {\lambda}^2 t)$ is a solution of (\ref{nse}) with initial data, $u^0_\lambda$, defined by $u^0_{\lambda} (x)=\lambda u^0 (\lambda x)$. A space $S$ with norm $\|\cdot\|_S$ scales with exponent $\alpha$ if $\|u^0_\lambda\|_{S}=\lambda^\alpha \|u^0\|_{S}$ for all $u^0\in S$. If two spaces have the same scale exponent, then we say they have the same scaling. In particular, a space with $\alpha=0$ is called scaling-invariant. 

In this paper, we are interested in investigating the existence times of strong solutions of the three-dimensional Navier-Stokes equations in time-varying analytic Gevrey classes based on Sobolev spaces $H^s, s> \frac{1}{2}$.  The results vary as the value of $s$ changes. 

%\begin{align}
%\label{nse}
\subsection{Functional analytic framework}
With $\Omega=[0, 2\pi]^3$, we denote by $\dot{L}^2(\Omega)$ the Hilbert space of all $L-$periodic functions from $\mathbb{R}^3$ to $\mathbb{R}^3$ that are square integrable on $\Omega$ with respect to the Lebesgue measure and mean free. The scalar product is taken to be the usual $L^2(\Omega)$ inner product
\begin{align*}
	(u,v)=\int_{\Omega} u(x)\cdot v(x) dx,
\end{align*}
and we denote
\begin{align*}
	\|u\|_{L^2}=(u,u)^{1/2}.
\end{align*}

The real separable Hilbert space $\mathit{H}$ is formed by the set of all $\mathbb{R}^3$-valued functions $u(x), x\in \mathbb{R}^3$, which has the Fourier expansion
\begin{align*}
u(x)=\sum_{k\in \mathbb{Z}^3\setminus \left \{(0,0,0)\right \}}\hat{u}(k)e^{ik\cdot x} \quad \mbox{(with}\, \hat{u}(0)=0\,\mbox{),}
\end{align*}\\
where the Fourier coefficients $\hat{u}(k)\in \mathbb{C}^3$, for all $k\in \mathbb{Z}^3\setminus \left \{(0,0,0)\right \}$, satisfy
\bes
\hat{u}_k=  \overline{\hat{u}_{-k}},\
k \cdot \hat{u}(k)=0,\ 
\mbox{for all}\  k\in \mathbb{Z}^3\setminus \left \{(0,0,0)\right \}
\ \mbox{and}\  \|u\|_{L^2}^2=
\sum_{k\in \mathbb{Z}^3\setminus \left \{(0,0,0)\right \}}|\hat{u}(k)|^2 < \infty.
\ees
For $s\geq 0$, the space $	\dot{H}^s (\Omega)$ is defined by
\begin{align*}
	\dot{H}^s (\Omega)=
	\left \{u\in H: u=\sum_{k\in \mathbb{Z}^3\setminus \left \{(0,0,0)\right \}}\hat{u}(k)e^{ik\cdot x},\  \|u\|_{H^s(\Om)}^2=\sum |k|^{2s} |\hat{u}_k|^2<\infty
	\right \}.
\end{align*}
\comments{
The norm in $\dot{H}^s (\Omega)$ is defined as
\begin{align*}
	\|u\|_{\dot{H}^s(\Omega)}=\left (\sum_{k\in \mathbb{Z}^3\setminus \left \{(0,0,0)\right \}} |k|^{2s}|\hat{u}_k|^2\right )^{1/2}.
\end{align*}
}
For simplicity, we denote  $\|\cdot\|_{\dot{H}^s(\Omega)}$ as 
$\|\cdot\|_{s}$.
For $s< 0$, the space $	\dot{H}^s (\Omega)$ is defined to be the dual of $\dot{H}^{|s|} (\Omega)$.
The $l^1-$type norm of the Fourier coefficients is given by
\begin{align*}
	\|u\|_{F^s(\Omega)}=\sum_{k\in \mathbb{Z}^3\setminus \left \{(0,0,0)\right \}} |k|^{s}|\hat{{u}}_k|.
\end{align*}
We write $\|u\|_{F}$ for $\|u\|_{F^0}$. It is easy to see that $F^s(\Omega)$ form an algebra under multiplication and $F^0(\Omega)$ is referred to as the Wiener algebra \cite{BJMT}.

\subsubsection{Gevrey class of functions} 
We say that a function $u \in C^\infty(\Omega)$ is in Gevrey class
$Gev(\alpha;\theta)$ if 
\be  \label{def:gevrey}
|\partial^\mbf{m} u(x)|\le 
M \left(\frac{\mbf m!}{\alpha^{|\mbf m|}}\right)^\theta\ \forall\  x \in \Omega,
\ee
where $\mbf m=(m_1,\cdots,m_n) \in \N^n$ is a multi-index,
$\mbf m !=m_1!\cdots m_n!$ and $|\mbf m |=\sum_{i=1}^nm_i$. The analytic Gevrey class corresponds to $\theta=1$, in which case, the function $u$ is real analytic with \emph{uniform analyticity radius} $\alpha$ for all 
$x \in \Om$. 
In case $0<\theta <1$,  the functions are called sub-anlytic. 
For a function $u \in H$, its Gevrey norm  is defined by
\begin{align*}	
\|u\|_{s, \alpha;\theta}=\|A^{\frac{s}{2}}e^{\alpha A^{\frac{\theta}{2}}}u\|_{L^2}=\|e^{\alpha A^{\frac{\theta}{2}}}u\|_s=\left (\sum_{k\in \mathbb{Z}^3\setminus \left \{(0,0,0)\right \}} |k|^{2s} e^{2\alpha|k|^\theta}|\hat{u}_k|^2\right )^{1/2},
\end{align*}
where $\alpha>0$. The connection between Gevrey class and Gevrey norm is
given by the fact that \eqref{def:gevrey} holds for all $x \in \Om$ if and only if $\|u\|_{s,\alpha;\theta} < \infty$ \cite{oliver2000remark,optimaltiti}. In case $\theta=1$, this is equivalent to the fact that $u$ is real analytic with uniform radius of real analyticity $\alpha$. We will denote  
\[
Gv(s,\alpha;\theta)=\left\{u \in H: \|u\|_{s,\alpha;\theta} < \infty \right\},
\]
and in case $\theta=1$, for simplicity, we will write $Gv(s,\alpha)$ instead of $Gv(s,\alpha;1)$ and we will denote $\|u\|_{s, \alpha;1}$ as 
$\|u\|_{s, \alpha}.$ Clearly, 
\[
Gv(s,\alpha)\subsetneq Gv(s,\alpha;\theta)
\subsetneq  \dot{H}^m(\Omega)\ \mbox{for all}\  0<\theta<1, s \in \R, 
m \in \R_+.
\]
If $u \in Gv(s,\alpha)$, then clearly
\[
|\hat{u}(k)| \le e^{-\alpha |k|}\|u\|_{s,\alpha},
\]
and therefore, the uniform analyticity radius $\alpha$ establishes a length scale below which the Fourier power spectrum decays exponentially which in turn relates it to the Kolmogorov decay length scale in turbulence theory \cite{BJMT, dt}.

The \emph{maximal analyticity radius} for a function $u \in H^s$ is defined by 
\[
\lambda_{max}(u)= \sup \{ \alpha \ge 0: \|u\|_{\alpha,s} < \infty\}.
\]
One can check easily that $\lambda_{max}(u)$ is independent of $s$.

\comments{
The Gevrey norm we consider here defines the analytic Gevrey class. We also consider the subanalytic Gevrey classes: $\displaystyle \left \{ u \:|\: \|e^{rA^{\frac{\theta}{2}}}u\|_s < \infty\right \}$, where $0<\theta<1$ and $r>0$.
}

\subsection{The Functional differential equation}
Let $\Pi$ be the orthogonal projection from $L^2$ onto the subset of $L^2$ consisting of those functions whose weak derivatives are divergence-free in the $L^2$ sense. $A$ is the Stokes operator, defined as
\begin{align}
\label{defA}
A=-\Pi \Delta.
\end{align}
$B$ is the bilinear form defined by 
\begin{align}
\label{defB}
B(u,u)=\Pi \left [({u} \cdot \nabla) {u}\right ].
\end{align}

Then, the functional form of the NSE can be written as
\begin{align}
\label{functional_form}
\frac{du}{dt}+\mathit{A}u+\mathit{B}(u,u)=0.
\end{align}
\subsection{Main results}
We will now present our main results. Here, we denote by $c$ all the dimensionless constants which are independent of $s$, while all the dimensionless constants which depend on $s$ are denoted by $c_s$. 
\begin{theorem}
	\label{mainthem213}
	Let $u$ be a strong solution of \eqref{nse} with initial condition $u^0\in Gv(s, \beta_0)(\Omega)$, for some $s>\frac{1}{2}$, $\beta_0\geq 0$, and $0\leq\beta \leq \frac{1}{2}$. 
	If $\|u^0\|_{s, \beta_0}\leq c_s$, then $\sup_{t\in[0,\infty)} \|u\|_{s, \beta_0+\beta t}<\infty$.
	
	If $\|u^0\|_{s, \beta_0}> c_s$, define
	\begin{align*}
	T^{\ast}=\sup \left\{T>0 \:|\: \sup_{t\in[0,T]}\|e^{(\beta_0+\beta t) A^{\frac{1}{2}}}u(t)\|_{s} < \infty \right\}.
	\end{align*}

We have
	\begin{align}\label{t13}
T^{\ast} \gtrsim \ 
\left\{
\begin{array}{l}
	\frac{1}{\|u^0\|^{\frac{4}{2s-1}}_{s, \beta_0}}, \quad
	\frac{1}{2} < s < \frac{3}{2}\\
	\frac{1}{\|u^0\|^{2}_{s, \beta_0}}, \quad s > \frac{3}{2}.
\end{array}
\right.
\end{align}

Moreover, 
	if $T^{\ast} < \infty$, $\|e^{(\beta_0+\beta t)A^{\frac{1}{2}}}u(t)\|_{s}$ will blow-up at the following rate
		\begin{align}\label{norm13}
			\|e^{(\beta_0+\beta t)A^{\frac{1}{2}}}u(t)\|_{{s}} \gtrsim \ 
		\left\{
		\begin{array}{l}
			\frac{1}{ (T^{\ast}-t)^{\frac{2{s}-1}{4}}}, \quad
			\frac{1}{2} < s < \frac{3}{2}\\
			\frac{1 }{ (T^{\ast}-t)^{\frac{1}{2}}}, \quad s > \frac{3}{2}.
		\end{array}
		\right.
	\end{align}
	%where $c_{s,\nu}$ is a finite, positive constant depending only on $s$ and $\nu$. 
\end{theorem}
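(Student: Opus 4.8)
The plan is to run the Foias--Temam energy method directly on the Sobolev--Gevrey norm. Set $\varphi(t)=\beta_0+\beta t$, let $v(t)=e^{\varphi(t)A^{1/2}}u(t)$ and $y(t)=\|u(t)\|_{s,\varphi(t)}^2=\|A^{s/2}v(t)\|_{L^2}^2$. Since $e^{\varphi A^{1/2}}$, $\Pi$ and the partial derivatives are commuting Fourier multipliers, $v$ is mean-free, divergence-free and solves $v_t=\beta A^{1/2}v-Av-e^{\varphi A^{1/2}}B(u,u)$. Working first with Galerkin approximations to justify the manipulation of unbounded operators, and then passing to the limit, I would pair this equation with $A^s v$ to get
\begin{equation*}
\tfrac12\,y'(t)+\|u\|_{s+1,\varphi}^2=\beta\,\|A^{(2s+1)/4}v\|_{L^2}^2-\bigl(e^{\varphi A^{1/2}}B(u,u),\,A^s v\bigr),
\end{equation*}
where $\|u\|_{s+1,\varphi}^2=\|A^{(s+1)/2}v\|_{L^2}^2$ is the viscous term. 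Because the spectrum of $A$ on mean-free fields lies in $[1,\infty)$ one has $\|A^{(2s+1)/4}v\|_{L^2}^2\le\|u\|_{s+1,\varphi}^2$, so the ``analyticity gain'' term is absorbed by the dissipation precisely because $\beta\le\tfrac12$; this is exactly where the hypothesis $0\le\beta\le\tfrac12$ enters. Thus $y'+\|u\|_{s+1,\varphi}^2\le 2|N|$ with $N:=\bigl(e^{\varphi A^{1/2}}B(u,u),A^s v\bigr)$.

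The heart of the argument --- and the step I expect to be the main obstacle --- is estimating $N$. Writing $B(u,u)=\Pi\nabla\cdot(u\otimes u)$ one has $|N|\le\|e^{\varphi A^{1/2}}B(u,u)\|_{\dot{H}^{s-1}}\,\|u\|_{s+1,\varphi}\le c\,\|e^{\varphi A^{1/2}}(u\otimes u)\|_{\dot{H}^{s}}\,\|u\|_{s+1,\varphi}$. The reduction to a Sobolev-class estimate rests on the Fourier-space triangle inequality $e^{\varphi|k|}\le e^{\varphi|j|}e^{\varphi|k-j|}$: if $\tilde u$ denotes the field whose Fourier coefficients are $e^{\varphi|k|}|\hat u(k)|$, so that $\|\tilde u\|_{\dot{H}^\sigma}=\|u\|_{\sigma,\varphi}$ for every $\sigma$, then each Fourier coefficient of $e^{\varphi A^{1/2}}(u\otimes u)$ is dominated in modulus by the corresponding coefficient of $\tilde u\otimes\tilde u$; since $\dot{H}^s$ norms and Sobolev product estimates are monotone in the moduli of the Fourier coefficients, $\|e^{\varphi A^{1/2}}(u\otimes u)\|_{\dot{H}^s}\le\|\tilde u\otimes\tilde u\|_{\dot{H}^s}$ and the usual bilinear estimates apply to $\tilde u$. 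For $\tfrac12<s<\tfrac32$ I would use the three-dimensional product estimate $\|\tilde u\otimes\tilde u\|_{\dot{H}^s}\le c_s\|\tilde u\|_{\dot{H}^{s/2+3/4}}^2$ together with $\|\tilde u\|_{\dot{H}^{s/2+3/4}}\le\|\tilde u\|_{\dot{H}^s}^{1/4+s/2}\|\tilde u\|_{\dot{H}^{s+1}}^{3/4-s/2}$, giving $|N|\le c_s\|u\|_{s,\varphi}^{s+1/2}\|u\|_{s+1,\varphi}^{5/2-s}$; for $s>\tfrac32$, where $\dot{H}^s$ embeds into $L^\infty$ (and into the Wiener algebra $F^0$) for mean-free fields, $\|\tilde u\otimes\tilde u\|_{\dot{H}^s}\le c_s\|\tilde u\|_{\dot{H}^s}^2$ gives $|N|\le c_s\|u\|_{s,\varphi}^2\|u\|_{s+1,\varphi}$. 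In both cases Young's inequality absorbs $\tfrac12\|u\|_{s+1,\varphi}^2$ into the dissipation and leaves the scalar inequality
\begin{equation*}
y'(t)+\tfrac12\|u\|_{s+1,\varphi}^2\le c_s\,y(t)^{p_s},\qquad
p_s:=\begin{cases}\dfrac{2s+1}{2s-1},&\tfrac12<s<\tfrac32,\\[1ex] 2,&s>\tfrac32.\end{cases}
\end{equation*}
The genuinely delicate bookkeeping is that the interpolation exponents conspire to produce exactly this $p_s$ --- the value for which the resulting ODE reproduces the Robinson--Sadowski--Silva / Leray existence times. (Alternatively one could keep the cancellation $(B(v,v),v)=0$ as in Foias--Temam and estimate a commutator; the route above sidesteps it.)

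It remains to read off the three conclusions from the scalar inequality. Since $A\ge I$ on mean-free fields, $\|u\|_{s+1,\varphi}^2\ge y$, hence $y'+\tfrac12 y\le c_s y^{p_s}$. If $\|u^0\|_{s,\beta_0}$ is below an $s$-dependent threshold --- precisely, small enough that $c_s\|u^0\|_{s,\beta_0}^{2(p_s-1)}\le\tfrac14$, which is the meaning of the hypothesis $\|u^0\|_{s,\beta_0}\le c_s$ --- then $y'\le-\tfrac14 y$ as long as $y\le\|u^0\|_{s,\beta_0}^2$, so by continuity $y$ never exceeds $\|u^0\|_{s,\beta_0}^2$, decays, and $\sup_{t\ge0}y(t)<\infty$. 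For large data I would discard the dissipation and integrate $y'\le c_s y^{p_s}$ to get $y(t)^{-(p_s-1)}\ge y(0)^{-(p_s-1)}-(p_s-1)c_s t$; together with the standard continuation criterion for the Gevrey-class solution this shows $u$ persists in $Gv(s,\varphi(t))$ for $t<[(p_s-1)c_s]^{-1}\|u^0\|_{s,\beta_0}^{-2(p_s-1)}$, i.e. $T^{\ast}\gtrsim\|u^0\|_{s,\beta_0}^{-2(p_s-1)}$, which is \eqref{t13}. Finally, if $T^{\ast}<\infty$ then $\limsup_{t\to T^{\ast}}y(t)=\infty$; integrating $y'\le c_s y^{p_s}$ from $t$ to a sequence $t_n\uparrow T^{\ast}$ along which $y(t_n)\to\infty$ yields $y(t)^{-(p_s-1)}\le(p_s-1)c_s(T^{\ast}-t)$, i.e. $\|u(t)\|_{s,\varphi(t)}\gtrsim(T^{\ast}-t)^{-1/(2(p_s-1))}$, which is \eqref{norm13}. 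A short Galerkin (or contraction-mapping, using the same bilinear estimate) argument supplies the local existence in $Gv(s,\varphi(t))$ needed to make $T^{\ast}>0$ and to render the a priori bound applicable.
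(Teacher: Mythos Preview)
Your proof is correct and follows essentially the same Foias--Temam energy scheme as the paper: the same differential identity for $\|u\|_{s,\varphi(t)}^2$, the same absorption of the $\beta$-term via $\beta\le\tfrac12$ and Poincar\'e, and the same scalar ODE $y'\le c_s y^{p_s}$ leading to the existence time and blow-up rate. The only real difference is in how the nonlinear term is bounded: the paper routes through the Wiener algebra, proving $|(B(u,u),A^s e^{2\varphi A^{1/2}}u)|\le c_s\|e^{\varphi A^{1/2}}u\|_{F^0}\|u\|_{s,\varphi}\|u\|_{s+1,\varphi}$ and then interpolating $\|e^{\varphi A^{1/2}}u\|_{F^0}\le c\|u\|_{s,\varphi}^{s-1/2}\|u\|_{s+1,\varphi}^{3/2-s}$ (for $\tfrac12<s<\tfrac32$) or $\|e^{\varphi A^{1/2}}u\|_{F^0}\le c_s\|u\|_{s,\varphi}$ (for $s>\tfrac32$), whereas you use the Fourier majorant $\tilde u$ together with the Sobolev product rule $\dot H^{s_1}\cdot\dot H^{s_2}\hookrightarrow\dot H^{s_1+s_2-3/2}$ --- both yield exactly the same bound $|N|\le c_s\|u\|_{s,\varphi}^{s+1/2}\|u\|_{s+1,\varphi}^{5/2-s}$ (resp.\ $c_s\|u\|_{s,\varphi}^2\|u\|_{s+1,\varphi}$), so the distinction is purely cosmetic.
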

Proceeding as in \cite{BiF, dt}, we can optimize over the choice of $\beta$ to obtain a better lower estimate of the analyticity radius.
\begin{theorem}
	\label{analyticthm}
	Let $u$ be a strong solution of \eqref{nse} with initial condition $u^0\in Gv(s, \beta_0)(\Omega)$, for some $\frac{1}{2}<s<\frac{3}{2}$ and $\beta_0,\ \beta\geq 0$.
	When $t\in [0, t^{\ast})$
    \begin{align*}
    	\|u\|_{s, \beta_0+\beta t} \leq \frac{e^{\frac{\beta^2}{2}t}\|u(0)\|_{s, \beta_0}}{\left(1-\frac{2c_s}{\beta^2}\|u(0)\|_{s, \beta_0}^{\frac{4}{2s-1}}\left(e^{\frac{2\beta^2}{2s-1}t}-1\right)\right)^{\frac{2s-1}{4}}},
    \end{align*}
    where
    \begin{align*}
    t^{\ast}=\frac{2s-1}{2\beta^2}\log\left(1+\frac{\beta^2}{2c_s \|u(0)\|_{s, \beta_0}^{\frac{4}{2s-1}}}\right).
    \end{align*}   
    Moreover, for the optimal choice of  $\beta=\sqrt{2c_s}\|u(0)\|_{s, \beta_0}^{\frac{2}{2s-1}}\varsigma,$ with $\varsigma$ being the positive solution of 
    $-\frac{1}{2\varsigma^2}\log(1+\varsigma^2)+\frac{1}{1+\varsigma^2}=0$,
    a lower estimate of the analyticity radius is given by
    $$\lambda_{max}(u(t^*))\ge \beta_0+c_s\frac{1}{\|u(0)\|_{s, \beta_0}^{\frac{2}{2s-1}}}.$$
\end{theorem}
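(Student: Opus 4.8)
The plan is to derive a differential inequality for $y(t) := \|u(t)\|_{s,\beta_0+\beta t}$ (or rather for $y^2$), isolate the effect of the time-dependent exponential weight, and then solve the resulting Bernoulli-type ODE explicitly. First I would differentiate $\tfrac12 \tfrac{d}{dt}\|u\|_{s,\beta_0+\beta t}^2$, picking up three contributions: the dissipation term $-\|A^{(s+1)/2}e^{(\beta_0+\beta t)A^{1/2}}u\|_{L^2}^2$ coming from the Stokes operator, the ``gain of analyticity'' term $\beta\,(A^{s/2}e^{(\beta_0+\beta t)A^{1/2}}u, A^{(s+1)/2}e^{(\beta_0+\beta t)A^{1/2}}u)$ coming from differentiating the weight $e^{(\beta_0+\beta t)A^{1/2}}$, and the nonlinear term. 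The key algebraic point (exactly as in Foias--Temam and in \cite{dt}) is that the first two combine favorably: by Young's inequality $\beta(A^{s/2}v, A^{(s+1)/2}v) \le \tfrac{\beta^2}{2}\|A^{s/2}v\|_{L^2}^2 + \tfrac12\|A^{(s+1)/2}v\|_{L^2}^2$, so that $-\|A^{(s+1)/2}v\|^2 + \beta(A^{s/2}v,A^{(s+1)/2}v) \le \tfrac{\beta^2}{2}\|A^{s/2}v\|^2 - \tfrac12\|A^{(s+1)/2}v\|^2$, with $v = e^{(\beta_0+\beta t)A^{1/2}}u$. The surviving good term $-\tfrac12\|A^{(s+1)/2}v\|^2$ is then used to absorb the nonlinear contribution, whose estimate in the range $\tfrac12 < s < \tfrac32$ yields (by the same Gevrey commutator/interpolation bounds already invoked in the proof of Theorem \ref{mainthem213}) something of the form $c_s \|A^{s/2}v\|_{L^2}^{\frac{4s}{2s-1} \cdot \frac{1}{\,\cdot\,}} \cdots$; the upshot, after the absorption, is the scalar inequality
\begin{align*}
\frac{dy}{dt} \le \frac{\beta^2}{2}\,y + c_s\, y^{1 + \frac{4}{2s-1}},
\end{align*}
where $y(t) = \|u(t)\|_{s,\beta_0+\beta t}$. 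This is precisely the inequality whose solution, via the substitution $z = y^{-\frac{4}{2s-1}}$ (which linearizes it into $z' \ge -\tfrac{2\beta^2}{2s-1} z - \tfrac{4c_s}{2s-1}$), produces the claimed closed-form bound
\begin{align*}
y(t) \le \frac{e^{\frac{\beta^2}{2}t}\,y(0)}{\left(1 - \frac{2c_s}{\beta^2}\,y(0)^{\frac{4}{2s-1}}\bigl(e^{\frac{2\beta^2}{2s-1}t} - 1\bigr)\right)^{\frac{2s-1}{4}}},
\end{align*}
valid as long as the denominator is positive, i.e. for $t < t^\ast = \frac{2s-1}{2\beta^2}\log\bigl(1 + \frac{\beta^2}{2c_s y(0)^{4/(2s-1)}}\bigr)$.

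Second, for the analyticity radius estimate I would simply observe that $\|u(t^\ast)\|_{s,\beta_0+\beta t^\ast}$ may blow up as $t \to t^\ast$, but as long as it stays finite on $[0,t^\ast)$ the solution lies in $Gv(s,\beta_0 + \beta t)$ there, so $\lambda_{\max}(u(t)) \ge \beta_0 + \beta t$ for every $t < t^\ast$, and hence by continuity of the analyticity radius (or by a limiting argument using that the solution remains regular), $\lambda_{\max}(u(t^\ast)) \ge \beta_0 + \beta t^\ast$. Substituting the value of $t^\ast$ gives $\lambda_{\max}(u(t^\ast)) \ge \beta_0 + \frac{2s-1}{2\beta}\log\bigl(1 + \frac{\beta^2}{2c_s y(0)^{4/(2s-1)}}\bigr)$. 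Now I would optimize the right-hand side over $\beta > 0$: writing $\beta = \sqrt{2c_s}\,y(0)^{\frac{2}{2s-1}}\varsigma$, the gain term becomes $\sqrt{2c_s}\,y(0)^{-\frac{2}{2s-1}} \cdot \frac{2s-1}{2}\cdot\frac{\log(1+\varsigma^2)}{\varsigma}$, so maximizing reduces to maximizing $\varsigma \mapsto \frac{\log(1+\varsigma^2)}{\varsigma}$, whose critical point is exactly the positive root of $-\frac{1}{2\varsigma^2}\log(1+\varsigma^2) + \frac{1}{1+\varsigma^2} = 0$ stated in the theorem; at this $\varsigma$ the gain is $c_s' \, y(0)^{-\frac{2}{2s-1}}$, i.e. $\lambda_{\max}(u(t^\ast)) \ge \beta_0 + c_s \|u(0)\|_{s,\beta_0}^{-\frac{2}{2s-1}}$, absorbing the absolute constant into $c_s$.

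The main obstacle I anticipate is getting the nonlinear term estimate to come out with \emph{exactly} the exponent $1 + \frac{4}{2s-1}$ and no extra uncontrolled powers of $\|A^{(s+1)/2}v\|_{L^2}$ after the Young-type absorption; this is the same interpolation bookkeeping that underlies Theorem \ref{mainthem213}, and the restriction $\tfrac12 < s < \tfrac32$ is exactly what makes the relevant interpolation exponents admissible. A secondary, more technical point is justifying the formal differentiation of the Gevrey norm and the passage to the scalar ODE rigorously (e.g. via Galerkin approximation, or by working with $\beta_0 + \beta t$ replaced by a smooth nondecreasing cutoff and passing to the limit), and confirming that the continuation/maximality of $t^\ast$ is governed precisely by positivity of the ODE denominator; both are routine given the machinery already developed for Theorem \ref{mainthem213}, so I would state them briefly and refer back. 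Everything else — the substitution linearizing the Bernoulli inequality, the explicit integration, and the one-variable calculus optimization over $\varsigma$ — is elementary.
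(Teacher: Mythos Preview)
Your approach is essentially the paper's: derive the scalar inequality $y' \le \tfrac{\beta^2}{2}y + c_s y^{1+4/(2s-1)}$ from Lemma~\ref{inequ1a9} by splitting the $\beta$-term with Young and absorbing the nonlinear term (via the $F^0$ interpolation \eqref{estoff0}) into the remaining dissipation, then integrate the Bernoulli ODE and optimize $\beta$.

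The one substantive difference is in the analyticity-radius step. You push all the way to $t^\ast$ and invoke ``continuity of the analyticity radius'' to conclude $\lambda_{\max}(u(t^\ast)) \ge \beta_0 + \beta t^\ast$; this is not justified, since $\lambda_{\max}$ is in general only upper semicontinuous in time and your Gevrey-norm bound blows up precisely at $t^\ast$. The paper avoids this by simply evaluating at $t = t^\ast/2$, where the bound is manifestly finite, obtaining $\lambda \ge \beta_0 + \beta t^\ast/2$; after the same substitution $\beta = \sqrt{2c_s}\,\|u(0)\|_{s,\beta_0}^{2/(2s-1)}\varsigma$ this still maximizes at the stated root of $-\tfrac{1}{2\varsigma^2}\log(1+\varsigma^2)+\tfrac{1}{1+\varsigma^2}=0$ and gives the same final form with the extra factor absorbed into $c_s$. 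Replace your limiting argument with this evaluation at an interior time and the proof is complete.
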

\begin{remark}  \label{rmk:improvedanalyticity}
\emph{
Let $u_0=\sum_{N \le |k|\le cN}\hat{u}(k)e^{i k \cdot x}, 1 \le c,$ with $\sum_k |\hat{u}(k)|^2 =1$ and observe that $\|u\|_s \sim N^s$.
Then by Theorem \ref{analyticthm} the lower estimate of the (gain in) analyticity radius is given by ${\displaystyle \frac{c_s}{N^{\frac{2s}{2s-1}}}}$. The lower estimate in \cite{dt} in this case is  
${\displaystyle \frac{c_1}{N^2}}$, which
corresponds to $s=1$.  Clearly, this  lower estimate improves in our case if one considers $1<s<\frac{3}{2}$. However, one cannot take the limit as $s \nearrow \frac{3}{2}$ in this estimate as $c_s \ra 0$.
}
\end{remark}
\begin{corollary}\label{corollary1}
	Let $u$ be a strong solution of \eqref{nse} with initial condition $u^0\in Gv(s,r_0;\theta)$, for some $s>\frac{1}{2}, r_0>0,$ and 
	$0<\theta <1$. Let
	\begin{align*}
	T^{\ddagger}=\sup \left\{T>0 \:|\: \sup_{t\in[0,T]} \|e^{r_0 A^{\frac{\theta}{2}}}u(t)\|_{s} < \infty \right\}.
	\end{align*}
If $T^{\ddagger} < \infty$, then as $t\nearrow T^{\ddagger}$,
$\lim_{t \nearrow T^{\ddagger}}\|u(t)\|_{s'}= \infty $ for any 
$s' > \frac{1}{2}$.
 Moreover, $\|u(t)\|_{Gv(s,r_0;\theta)}$ blows up at an exponential rate at $T^\ddagger$.
	\comments{
	\begin{align} \label{coro1}
	\|u(t)\|_{s}\geq \frac{c_s\|u^0\|_{L^2}^{1-\frac{2s}{5}}}{(T^{\ddagger}-t)^{\frac{2s}{5}}}.
	\end{align}
	}
	%where ${c_{s} }$ is a finite, positive constant depending only on $s$. 
\end{corollary}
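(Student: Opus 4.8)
The plan is to exploit the basic inequality $Gv(s,r_0;\theta)\subsetneq \dot H^m(\Omega)$ for all $m\in\R_+$, $0<\theta<1$, together with the contrapositive of the local well-posedness theory and the fact that the sub-analytic Gevrey norm is controlled by any sufficiently high Sobolev norm on a fixed time interval. Concretely, suppose for contradiction that $T^\ddagger<\infty$ but that $\liminf_{t\nearrow T^\ddagger}\|u(t)\|_{s'}<\infty$ for some $s'>\tfrac12$; pick a sequence $t_n\nearrow T^\ddagger$ along which $\|u(t_n)\|_{s'}$ stays bounded. By the persistence time \eqref{robinsontime} (or \eqref{optimaltime}) applied with initial data $u(t_n)$, the strong solution starting at $t_n$ exists on a time interval of length bounded below uniformly in $n$; for $n$ large this interval overshoots $T^\ddagger$, so $u$ remains a strong solution past $T^\ddagger$, and in fact $\sup_{t\in[t_n,t_n+\delta]}\|u(t)\|_{m}<\infty$ for every $m$ by the instantaneous smoothing of the NSE (parabolic regularization). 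The remaining point is that this Sobolev bound on a short interval $[t_n+\delta/2,\,t_n+\delta]$ not containing the endpoint $t_n$ upgrades to a sub-analytic Gevrey bound: one runs the Foias--Temam / energy estimate of Theorem \ref{mainthem213} but with the weight $e^{\beta t A^{\theta/2}}$, $0<\theta<1$, where now the gain term $\beta|k|^\theta$ is dominated by the dissipation $|k|^2$ with room to spare (since $\theta<2$, in fact $\theta<1$), so that $\|e^{\beta t A^{\theta/2}}u(t)\|_s<\infty$ on that interval — and by the Sobolev smoothing one can absorb the fixed radius $r_0$ into $\beta t$ once $t$ is bounded away from the start, giving $\sup_{t}\|e^{r_0A^{\theta/2}}u(t)\|_s<\infty$ on a neighborhood of $T^\ddagger$, contradicting the definition of $T^\ddagger$.

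For the blow-up rate, I would combine the lower bound on $\|u(t)\|_{s'}$ just obtained (namely $\|u(t)\|_{s'}\to\infty$, in fact with the Sobolev blow-up rate from \cite{robinson2012lower}) with a direct interpolation between the Gevrey norm and the Sobolev norm. The elementary inequality, valid for each fixed $r>0$ and $\theta\in(0,1)$,
\begin{align*}
\|e^{r A^{\theta/2}}u\|_s \;\ge\; \sup_{k\neq 0} e^{r|k|^\theta}|\hat u(k)|\,|k|^s \;\ge\; e^{r N^\theta} N^{-(s'-s)}\,\|u\|_{s'}\big|_{\text{frequency }\sim N}
\end{align*}
is not quite what is wanted; instead I would use the convexity/Young-type bound $|k|^{2s}e^{2r|k|^\theta}\ge C_{r,\theta,\sigma}|k|^{2\sigma}$ for any $\sigma$, i.e. for every $\sigma>s'$ one has $\|u\|_{\sigma}\le c_{r,\theta,\sigma}\|e^{rA^{\theta/2}}u\|_s$, and then feed in the Sobolev blow-up rate $\|u(t)\|_\sigma\gtrsim (T^\ddagger-t)^{-(2\sigma-1)/4}$ valid for $\tfrac12<\sigma<\tfrac52$. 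Letting $\sigma\nearrow$ (or rather, letting $r\to0$ is not allowed here — $r=r_0$ is fixed — so instead one tracks the $\sigma$-dependence of $c_{r_0,\theta,\sigma}$, which grows like $\exp(c\,\sigma^{1/(1-\theta)})$ or so) and optimizing over $\sigma$ as $t\nearrow T^\ddagger$ yields a lower bound for $\|u(t)\|_{Gv(s,r_0;\theta)}$ that grows faster than any negative power of $(T^\ddagger-t)$; a clean statement is the exponential rate $\|u(t)\|_{Gv(s,r_0;\theta)}\gtrsim \exp\!\big(c\,(T^\ddagger-t)^{-\theta/(2(1-\theta))}\big)$ or similar, matching in spirit the Benameur-type bounds \eqref{J-1}. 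The precise exponent is obtained by balancing $|k|^{2s}e^{2r_0|k|^\theta}$ against the worst frequency contributing to the Sobolev blow-up.

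The main obstacle I anticipate is the first part — showing that finiteness of some Sobolev norm along a sequence forces persistence of the sub-analytic Gevrey norm across $T^\ddagger$. The subtlety is that the Gevrey radius one wants is the \emph{fixed} number $r_0$, not a small gain $\beta\delta$; one must therefore argue in two stages (first gain analyticity at some small rate by parabolic smoothing starting from a Sobolev bound at time $t_n$, then note the analyticity radius is a nondecreasing-enough function of time on this interval, or re-run the estimate with the full weight $e^{r_0A^{\theta/2}}$ treating $r_0|k|^\theta$ as a subcritical perturbation of $|k|^2$), and keep careful track that the time of validity does not shrink to zero as $n\to\infty$. Once the qualitative blow-up $\|u(t)\|_{s'}\to\infty$ is in hand, the quantitative rate is a routine, if slightly technical, interpolation. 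I would present the argument in the order: (i) reduce to showing Sobolev persistence fails, via \eqref{robinsontime}; (ii) parabolic smoothing from $H^{s'}$ into $\bigcap_m H^m$ on a uniform time interval; (iii) upgrade $\bigcap_m H^m$ to $Gv(s,r_0;\theta)$ using $\theta<1$; (iv) derive the contradiction with the definition of $T^\ddagger$; (v) interpolate to extract the exponential blow-up rate.
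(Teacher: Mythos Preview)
Your overall strategy for the first part is correct and matches the paper's: argue by contradiction, pick a sequence $t_n\nearrow T^\ddagger$ along which $\|u(t_n)\|_{s'}$ is bounded, use the local existence/persistence time to extend past $T^\ddagger$, and derive a contradiction with the definition of $T^\ddagger$. However, you are making step (iii) --- which you flag as the ``main obstacle'' --- much harder than it is, and in doing so you miss the key simplification.

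The paper does \emph{not} pass through $\bigcap_m H^m$ and then try to re-run a sub-analytic energy estimate with weight $e^{r_0 A^{\theta/2}}$. Instead it applies Theorem~\ref{mainthem213} directly (with $\beta_0=0$) to the initial data $u(t_n)\in H^{s'}$: this immediately places the solution in the \emph{analytic} Gevrey class $Gv(s',\beta t)$ on a uniform time interval $(0,T_M]$. In particular, for any fixed $\alpha_0=\beta\delta>0$ one has $\sup_{t\in[t_{j_0}+\delta,\,T^\ddagger)}\|u(t)\|_{s',\alpha_0}<\infty$. The crucial (and elementary) point you are missing is that for $0<\theta<1$ and \emph{any} $r_0,\alpha_0>0$,
\[
\|v\|_{s,r_0;\theta}\le C_{s,s',r_0,\alpha_0}\|v\|_{s',\alpha_0},
\]
simply because $|k|^{2s}e^{2r_0|k|^\theta}\le C\,|k|^{2s'}e^{2\alpha_0|k|}$ uniformly in $k$ (the exponent $|k|^\theta$ is sublinear, so $\alpha_0|k|$ eventually dominates $r_0|k|^\theta$ regardless of how large $r_0$ is or how small $\alpha_0$ is). Thus any positive \emph{analytic} radius, however small, controls the sub-analytic norm at the fixed radius $r_0$. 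Your worry about ``absorbing the fixed radius $r_0$ into $\beta t$'' and your two-stage plan are unnecessary; the embedding $Gv(s',\alpha_0;1)\hookrightarrow Gv(s,r_0;\theta)$ does all the work in one line. This is exactly inequality~\eqref{andom} in the paper.

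For the exponential blow-up rate, the paper does not carry out an optimization as you sketch; it simply observes that once $\|u(t)\|_{s'}\to\infty$ is established, the exponential lower bound on the sub-analytic Gevrey norm follows from the existing result of Benameur--Jlali \cite{benameur2016blow}. Your interpolation idea (bounding $\|u\|_\sigma\le c_\sigma\|e^{r_0A^{\theta/2}}u\|_s$ and optimizing over $\sigma$) is in the right spirit and would recover a bound of the Benameur type, but it is not needed here.
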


\begin{theorem}
	\label{mainthem1}
	Let $u$ be a strong solution of the Navier--Stokes equations \eqref{nse} with initial condition $u^0\in \dot{H}^{s}(\Omega)$, for some $s>\frac{5}{2}$. 
%	It is well known that $u$ will enter a Gevrey space immediately after $t=0$. 
	Let $0<\beta \leq \frac{1}{2}$, and define
	\begin{align*}
	T^{\ast}=\sup \big\{T>0 \:|\: \sup_{t\in[0,T]}\|e^{\beta t A^{\frac{1}{2}}}u(t)\|_{s} < \infty \big\}.
	\end{align*}
%	If $T^{\ast} < \infty$, \\
	(i) If $$\frac{\|u^0\|_s}{\|u^0\|_{L^2}}\geq c_s \beta^{-\frac{4s}{5}} \min \left\{1, \|u^0\|_{L^2}^{-\frac{2s}{5}}\right \},$$ then
	\begin{align*}
T^{\ast}>c_s \min \left\{1, \ \|u^0\|_{L^2}^{-1}\right \}\left(\frac{\|u^0\|_s}{\|u^0\|_{L^2}}\right)^{-\frac{5}{2s}}.
\end{align*}
	(ii) If $$\frac{\|u^0\|_s}{\|u^0\|_{L^2}}< c_s \beta^{-\frac{4s}{5}} \min \left\{1, \|u^0\|_{L^2}^{-\frac{2s}{5}}\right \},$$ then
	\begin{align*}
T^{\ast}>\min\left\{\tilde{Z}, \tilde{Z}^{2/5}\right\},
\end{align*}
where $\tilde{Z}=c_s \min \left\{1, \ \|u^0\|_{L^2}^{-1}\right \}\left(\frac{\|u^0\|_s}{\|u^0\|_{L^2}}\right)^{-\frac{5}{2s}}.$
%	then there exists a finite, positive constant $c_s$ such that
%	\begin{align*}
%	T^{\ast}>\frac{c_s \|u^0\|_{L^2}^{\frac5{2s}-1}}{\|u^0\|^{\frac{5}{2s}}_{s}}.
%	\end{align*}
%Moreover, as $t\nearrow T^{\ast}$, $\|e^{\beta tA^{\frac{1}{2}}}u(t)\|_{s}$ will blow-up at least in the following rate
%	\begin{align*}
%	\|e^{\beta tA^{\frac{1}{2}}}u(t)\|_{s}\geq \frac{c_s\|u^0\|_{L^2}^{1-\frac{2s}{5}}}{(T^{\ast}-t)^{\frac{2s}{5}}}.
%	\end{align*}
\end{theorem}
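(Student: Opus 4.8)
The plan is to propagate the analytic Gevrey norm of $u$ together with its (Gevrey-weighted) energy, and then to read off the existence time from an elementary, but non-autonomous, ODE comparison. Writing $v(t)=e^{\beta tA^{1/2}}u(t)$, one differentiates and uses \eqref{functional_form} to get
\[
\frac{dv}{dt}+Av-\beta A^{1/2}v+e^{\beta tA^{1/2}}B(u,u)=0 .
\]
Taking the inner product of this with $A^sv$, and separately with $v$, and controlling the indefinite term coming from $\beta A^{1/2}v$ by the interpolation $\|A^{\frac{2\sigma+1}{4}}v\|_{L^2}^2\le\|A^{\sigma/2}v\|_{L^2}\|A^{(\sigma+1)/2}v\|_{L^2}$ followed by Young's inequality (the hypothesis $\beta\le\frac12$ leaving a fixed fraction of the dissipation untouched), I obtain
\[
\frac{d}{dt}\|v\|_s^2+\|v\|_{s+1}^2\le\beta^2\|v\|_s^2+c\,\big|(e^{\beta tA^{1/2}}B(u,u),A^sv)\big|,
\]
\[
\frac{d}{dt}\|v\|_{L^2}^2+\|v\|_1^2\le\beta^2\|v\|_{L^2}^2+c\,\big|(e^{\beta tA^{1/2}}B(u,u),v)\big| .
\]
Since $\beta_0=0$ here these start, at $t=0$, from $\|u^0\|_s^2$ and $\|u^0\|_{L^2}^2$. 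The second inequality is the Gevrey analogue of the energy balance; keeping the low norm $\|v\|_{L^2}$ under control (rather than just bounding it by $\|v\|_s$) is exactly what will supply the extra powers of $\|u^0\|_{L^2}$ in the statement.

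Next I estimate the two nonlinear terms. Using $e^{\beta t|k|}\le e^{\beta t|k-j|}e^{\beta t|j|}$ in the Fourier series of $B$, each term is dominated by the corresponding trilinear expression in which $\hat u(k)$ is replaced everywhere by $e^{\beta t|k|}|\hat u(k)|$; combined with the refined orthogonality/commutator decomposition of the nonlinearity from Lemma \ref{lem1} (which exploits the divergence-free cancellation) and the Sobolev embeddings available for $s>\frac52$ (so that $\dot H^{s-1}$ and neighbouring spaces embed in $L^\infty$, and $\|v\|_1$, $\|v\|_{L^\infty}$ interpolate between $\|v\|_{L^2}$ and $\|v\|_s$), interpolation yields, for every $\delta>0$,
\[
\big|(e^{\beta tA^{1/2}}B(u,u),A^sv)\big|\le\delta\|v\|_{s+1}^2+c_s\,\|v\|_{L^2}^{p}\,\|v\|_s^{q},\qquad
\big|(e^{\beta tA^{1/2}}B(u,u),v)\big|\le\delta\|v\|_1^2+c_s\,\|v\|_{L^2}^{p'}\,\|v\|_s^{q'},
\]
with $c_s$ also depending on $\delta$ and the exponents fixed by the $\dot H^s/L^2$ scaling of the equation (so that the resulting estimate reproduces the homogeneity of \cite{robinson2012lower}). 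Choosing $\delta$ small and substituting, I arrive at a closed non-autonomous system for $X(t)=\|v(t)\|_{L^2}^2$ and $Y(t)=\|v(t)\|_s^2$, of precisely the type treated in Lemma \ref{lemmaonX}, with $X(0)=\|u^0\|_{L^2}^2$ and $Y(0)=\|u^0\|_s^2$.

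Finally I feed this system into Lemma \ref{lemmaonX} (and run the usual continuation argument on the maximal interval where the Gevrey norm is finite). The $X$-inequality keeps $X(t)$ comparable to $\|u^0\|_{L^2}^2$ on the relevant interval --- here the factor $\min\{1,\|u^0\|_{L^2}^{-1}\}$ and the alternative $\tilde Z^{2/5}$ both appear, from the scaling mismatch between the $\dot H^1$-dissipation and the $L^2$-norm; with $X$ so controlled, the $Y$-inequality becomes a Riccati-type bound $Y'\le\beta^2Y+c_s\|u^0\|_{L^2}^{p}Y^{q/2}$, which stays finite up to a time $\sim\big(\|u^0\|_{L^2}^{p}\,\|u^0\|_s^{q-2}\big)^{-1}$; unwinding the exponents turns this into $(\|u^0\|_s/\|u^0\|_{L^2})^{-5/(2s)}$ up to the stated constants, and the dichotomy (i)/(ii) is exactly the comparison between the nonlinear term and the linear gain term $\beta^2Y$ --- that is, the inequality in the hypotheses, which is why $\beta^{-4s/5}$ enters the threshold.

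The step I expect to be the main obstacle is the nonlinear estimate in the analytic Gevrey norm: one has to carry the analytic weight through the commutator in Lemma \ref{lem1} while still extracting a genuine factor of the energy $\|v\|_{L^2}$ (not merely $\|v\|_s$), since it is precisely this low-norm factor that upgrades the crude existence time $\sim\|u^0\|_s^{-2}$ of Theorem \ref{mainthem213} to the sharper $\sim\|u^0\|_s^{-5/(2s)}$ of \cite{robinson2012lower}. Getting those exponents sharp, and then extracting the precise thresholds, the $\min\{1,\|u^0\|_{L^2}^{-1}\}$ factor and the $\min\{\tilde Z,\tilde Z^{2/5}\}$ from the coupled non-autonomous ODE system, is where the real work lies.
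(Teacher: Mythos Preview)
Your overall architecture --- derive a differential inequality for the Gevrey norm, then compare with an ODE --- matches the paper, but the specific mechanism you propose for getting the low-norm factor $\|u^0\|_{L^2}$ into the estimate is not the one used, and as written it has a gap.

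You propose to track the pair $(X,Y)=(\|v\|_{L^2}^2,\|v\|_s^2)$ with $v=e^{\beta tA^{1/2}}u$, and to keep $X(t)$ comparable to $\|u^0\|_{L^2}^2$ via a separate Gevrey $L^2$ energy inequality. The difficulty is that the term $(e^{\beta tA^{1/2}}B(u,u),v)=(B(u,u),e^{2\beta tA^{1/2}}u)$ does \emph{not} vanish: the exact cancellation $(B(u,u),u)=0$ is destroyed by the Gevrey weight, and you give no estimate for this term that would keep $X$ bounded independently of $Y$. Without that, you cannot freeze $X\sim\|u^0\|_{L^2}^2$ and reduce to a scalar Riccati inequality for $Y$ as you describe. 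Your claim that ``the $X$-inequality keeps $X(t)$ comparable to $\|u^0\|_{L^2}^2$'' is exactly the unproven step.

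The paper sidesteps this entirely. It never writes a Gevrey $L^2$ inequality. Instead it uses only the \emph{classical} Leray energy $\|u(t)\|_{L^2}\le\|u^0\|_{L^2}$, and the elementary bound of Lemma~\ref{lem33},
\[
\|e^{\beta tA^{1/2}}u\|_{L^2}\le\sqrt{e}\,\|u\|_{L^2}+(2\beta t)^s\|u\|_{s,\beta t},
\]
to convert every occurrence of $\|v\|_{L^2}$ (which enters through the interpolation $\|v\|_{F^1}\le c\|v\|_{L^2}^{1-5/(2s)}\|v\|_s^{5/(2s)}$ of Lemma~\ref{lem3}) into $\|u\|_{L^2}$ plus a $(\beta t)^s$-weighted piece of $\|u\|_{s,\beta t}$. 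This is what produces the \emph{single} non-autonomous scalar inequality \eqref{newest4} with explicit powers $(\beta t)^{s-5/2}$, $(\beta t)^2$, $(\beta t)^{2s-3}$; the comparison ODE is then the one in Lemma~\ref{lemmaonzeta} (not Lemma~\ref{lemmaonX}, which is for the range $\frac32\le s<\frac52$). The dichotomy (i)/(ii), the factor $\min\{1,\|u^0\|_{L^2}^{-1}\}$, and the exponent $2/5$ all come from the intersection analysis in Lemma~\ref{lemmaonzeta} (whether the crossing time $t_\phi$ lies below or above $1$), not from a coupled $(X,Y)$ system.
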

%The following corollaries follow from Theorem \ref{mainthem1}.
%\begin{corollary}\label{corollary4}
%	Let $u$ be a strong solution of \eqref{nse} with initial condition $u^0\in \dot{H}^{s}(\Omega)$, for some $s>\frac{5}{2}$. Define 
%	\begin{align*}
%	T^{\ddagger}=\sup \big\{T>0 \:|\: \sup_{t\in[0,T]} \|u(t)\|_{s} < \infty \big\}.
%	\end{align*}
%	%	\begin{align*}
%	%		T^{\ddagger}=\min \left\{T>0\:|\: \lim_{t\to T} \|u(t)\|_{s}=\infty \right\}.
%	%	\end{align*}
%	If $T^{\ddagger} < \infty$, then as $t\nearrow T^{\ddagger}$
%	\begin{align}
%	\label{coro2-nn}
%	\|u(t)\|_{{s}}> \frac{\mathcal{O}\left(\|u^0\|_{L^2}^{1-\frac{2s}{5}}\right)}{(T^{\ddagger}-t)^{\frac{2s}{5}}},
%	\end{align}
%	%where ${c_{s} }$ is a finite, positive constant depending only on $s$. 
%\end{corollary}

\begin{theorem}
	\label{mainthem2}
	Let $u$ be a strong solution of \eqref{nse} with initial condition $u^0\in \dot{H}^{s}(\Omega)$, for some $\frac{3}{2}\leq s<\frac{5}{2}$. Let $0<\beta \leq \frac{1}{2}$, and define
	\begin{align*}
	T^{\ast}=\sup \left\{T>0 \:|\: \sup_{t\in[0,T]}\|e^{\beta t A^{\frac{1}{2}}}u(t)\|_{s} < \infty \right\}.
	\end{align*}
%	If $T^{\ast} < \infty$,\\
	(i) If $$\|{u}^0\|_{s}\geq \frac{c_{s} }{(\beta)^{\frac{2s-1}{2}}},$$
	then
	$$T^{\ast}>\frac{c_{{s}}}{\|u^0\|^{\frac{4}{2s-1}}_{{s}}}.$$
		(ii) If $$\|{u}^0\|_{s}< \frac{c_{s} }{(\beta)^{\frac{2s-1}{2}}},$$
	then
	$$T^{\ast}>\min\left\{{\cal N}, {\cal N}^{1/2}\right\},$$
	where $\displaystyle {\cal N}=\frac{c_{{s}}}{\|u^0\|^{\frac{4}{2s-1}}_{{s}}}.$
%Moreover, for $s\in[\frac{3}{2}, \frac{5}{2})$, as $t\nearrow T^{\ast}$, $\|e^{\beta tA^{\frac{1}{2}}}u(t)\|_{s}$ will blow-up at least in the following rate
%	\begin{align*}
%	\|e^{\beta tA^{\frac{1}{2}}}u(t)\|_{{s}}>\frac{c_{s} }{ (T^{\ast}-t)^{\frac{2{s}-1}{4}}}.
%	\end{align*}
\end{theorem}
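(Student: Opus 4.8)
The plan is to pass to the vorticity formulation and run a Gevrey energy estimate there, mirroring the proof of Theorem~\ref{mainthem213} but with the scaling-sharp commutator bounds of Lemma~\ref{lem1}, Lemma~\ref{inequ3w} and Lemma~\ref{inequ4w} in place of the crude ones, so as to land on a (non-autonomous) differential inequality whose existence/blow-up time is exactly \eqref{optimaltime}. Set $w=\nabla\times u$, which solves $\partial_t w+Aw+(u\cdot\nabla)w-(w\cdot\nabla)u=0$, and recall that $\|u\|_{s,\alpha}\sim\|w\|_{s-1,\alpha}$ for mean-free divergence-free fields via Biot--Savart; hence for $\frac32\le s<\frac52$ controlling $\|e^{\beta tA^{1/2}}u\|_s$ amounts to controlling $X(t):=\|w(t)\|_{s-1,\beta t}^2$ with $s-1\in[\frac12,\frac32)$. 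Applying $A^{(s-1)/2}e^{\beta tA^{1/2}}$ to the vorticity equation and pairing with $A^{(s-1)/2}e^{\beta tA^{1/2}}w$ produces
\[
\tfrac12 X'(t)=\beta\|A^{1/4}e^{\beta tA^{1/2}}w\|_{s-1}^2-\|w\|_{s,\beta t}^2-\big\langle e^{\beta tA^{1/2}}\big[(u\cdot\nabla)w-(w\cdot\nabla)u\big],A^{s-1}e^{\beta tA^{1/2}}w\big\rangle,
\]
and since $|k|^{2s-1}\le|k|^{2s}$ for $|k|\ge1$ the gain term is $\le\frac12\|w\|_{s,\beta t}^2$ whenever $\beta\le\frac12$, leaving $-\frac12\|w\|_{s,\beta t}^2$ available on the right.

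The heart of the argument is the estimate of the two trilinear terms. Peeling the Gevrey weight via $e^{\beta t|j+k|}\le e^{\beta t|j|}e^{\beta t|k|}$ reduces each to a weighted trilinear form in the Fourier majorant $v$ (the field with Fourier coefficients $|\hat w(k)|e^{\beta t|k|}$); Lemma~\ref{lem1}, Lemma~\ref{inequ3w} and Lemma~\ref{inequ4w} then furnish the crucial savings by exploiting the antisymmetry $\langle(u\cdot\nabla)w,w\rangle=0$ of the transport term and the Fourier structure of $\Pi$ and $\nabla\times$ in the stretching term (in the spirit of the SQG commutator bounds of \cite{BiSQG,BMSSQG}), so that no derivative is lost relative to \cite{robinson2012lower}. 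Combining these with Biot--Savart and interpolation between $\|w\|_{s-1,\beta t}$ and $\|w\|_{s,\beta t}$ — and with the $F^s$/Wiener-algebra norms to absorb the endpoint $s-1=\frac12$, which is precisely why $s=\frac32$ ceases to be a borderline once one works with $w$ — I expect the scaling-dictated bound
\[
\big|\big\langle\cdots\big\rangle\big|\le c_s\|w\|_{s-1,\beta t}^{s+\frac12}\|w\|_{s,\beta t}^{\frac52-s}+(\text{lower-order terms, explicit in }\beta\text{ and }t),
\]
valid exactly for $\frac32\le s<\frac52$; at $s=\frac52$ the factor $\|w\|_{s,\beta t}^{5/2-s}$ degenerates and the underlying bilinear estimate sits at the critical index $\dot H^{3/2}$, which is why that endpoint is excluded (and is open even in Sobolev classes). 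Young's inequality with conjugate exponents $\frac{4}{5-2s}$ and $\frac{4}{2s-1}$ absorbs $\|w\|_{s,\beta t}^{5/2-s}$ into the leftover dissipation and yields a non-autonomous inequality $X'\le c_sX^{1+\frac{2}{2s-1}}+(\text{explicit }\beta,t\text{-dependent remainder})$ of the form handled by Lemma~\ref{lemmaonX}.

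Finally, I would feed this into Lemma~\ref{lemmaonX}. In case~(i), $\|u^0\|_s\ge c_s\beta^{-(2s-1)/2}$ forces the $\beta$-dependent remainder to be dominated by $c_sX^{1+2/(2s-1)}$ on $[0,c_s\|u^0\|_s^{-4/(2s-1)}]$, so comparison with $y'=c_sy^{1+2/(2s-1)}$, $y(0)=\|u^0\|_s^2$, gives $T^\ast>c_s\|u^0\|_s^{-4/(2s-1)}$, i.e.\ \eqref{optimaltime}; in case~(ii) the remainder governs the balance over part of the evolution, producing the additional alternative $\mathcal N^{1/2}$ and hence $T^\ast>\min\{\mathcal N,\mathcal N^{1/2}\}$. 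The main obstacle is the second step: proving Lemma~\ref{lem1}, Lemma~\ref{inequ3w}, Lemma~\ref{inequ4w} so that the estimate is simultaneously scaling-sharp (in particular for the vortex-stretching term, which has no transport-type orthogonality with $w$) \emph{and} keeps the $\beta$-dependence of the lower-order terms explicit, since it is exactly this dependence that drives the dichotomy (i)/(ii); verifying that the resulting non-autonomous ODE genuinely reproduces the exponent $\frac{4}{2s-1}$ (and treating $s=\frac32$ uniformly) is the remaining delicate point.
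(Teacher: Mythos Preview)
Your proposal is essentially the paper's own proof: pass to vorticity, derive the Gevrey energy inequality (Lemma~\ref{nvorticity}), bound the two trilinear terms by Lemmas~\ref{inequ3w} and~\ref{inequ4w}, apply Young's inequality to obtain a non-autonomous ODE of the form in Lemma~\ref{lemmaonX}, and compare via Lemma~\ref{lemma:nonlinear_Gronwall}. Your main term $c_s\|w\|_{s-1,\beta t}^{s+1/2}\|w\|_{s,\beta t}^{5/2-s}$ is exactly the paper's $c_{\tilde s}\|\omega\|_{\tilde s,\beta t}^{\tilde s+3/2}\|\omega\|_{\tilde s+1,\beta t}^{3/2-\tilde s}$ with $\tilde s=s-1$, and the ``lower-order term'' is precisely $c_{\tilde s}\beta t\,\|\omega\|_{\tilde s,\beta t}^{\tilde s+1/2}\|\omega\|_{\tilde s+1,\beta t}^{5/2-\tilde s}$, coming from the Gevrey commutator in Lemma~\ref{inequ4w}.

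Two corrections. First, Lemma~\ref{lem1} is the velocity-equation commutator estimate for $B(u,u)$ and plays no role here; only Lemmas~\ref{inequ3w} and~\ref{inequ4w} are used. Second, your account of $s=\tfrac32$ is off: it is \emph{not} absorbed via Wiener-algebra/$F^s$ norms, nor is it handled uniformly by Lemma~\ref{lemmaonX} (which is stated only for $\tfrac12<\tilde s<\tfrac32$). At $\tilde s=\tfrac12$ the Young exponent $\tfrac{4}{2\tilde s-1}$ is infinite, so the second RHS term cannot be reduced to a pure power of $\|\omega\|_{\tilde s,\beta t}$. The paper instead observes that this term is $c_{\tilde s}\beta t\,\|\omega\|_{1/2,\beta t}\|\omega\|_{3/2,\beta t}^2$ and absorbs it directly into the dissipation $\|\omega\|_{3/2,\beta t}^2$ on the region where $c_{\tilde s}\beta t\,\|\omega\|_{1/2,\beta t}<\tfrac14$; one then controls the first crossing time with the curve $\breve c/t$ by the same comparison-with-$\psi'=c\psi^3$ argument, recovering the same bounds as in Case~(1). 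So the endpoint is still treated separately---the point is only that the bilinear Lemmas~\ref{inequ3w}--\ref{inequ4w} are valid on the open range $-\tfrac12<\tilde s<\tfrac32$, placing $\tilde s=\tfrac12$ in the interior rather than at a Sobolev-embedding borderline.
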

\begin{remark}
	\emph{The differential inequalities for the evolution of the Gevrey norms leading up to the proofs of Theorem \ref{mainthem1} and Theorem \ref{mainthem2} are non-autonomous and much more complicated than that of Theorem \ref{mainthem213}. Consequently, finding an optimal $\beta$ leading to an improved estimate of the analyticity radius as has been done in Theorem \ref{analyticthm} is difficult. Thus, it would be of interest to find an improved estimate of the analyticity radius for $s>\frac{3}{2}$ by optimizing over the choice of $\beta$.}
	
	\end{remark}
\begin{remark}
\emph{Following the technique presented in Theorem \ref{mainthem2}, we present  in the corollary below an alternate proof (i.e. different from the ones in \cite{cheskidov2016lower, CM2018, cortissoz2014lower,  mccormick2016lower, robinson2012lower})
of the existence time/blow-up rate
in spaces $H^s$ for the entire range $ \frac{1}{2} < s < \frac{5}{2}$
which in particular shows that the case $s=\frac{3}{2}$, which appears as a borderline case  in \cite{cheskidov2016lower, CM2018, cortissoz2014lower,  mccormick2016lower, robinson2012lower} is not really a borderline in our approach.}
\end{remark}
\begin{corollary}\label{corollary2}
	Let $u$ be a strong solution of \eqref{nse} with initial condition $u^0\in \dot{H}^{s}(\Omega)$, for some $s\in(\frac{1}{2},\frac{5}{2})$. Define
	\begin{align*}
	T^{\ddagger}=\sup \big\{T>0 \:|\: \sup_{t\in[0,T]} \|u(t)\|_{s} < \infty \big\}.
	\end{align*}
	 Then 
	\begin{align*}
	T^{\ddagger}>\frac{c_s}{\|u^0\|^{\frac{4}{2s-1}}_s}.
	\end{align*}
	
	Moreover, if $T^{\ast} < \infty$, then
	\begin{align} \label{coro2}
	\|u(t)\|_{{s}}>\frac{c_{s} }{ (T^{\ddagger}-t)^{\frac{2{s}-1}{4}}}.
	\end{align}
	%	where $c_{s,\nu}$ is a finite, positive constant depending only on $s$ and $\nu$. 
\end{corollary}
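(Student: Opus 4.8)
The plan is to read off the corollary as the $\beta=0$ specialization of the estimates already established, treating the two sub-ranges of $s$ separately. For $\tfrac12<s<\tfrac32$ nothing new is needed: Theorem~\ref{mainthem213} applied with $\beta_0=0$ and $\beta=0$ (so that $\|u^0\|_{s,\beta_0}=\|u^0\|_s$ and the quantity denoted $T^{\ddagger}$ here is the $T^{\ast}$ of that theorem) gives exactly $T^{\ddagger}\gtrsim\|u^0\|_s^{-4/(2s-1)}$, while its blow-up estimate \eqref{norm13} specializes to $\|u(t)\|_s\gtrsim(T^{\ddagger}-t)^{-(2s-1)/4}$, which is \eqref{coro2}. (If $\|u^0\|_s\le c_s$ then in fact $T^{\ddagger}=\infty$, and the asserted bound is trivial.)

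For $\tfrac32\le s<\tfrac52$ I would rerun the proof of Theorem~\ref{mainthem2} with $\beta=0$ from the outset. Passing to the vorticity $\omega=\nabla\times u$, for which $\|\omega\|_{s-1}\sim\|u\|_s$ with $s-1\in[\tfrac12,\tfrac32)$, and replacing the Gevrey weight $e^{\beta t A^{1/2}}$ by the identity, the term in the energy identity responsible for the gain of analyticity is simply absent, so the non-autonomous features and the case distinction of Theorem~\ref{mainthem2} (artifacts of the competition between the analyticity gain and the dissipation when $\beta>0$) disappear and the differential inequality for $X(t):=\|\omega(t)\|_{s-1}^2$ becomes autonomous. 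Estimating the vortex-stretching and transport terms by the orthogonality-based commutator bounds of Lemma~\ref{lem1}, Lemma~\ref{inequ3w} and Lemma~\ref{inequ4w} (with the Gevrey weight set to the identity), and then using Young's inequality to absorb the dissipative contribution $\|\omega\|_s^2$, one is left with a Riccati-type inequality
\begin{align*}
\frac{dX}{dt}\ \le\ c_s\,X^{\frac{2s+1}{2s-1}},\qquad X(0)=\|u^0\|_s^2 .
\end{align*}

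Integrating this inequality gives $X(t)\le X(0)\bigl(1-c_s\,X(0)^{\frac{2}{2s-1}}t\bigr)^{-\frac{2s-1}{2}}$, hence
\begin{align*}
T^{\ddagger}\ \ge\ \frac{1}{c_s\,X(0)^{\frac{2}{2s-1}}}\ =\ \frac{c_s}{\|u^0\|_s^{\frac{4}{2s-1}}},
\end{align*}
valid for every data size, which together with the first step yields the stated lower bound on $T^{\ddagger}$ for all $s\in(\tfrac12,\tfrac52)$. For the blow-up rate, suppose $T^{\ddagger}<\infty$ and apply this existence-time bound with initial time $t$ in place of $0$: by uniqueness of strong solutions, the one emanating from $u(t)$ persists in $H^s$ for a time at least $c_s\|u(t)\|_s^{-4/(2s-1)}$, so $T^{\ddagger}-t\ge c_s\|u(t)\|_s^{-4/(2s-1)}$, and rearranging gives \eqref{coro2}.

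The one genuinely delicate point is that the constant $c_s$ entering the Riccati inequality must remain finite and strictly positive uniformly up to and including $s=\tfrac32$; this is precisely the borderline that forces Littlewood--Paley or Besov-space machinery in \cite{cheskidov2016lower, CM2018, cortissoz2014lower, mccormick2016lower, robinson2012lower}. In the present approach that difficulty is shifted entirely into the commutator estimates of Lemmas~\ref{lem1},~\ref{inequ3w} and~\ref{inequ4w}, so once those are in hand the remainder is the elementary ODE argument above; one must still verify that the constants they produce do not degenerate as $s\downarrow\tfrac32$ (in contrast to the constant in Theorem~\ref{analyticthm}, which degenerates as $s\uparrow\tfrac32$), so that the two sub-ranges genuinely overlap at $s=\tfrac32$ and the case $s=\tfrac32$ is not exceptional in this method.
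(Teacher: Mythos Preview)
Your argument is correct, and for the range $\tfrac32\le s<\tfrac52$ it coincides exactly with what the paper does: set $\beta=0$ in Lemma~\ref{nvorticity}, apply Young's inequality to absorb $\|\omega\|_{\tilde s+1}^2$, and integrate the resulting autonomous Riccati inequality (the paper then invokes Lemma~\ref{lem4n} rather than writing out the explicit solution, but this is cosmetic). One small inaccuracy: Lemma~\ref{lem1} concerns the velocity nonlinearity, not the vorticity equation, so it plays no role here; only Lemmas~\ref{inequ3w} and~\ref{inequ4w} are needed, and when $\alpha=0$ the second term in Lemma~\ref{inequ4w} vanishes outright.

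The one substantive difference is organizational. You split the range at $s=\tfrac32$, handling $\tfrac12<s<\tfrac32$ by quoting Theorem~\ref{mainthem213} (velocity equation) and $\tfrac32\le s<\tfrac52$ by the vorticity argument. The paper instead runs the vorticity argument \emph{uniformly} over the full range $-\tfrac12<\tilde s<\tfrac32$ (i.e.\ $\tfrac12<s<\tfrac52$), since Lemmas~\ref{inequ3w} and~\ref{inequ4w} are already stated for that entire interval. This is the point of the corollary as advertised in the remark preceding it: a single argument covering all of $(\tfrac12,\tfrac52)$ with no special treatment of $s=\tfrac32$. Your patching of two sub-ranges is perfectly valid and yields the same conclusion, but it partially obscures that unification; the concern you raise in your final paragraph about constants overlapping at $s=\tfrac32$ simply does not arise in the paper's version, because there is only one argument and one constant.
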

\comments{
\begin{corollary}\label{corollary3}
	Let $u$ be a strong solution of \eqref{nse} with initial condition $u^0\in \dot{H}^{s}(\Omega)$, for some $s\in(\frac{1}{2},\frac{5}{2})$. Let $r_0>0$ and $0<\theta<1$, and define
	\begin{align*}
	T^{\ddagger}=\sup \big\{T>0 \:|\: \sup_{t\in[0,T]} \|e^{r_0 A^{\frac{\theta}{2}}}u(t)\|_{s} < \infty \big\}.
	\end{align*}
	If $T^{\ddagger} < \infty$, then
	\begin{align} \label{coro1-n}
	\|u(t)\|_{s}\geq \frac{c_{s}}{(T^{\ddagger}-t)^{\frac{2{s}-1}{4}}}.
	\end{align}
	%where $c_{s,\nu}$ is a finite, positive constant depending only on $s$ and $\nu$. 
\end{corollary}
}

The rest of the paper is organized as follows. Section~\ref{sec:prelim} provides the background and setting for our analysis.
In Section~\ref{sec:velocity}, working on the velocity equation, we obtained new commutator estimates of the nonlinear term in Gevrey spaces. Using these estimates, in subsection~\ref{sec:general case}, the existence time and blow-up rates have been obtained for $\|u\|_{Gv(s, \beta_0+\beta t)}$ when $s>\frac{1}{2},\ s\neq \frac{3}{2}$. We have also obtained an improved estimate of the analyticity radius for $\|u\|_{Gv(s, \beta_0+\beta t)}$ when $ \frac{1}{2}<s<\frac{3}{2}$. In subsection~\ref{sec:s>5/2}, we improve the existence times in the Gevrey classes when $s>\frac{5}{2}$. In Section~\ref{sec:1/2<s<5/2}, working on the vorticity equaiton, we improve the existence times in the Gevrey classes when $\frac{3}{2}\leq s<\frac{5}{2}$. Section~\ref{sec:appe} is the Appendix which includes several proofs of several requisite lemmas\& propositions.

\section{Preliminaries}\label{sec:prelim}
%\subsection{Homogeneous Sobolev Spaces}
%$\dot{H}^s$ is the subspace of the Sobolev space $H^s$,
%\begin{align*}
%\dot{H}^s (\mathbb{R}^3)=\left \{u: \hat{u} \in L_{loc}^1 (\mathbb{R}^3): \int_{\mathbb{R}^3} |\xi|^{2s} |\hat{u} (\xi)|^2 d \xi <\infty \right \},
%\end{align*}
%where
%\begin{align*}
%\hat{u}(\xi)=\int_{\mathbb{R}^3} e^{-2\pi i x \cdot \xi} u(x) dx
%\end{align*}
%is the Fourier transform of $u$. The norm in $\dot{H}^s$ is defined as
%\begin{align*}
%\|u\|_{\dot{H}^s (\mathbb{R}^3)}=\left (\int_{\mathbb{R}^3} |\xi|^{2s} |\hat{u} (\xi)|^2 d \xi \right )^{1/2}.
%\end{align*}

We recall the definition of strong solutions from \cite{temam1995navier}.\\
Let $\displaystyle V=\left \{u\in H^1_{loc}(\Omega), \text{u is periodic, and}\ \nabla \cdot {u}=0\ \text{in}\ \Omega\right \}$ and $u_0\in V$, $u$ is a strong solution of NSE if it solves the variational formulation of (\ref{nse1})-(\ref{nse3}) as in \cite{cf, temam1995navier}, and
$$u\in L^2(0, T; D(A))\cap L^{\infty}(0, T; V),$$
for $T>0$.
%Well known algebraic properties of $\mathit{B}$ include \cite{DFJ05}  
%\begin{align}
%\label{orthog_rel}
%&(\mathit{B}(u,v),w)=-(\mathit{B}(u,w),v), \forall u\in  \mathit{H}, v, w \in \mathit{V},\\
%&(\mathit{B}(u,v),v)=0, \forall u\in  \mathit{H}, v \in \mathit{V},\\
%&(\mathit{B}(u,u),\mathit{A}u)=0, \forall u\in  \mathit{D}({\mathit{A}}),\\
%&(\mathit{B}(\mathit{A}v,v),u)=(\mathit{B}(u,v),\mathit{A}v), \forall u\in  \mathit{H}, v\in \mathit{D}({\mathit{A}}).
%\end{align}
%Here, $H$ is the real separable Hilbert space and $V$ is the Hilbert space with the inner product $()$
%\subsection{Several Lemma}
The following lemma will be used in this paper.
\begin{lemma} \cite{swanson2011gevrey}
	\label{lem-gev1-n}
	Let $1<p<\infty$, if $s_1,\ s_2 <\frac{n}{p'},\ s_1+s_2 \geq 0$, and $s_1+s_2 > \frac{n}{p'}-\frac{n}{p}$, then
	\begin{align}
	\label{gevsplit-n}
	\|u \ast v\|_{s_1+s_2-\frac{n}{p'},p}\leq C_{s_1, s_2, n, p}\|u\|_{s_1, p} \|v\|_{s_2, p},
	\end{align}
	for all $u\in V_{s_1, p}$ and $v\in V_{s_2, p}$.
	%where $s_1, s_2 <\frac{3}{2}$ and $s_1+s_2>0$.
\end{lemma}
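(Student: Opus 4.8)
The plan is to recognize \eqref{gevsplit-n} as the natural $\ell^{p}$–analogue of the classical Sobolev multiplication theorem. Identifying a (periodic, mean–free) function with its sequence of Fourier coefficients, one has $\|u\|_{s,p}=\big(\sum_{k}|k|^{sp}|\hat u(k)|^{p}\big)^{1/p}$ and $u\ast v=\widehat{UV}$, where $UV$ is the product of the corresponding functions; thus \eqref{gevsplit-n} is the multiplier estimate
\[
\big\|\,|k|^{\,s_1+s_2-n/p'}(u\ast v)\,\big\|_{\ell^{p}}\;\lesssim\;\big\|\,|k|^{s_1}u\,\big\|_{\ell^{p}}\;\big\|\,|k|^{s_2}v\,\big\|_{\ell^{p}} ,
\]
which for $p=2$ reduces, by Plancherel, to $\dot H^{s_1}\!\cdot\dot H^{s_2}\hookrightarrow\dot H^{\,s_1+s_2-n/2}$, valid exactly for $s_1,s_2<n/2$ and $s_1+s_2>0$; so the hypotheses on $s_1,s_2$ are the expected (and, by the $p=2$ case, essentially sharp) generalization. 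The strategy is the standard paraproduct argument adapted to the $\ell^{p}$ Fourier norm: write $u\ast v=T_{u}v+T_{v}u+R(u,v)$, where $T_{u}v$ collects interactions in which the $u$–frequency is dyadically smaller than the $v$–frequency, $T_{v}u$ the reverse, and $R(u,v)$ the comparable (high–high) ones, and estimate the three pieces separately. Here $\Delta_{j}$ denotes the Littlewood--Paley projection onto $|k|\sim 2^{j}$ (homogeneous on $\mathbb R^{n}$, or inhomogeneous on $\mathbb T^{n}$, where $|k|\ge 1$ makes the two coincide up to a harmless finite piece), and one uses the trivial characterization $\|f\|_{s,p}^{p}\approx\sum_{j}2^{j s p}\|\Delta_{j}f\|_{\ell^{p}}^{p}$.

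For $T_{u}v$, whose block at frequency $2^{j}$ is $S_{j-1}u\ast\Delta_{j}v$ (supported in $|k|\sim 2^{j}$), bound the low–frequency factor in $\ell^{1}$ by Hölder: $\|S_{j-1}u\|_{\ell^{1}}\le\sum_{j'<j}\big(\#\{|k|\sim 2^{j'}\}\big)^{1/p'}\|\Delta_{j'}u\|_{\ell^{p}}\lesssim\sum_{j'<j}2^{j'(n/p'-s_1)}\,\big(2^{j's_1}\|\Delta_{j'}u\|_{\ell^{p}}\big)\lesssim 2^{j(n/p'-s_1)}\|u\|_{s_1,p}$, the last step being a geometric series whose ratio exceeds $1$ precisely because $s_1<n/p'$, hence dominated by its top term. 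Young's inequality $\ell^{1}\ast\ell^{p}\subset\ell^{p}$ then gives $\|\Delta_{j}(T_{u}v)\|_{\ell^{p}}\lesssim 2^{j(n/p'-s_1)}\|u\|_{s_1,p}\|\Delta_{j}v\|_{\ell^{p}}$; multiplying by $2^{j(s_1+s_2-n/p')}$ (note $s_1+s_2-n/p'+n/p'-s_1=s_2$) and summing in $\ell^{p}$ over $j$ yields $\|T_{u}v\|_{s_1+s_2-n/p',p}\lesssim\|u\|_{s_1,p}\|v\|_{s_2,p}$. The term $T_{v}u$ is symmetric and uses $s_2<n/p'$.

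For the diagonal term $R(u,v)=\sum_{j}\Delta_{j}u\ast\widetilde\Delta_{j}v$ (supported in $|k|\lesssim 2^{j}$, with $\widetilde\Delta_{j}$ a fattened block), the $2^{q}$–frequency block is $\sum_{j\gtrsim q}\mathbf 1_{|k|\sim 2^{q}}\big(\Delta_{j}u\ast\widetilde\Delta_{j}v\big)$. For each $j$, estimate $\big\|\mathbf 1_{|k|\sim 2^{q}}(\Delta_{j}u\ast\widetilde\Delta_{j}v)\big\|_{\ell^{p}}$ by Hölder into an intermediate exponent $\rho$ with $1/\rho=\max(0,\,2/p-1)$ (so $\rho=\infty$ if $p\ge 2$, $\rho=p/(2-p)$ if $p\le 2$) together with Young's inequality, exploiting that $\Delta_{j}u,\widetilde\Delta_{j}v$ live at frequency $\lesssim 2^{j}$ while the output is restricted to $|k|\sim 2^{q}$. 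This gives a bound $\lesssim 2^{ja}2^{qb}\|\Delta_{j}u\|_{\ell^{p}}\|\widetilde\Delta_{j}v\|_{\ell^{p}}$ with $a+b=n/p'$, where $(a,b)=(0,\,n/p')$ if $p\le 2$ and $(a,b)=(n/p'-n/p,\,n/p)$ if $p\ge 2$. Writing $a_{j}=2^{js_1}\|\Delta_{j}u\|_{\ell^{p}}$, $b_{j}=2^{js_2}\|\widetilde\Delta_{j}v\|_{\ell^{p}}$ (so $\|(a_{j})\|_{\ell^{p}}\lesssim\|u\|_{s_1,p}$, $\|(b_{j})\|_{\ell^{p}}\lesssim\|v\|_{s_2,p}$), multiplying by $2^{q(s_1+s_2-n/p')}$ and using $a+b=n/p'$, the $2^{q}$–block norm is $\lesssim\sum_{j\gtrsim q}2^{(q-j)\tau}a_{j}b_{j}$ with $\tau=s_1+s_2-a$, i.e. $\tau=s_1+s_2$ for $p\le 2$ and $\tau=s_1+s_2-(n/p'-n/p)$ for $p\ge 2$. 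This is a convolution in $q$ of the kernel $m\mapsto 2^{m\tau}\mathbf 1_{m\lesssim 1}$, which lies in $\ell^{1}$ exactly when $\tau>0$, against $(a_{j}b_{j})_{j}\in\ell^{p/2}\subset\ell^{p}$; by Young, $\|R(u,v)\|_{s_1+s_2-n/p',p}\lesssim\|u\|_{s_1,p}\|v\|_{s_2,p}$ provided $\tau>0$, which is precisely the hypothesis $s_1+s_2\ge 0$ in the range $p\le 2$ and $s_1+s_2>n/p'-n/p$ in the range $p\ge 2$ (the two listed conditions, each binding in its own range). Summing the three contributions proves Lemma \ref{lem-gev1-n}.

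The main obstacle is the exponent bookkeeping in the diagonal term: the correct intermediate exponent $\rho$, and hence which of the two conditions on $s_1+s_2$ is actually used, switches at $p=2$, and one must verify that every constant depends only on $s_1,s_2,n,p$ and not on the number of active dyadic scales; the borderline cases $s_1+s_2=0$ (for $p<2$) or $s_1+s_2=n/p'-n/p$ (for $p\ge 2$), if they are asserted, require a logarithmic refinement or a Lorentz–space version of Young's inequality. Two alternative routes are worth noting: (i) bilinear complex interpolation, exploiting that the target index $s_1+s_2-n/p'$ is affine in $1/p$, starting from the $p=2$ (Sobolev) estimate and a large–$p$ endpoint handled by the elementary asymptotic $\sum_{j}|j|^{-s_1}|k-j|^{-s_2}\sim|k|^{\,n-s_1-s_2}$; and (ii) passing by duality to the trilinear form $\sum_{j,m}|j+m|^{\,s_1+s_2-n/p'}|j|^{-s_1}|m|^{-s_2}U(j)V(m)W(j+m)$ and invoking a discrete Stein--Weiss / bilinear Hardy--Littlewood--Sobolev inequality, whose hypotheses are again $s_1,s_2<n/p'$ and $s_1+s_2>\max(0,\,n/p'-n/p)$.
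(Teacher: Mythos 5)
This lemma is not proved in the paper at all: it is imported verbatim from \cite{swanson2011gevrey} (and is used here only with $n=3$, $p=p'=2$, where it reduces to the classical Sobolev product estimate $\dot H^{s_1}\cdot\dot H^{s_2}\hookrightarrow\dot H^{s_1+s_2-3/2}$). So there is no in-paper argument to compare against; the cited source proceeds by a direct H\"older-plus-counting estimate on the convolution sum, which is essentially your alternative route (ii), whereas your main argument is a Littlewood--Paley/paraproduct proof. On its own merits your proof is sound in the range it actually covers: the low-high terms correctly use $s_1<n/p'$ (resp.\ $s_2<n/p'$) to sum the geometric series $\sum_{j'<j}2^{j'(n/p'-s_1)}$, and I checked your exponent bookkeeping in the diagonal term --- for $p\le 2$ one gets $(a,b)=(0,n/p')$ via $\ell^p\ast\ell^p\subset\ell^r$, $1/r=2/p-1$, followed by H\"older on the block $|k|\sim 2^q$, and for $p\ge 2$ one gets $(a,b)=(n/p'-n/p,\,n/p)$ via $\ell^p\ast\ell^{p'}\subset\ell^\infty$, so $\tau=s_1+s_2-a$ is as you state and the kernel $2^{m\tau}\mathbf 1_{m\le C}$ is summable iff $\tau>0$. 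The one genuine shortfall is the endpoint you yourself flag: for $p<2$ the stated hypotheses ($s_1+s_2\ge 0$ together with $s_1+s_2>n/p'-n/p$, the latter being vacuous there) permit $s_1+s_2=0$, while your diagonal estimate requires the strict inequality $s_1+s_2>0$; as written the proposal therefore proves the lemma only on the open range and would need the logarithmic or Lorentz-type refinement you mention to recover the closed endpoint. Since the present paper invokes the lemma only at $p=2$, where $s_1+s_2>0$ is forced by the hypotheses anyway, this endpoint never arises in any application here.
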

In our current setting, we have $n=3,\ p'=2,\ p=2.$ Since we mainly work in the Gevrey spaces, we will need another version of the above lemma.
%we can take $u=e^{\beta tA^{\frac{1}{2}}}u_1$ and $v=e^{\beta tA^{\frac{1}{2}}}v_1$, thus
%	\begin{align*}
%\|u \ast v\|_{s_1+s_2-\frac{3}{2}}\leq C_{s_1, s_2}\|u_1\|_{s_1, \beta t} \|v_1\|_{s_2, \beta t}
%\end{align*}
%Since
%\begin{align*}
%\|u_1 \ast v_1\|_{s_1+s_2-\frac{3}{2},\beta t}\leq C_{s_1, s_2}\|u_1\|_{s_1, \beta t} \|v_1\|_{s_2, \beta t}
%\end{align*}
\begin{lemma}
	\label{lem-gev1}
	In three dimensional spaces, for $s_1,\ s_2<\frac{3}{2}$ and $s_1+s_2> 0$, $u=e^{\alpha A^{\frac{1}{2}}}u_1\in \dot{H}^{s_1}$ and $v=e^{\alpha A^{\frac{1}{2}}}v_1\in \dot{H}^{s_2}$, we have
	\begin{align}
	\label{gevsplit}
	\|u_1 \ast v_1\|_{s_1+s_2-\frac{3}{2},\alpha}\leq \|u \ast v\|_{s_1+s_2-\frac{3}{2}}\leq C_{s_1,s_2}\|u_1\|_{s_1,\alpha} \|v_1\|_{s_2,\alpha}.
	\end{align}
\end{lemma}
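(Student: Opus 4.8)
The plan is to deduce Lemma~\ref{lem-gev1} from the convolution estimate Lemma~\ref{lem-gev1-n} (with $n=3$, $p=p'=2$, which forces the loss $\frac{n}{p'}=\frac32$) by an elementary Fourier-side argument that transfers the weight $e^{\alpha A^{1/2}}$ onto the convolution. The first inequality, $\|u_1\ast v_1\|_{s_1+s_2-\frac32,\alpha}\le\|u\ast v\|_{s_1+s_2-\frac32}$, is the only place the Gevrey weight genuinely enters, and it rests on the triangle inequality for the exponent: for $k=m+\ell$ one has $|k|\le|m|+|\ell|$, hence $e^{\alpha|k|}\le e^{\alpha|m|}e^{\alpha|\ell|}$ with $\alpha\ge 0$. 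The second inequality is then just Lemma~\ref{lem-gev1-n} applied to $u=e^{\alpha A^{1/2}}u_1$ and $v=e^{\alpha A^{1/2}}v_1$, using $\|u\|_{s_i}=\|u_1\|_{s_i,\alpha}$.

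First I would write everything in Fourier coefficients. Since $u=e^{\alpha A^{1/2}}u_1$ means $\widehat u(k)=e^{\alpha|k|}\widehat{u_1}(k)$ (and similarly for $v$), and since $\widehat{u_1\ast v_1}(k)=\sum_{m+\ell=k}\widehat{u_1}(m)\widehat{v_1}(\ell)$, the weighted coefficient of the convolution is
\[
e^{\alpha|k|}\,|\widehat{u_1\ast v_1}(k)|
\le \sum_{m+\ell=k} e^{\alpha|m|}e^{\alpha|\ell|}\,|\widehat{u_1}(m)|\,|\widehat{v_1}(\ell)|
= \sum_{m+\ell=k} |\widehat u(m)|\,|\widehat v(\ell)|,
\]
where the inequality uses $|k|\le|m|+|\ell|$ and $\alpha\ge0$. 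The right-hand side is the $k$-th Fourier coefficient of $\tilde u\ast\tilde v$, where $\tilde u,\tilde v$ have nonnegative Fourier coefficients $|\widehat u|,|\widehat v|$ and thus the same $\dot H^{s_i}$-norms as $u,v$. Multiplying by $|k|^{s_1+s_2-3/2}$, squaring, and summing over $k$ yields $\|u_1\ast v_1\|_{s_1+s_2-3/2,\alpha}\le\|\tilde u\ast\tilde v\|_{s_1+s_2-3/2}$; since passing to absolute values of Fourier coefficients does not increase the $\dot H^{s_1+s_2-3/2}$ norm of a convolution either (same pointwise-on-Fourier-side domination argument), this is $\le\|u\ast v\|_{s_1+s_2-3/2}$, giving the first inequality. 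For the second inequality I invoke Lemma~\ref{lem-gev1-n} directly: with $n=3$, $p=p'=2$ the hypotheses $s_1,s_2<\frac32$, $s_1+s_2\ge0$ and $s_1+s_2>\frac32-\frac32=0$ are exactly the stated hypotheses, so $\|u\ast v\|_{s_1+s_2-3/2}\le C_{s_1,s_2}\|u\|_{s_1}\|v\|_{s_2}=C_{s_1,s_2}\|u_1\|_{s_1,\alpha}\|v_1\|_{s_2,\alpha}$.

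There is essentially no hard obstacle here; the only points requiring a little care are bookkeeping ones. One must check that the membership hypotheses are consistent, i.e. that $u_1\in\dot H^{s_1}$ with $e^{\alpha A^{1/2}}u_1\in\dot H^{s_1}$ is precisely $u_1\in Gv(s_1,\alpha)$ so that $\|u_1\|_{s_1,\alpha}=\|u\|_{s_1}$ is finite and Lemma~\ref{lem-gev1-n} genuinely applies to $u,v$; and one should note the mean-free normalization so that the Stokes operator $A$ has the sums over $k\in\mathbb Z^3\setminus\{0\}$ and $|k|$ is comparable to $\langle k\rangle$, making the homogeneous and inhomogeneous versions of Lemma~\ref{lem-gev1-n} interchangeable in this periodic setting. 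The genuinely substantive input — the convolution (product) estimate with the critical $\frac32$-derivative loss in three dimensions — is entirely contained in Lemma~\ref{lem-gev1-n} and is quoted, not reproved.
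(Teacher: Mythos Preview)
The paper does not supply a separate proof of this lemma; it is stated immediately after Lemma~\ref{lem-gev1-n} as the Gevrey analogue, and only the end-to-end bound $\|u_1\ast v_1\|_{s_1+s_2-3/2,\alpha}\le C_{s_1,s_2}\|u_1\|_{s_1,\alpha}\|v_1\|_{s_2,\alpha}$ is ever used downstream. Your overall strategy---transfer the exponential weight via $e^{\alpha|k|}\le e^{\alpha|m|}e^{\alpha|\ell|}$ and then invoke Lemma~\ref{lem-gev1-n}---is exactly the intended one and does yield that end-to-end inequality.

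There is, however, a genuine slip in your handling of the \emph{first} inequality as written. You correctly obtain $\|u_1\ast v_1\|_{s_1+s_2-3/2,\alpha}\le\|\tilde u\ast\tilde v\|_{s_1+s_2-3/2}$, but the next claim, that ``passing to absolute values of Fourier coefficients does not increase the $\dot H^{s_1+s_2-3/2}$ norm of a convolution,'' has the inequality backwards: since $|\widehat{u\ast v}(k)|=\big|\sum\widehat u(m)\widehat v(\ell)\big|\le\sum|\widehat u(m)|\,|\widehat v(\ell)|=\widehat{\tilde u\ast\tilde v}(k)$, one has $\|u\ast v\|_{s}\le\|\tilde u\ast\tilde v\|_{s}$, not the reverse. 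Indeed the literal intermediate inequality $\|u_1\ast v_1\|_{s_1+s_2-3/2,\alpha}\le\|u\ast v\|_{s_1+s_2-3/2}$ can fail for general complex Fourier data, because cancellations in $\sum e^{\alpha|m|}\widehat{u_1}(m)\,e^{\alpha|\ell|}\widehat{v_1}(\ell)$ may make the right-hand side smaller. The fix is simply to bypass that middle term: apply Lemma~\ref{lem-gev1-n} directly to $\tilde u,\tilde v$ (which share the $\dot H^{s_i}$ norms of $u,v$) to get $\|\tilde u\ast\tilde v\|_{s_1+s_2-3/2}\le C_{s_1,s_2}\|u_1\|_{s_1,\alpha}\|v_1\|_{s_2,\alpha}$, and combine with your first step. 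That gives the end-to-end estimate, which is all the paper actually uses; the intermediate term in the displayed chain should be read with $u,v$ replaced by $\tilde u,\tilde v$ (equivalently, under the tacit convention that Fourier coefficients are nonnegative).
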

%\begin{lemma} \cite{bae2014gevrey}
%	\label{lem-gev1}
%In three dimensional space, for $0<s_1, s_2<\frac{3}{2}$ and $s_1+s_2>\frac{3}{2}$, we have
%	\begin{align}
%	\label{gevsplit}
%	\|u \cdot v\|_{s_1+s_2-\frac{3}{2},\beta t}\leq C\|u\|_{s_1,\beta t} \|v\|_{s_2,\beta t}.
%	\end{align}
%	%for all $u\in V_{\alpha, p}$ and $v\in V_{\beta, p}$.
%	%where $s_1, s_2 <\frac{3}{2}$ and $s_1+s_2>0$.
%\end{lemma}
%We will also make use of the next three lemmas.
%(the proofs, due to Robinson, can be found in \cite{robinson2012lower}).
\begin{lemma} \cite{mccormick2016lower} \label{lem4n}
	If $\displaystyle \dot{X}\leq c X^{1+\gamma}$ and $X(t)\rightarrow \infty$ as $t\rightarrow T$, then
	\begin{align*}
	X(t)\geq \left(\frac{1}{\gamma c(T-t)}\right)^{1/\gamma}.
	\end{align*}
\end{lemma}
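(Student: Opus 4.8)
The plan is to convert the ``wrong-direction'' bound $\dot X \le cX^{1+\gamma}$ into the desired lower bound on $X$ by passing to the auxiliary quantity $Y = X^{-\gamma}$, which reverses the inequality and makes the blow-up hypothesis usable. Since $X(t)\to\infty$ as $t\to T$, we may fix $t_0<T$ with $X>0$ on $(t_0,T)$; there $Y$ is positive, it is differentiable wherever $X$ is (in our applications $X$ is absolutely continuous, being a Gevrey norm along a strong solution), and
\[
\dot Y \;=\; -\gamma X^{-\gamma-1}\dot X \;\ge\; -\gamma X^{-\gamma-1}\cdot cX^{1+\gamma} \;=\; -\gamma c ,
\]
where the inequality uses $\dot X\le cX^{1+\gamma}$ together with $X^{-\gamma-1}>0$ and $\gamma,c>0$.

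Next I would integrate this pointwise lower bound on $\dot Y$ over an interval $[t,s]$ with $t_0<t<s<T$, obtaining $Y(s)-Y(t)\ge -\gamma c(s-t)$, i.e.\ $Y(t)\le Y(s)+\gamma c(s-t)$. Letting $s\nearrow T$ and invoking $X(s)\to\infty$ (so that $Y(s)=X(s)^{-\gamma}\to 0$, as $\gamma>0$) yields $Y(t)\le \gamma c(T-t)$ for every $t\in(t_0,T)$. Unwinding the substitution, $X(t)^{-\gamma}\le \gamma c(T-t)$, and raising both sides to the power $-1/\gamma$ (which flips the inequality once more, the exponent being negative) gives $X(t)\ge\big(\gamma c(T-t)\big)^{-1/\gamma}=\left(\frac{1}{\gamma c(T-t)}\right)^{1/\gamma}$, as claimed. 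By continuity this in fact extends to all $t$ for which $X(t)$ is defined and positive, not merely $t>t_0$.

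There is no genuine obstacle here; the only point requiring a moment's care is the justification for integrating the pointwise differential inequality, which is automatic when $X$ is absolutely continuous, and otherwise can be handled by comparison with the explicitly solvable (and locally Lipschitz) ODE $\dot Z = cZ^{1+\gamma}$. The one conceptual step worth highlighting is the realization that the substitution $Y=X^{-\gamma}$ is exactly what turns an upper bound on the growth rate of $X$ into the sharp lower bound, once the finite-time blow-up at $T$ forces $Y(T^-)=0$.
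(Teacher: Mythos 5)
Your argument is correct and is the standard proof of this lemma: the substitution $Y=X^{-\gamma}$, the resulting bound $\dot Y\ge -\gamma c$, and the limit $Y(s)\to 0$ as $s\nearrow T$ give exactly the claimed lower bound. The paper itself does not prove this statement but cites it from the reference \cite{mccormick2016lower}, where essentially the same elementary integration argument is used, so there is nothing to add.
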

\begin{lemma} \cite{robinson2012lower} \label{lem3}
	If $0\leq s_1<3/2+r<s_2$ and $u\in \dot{H}^{s_1}\cap\dot{H}^{s_2}$,  then $u\in F^r$ and
	\begin{align}
	\label{est4}
	\|u\|_{F^r}\leq c\|u\|_{s_1} ^{(s_2-r-3/2)/(s_2-s_1)}\|u\|_{s_2} ^{(3/2+r-s_1)/(s_2-s_1)}.
	\end{align}
\end{lemma}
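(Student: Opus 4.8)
The plan is to establish this interpolation inequality by a Littlewood--Paley-free low--high frequency splitting of the $\ell^1$-type sum $\|u\|_{F^r}=\sum_{k\neq 0}|k|^r|\hat u_k|$, followed by Cauchy--Schwarz on each piece and an optimization over the frequency cutoff. Concretely, for a real parameter $N>0$ I would write
\[
\|u\|_{F^r}=\sum_{0<|k|\le N}|k|^{r-s_1}\bigl(|k|^{s_1}|\hat u_k|\bigr)+\sum_{|k|>N}|k|^{r-s_2}\bigl(|k|^{s_2}|\hat u_k|\bigr),
\]
and apply Cauchy--Schwarz to each sum, which yields
\[
\|u\|_{F^r}\le\Bigl(\sum_{0<|k|\le N}|k|^{2(r-s_1)}\Bigr)^{1/2}\|u\|_{s_1}+\Bigl(\sum_{|k|>N}|k|^{2(r-s_2)}\Bigr)^{1/2}\|u\|_{s_2}.
\]

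The next step is to evaluate the two lattice sums. Grouping the nonzero integer points of $\mathbb Z^3$ into dyadic shells $|k|\sim 2^j$, each of cardinality $\sim 2^{3j}$, one finds $\sum_{0<|k|\le N}|k|^{2(r-s_1)}\lesssim N^{2(r-s_1)+3}$: the geometric series $\sum_{2^j\lesssim N}2^{j(2(r-s_1)+3)}$ is summable and dominated by its top term precisely because $2(r-s_1)+3>0$, i.e. $s_1<\tfrac32+r$. (For $N<1$ the left-hand sum is empty and the bound is trivial, so it holds for every $N>0$.) Symmetrically, $\sum_{|k|>N}|k|^{2(r-s_2)}\lesssim N^{2(r-s_2)+3}$, the series now being dominated by its bottom term precisely because $2(r-s_2)+3<0$, i.e. $s_2>\tfrac32+r$. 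Setting $a:=r-s_1+\tfrac32>0$ and $b:=s_2-r-\tfrac32>0$, so that $a+b=s_2-s_1$, we obtain
\[
\|u\|_{F^r}\le c\bigl(N^{a}\|u\|_{s_1}+N^{-b}\|u\|_{s_2}\bigr)\qquad\text{for all }N>0.
\]

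Finally I would optimize in $N$. (We may assume $u\not\equiv 0$; then $\|u\|_{s_1}>0$.) The right-hand side is minimized at $N=\bigl(b\|u\|_{s_2}/(a\|u\|_{s_1})\bigr)^{1/(a+b)}$, and substituting gives $\|u\|_{F^r}\le c\,\|u\|_{s_1}^{b/(a+b)}\|u\|_{s_2}^{a/(a+b)}$; since $b/(a+b)=(s_2-r-\tfrac32)/(s_2-s_1)$ and $a/(a+b)=(\tfrac32+r-s_1)/(s_2-s_1)$, this is exactly the asserted bound, and in particular it shows $\|u\|_{F^r}<\infty$, i.e. $u\in F^r$. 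There is no real obstacle here: the whole argument is elementary, and the hypotheses $s_1<\tfrac32+r<s_2$ are used only to keep the two geometric series on the correct (convergent) side with the stated power of $N$; the sole point requiring a little care is noting that the cutoff estimate $\|u\|_{F^r}\le c(N^a\|u\|_{s_1}+N^{-b}\|u\|_{s_2})$ is valid for all positive real $N$, which is what makes the continuous optimization legitimate.
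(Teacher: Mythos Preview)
Your proof is correct. The paper itself does not supply a proof of this lemma; it is simply quoted from \cite{robinson2012lower}, so there is no in-paper argument to compare against. Your low/high frequency split at a real cutoff $N$, followed by Cauchy--Schwarz, the lattice-sum bounds $\sum_{0<|k|\le N}|k|^{2(r-s_1)}\lesssim N^{2(r-s_1)+3}$ and $\sum_{|k|>N}|k|^{2(r-s_2)}\lesssim N^{2(r-s_2)+3}$ (valid precisely under the hypotheses $s_1<\tfrac32+r<s_2$), and optimization in $N$, is exactly the standard route and is essentially how the result is established in the cited reference. Your remark that the bound $\|u\|_{F^r}\le c(N^a\|u\|_{s_1}+N^{-b}\|u\|_{s_2})$ holds for every real $N>0$ (the low sum being empty for $N<1$, and the high sum being bounded by a fixed constant $\le c\,N^{-2b}$ since $N^{-2b}\ge 1$ there) is the only point needing care, and you handled it.
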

\begin{lemma} \cite{robinson2012lower} \label{lemrobinson}
	Suppose that the local existence time in $\dot{H}^s (\mathbb{R}^3)$ depends on the norm in $\dot{H}^s (\mathbb{R}^3)$, with
	\begin{align*}
	T_{s} (u_0)\geq \frac{c'_s}{\|u_0\|_{{H}^s (\mathbb{R}^3)}}.
	\end{align*}
	Then
	\begin{align*}
	T_{s} (u_0)\geq {c_s}{\|u_0\|^{(5-2s)/2s}_{L^2 (\mathbb{R}^3)}}{\|u_0\|^{-5/2s}_{\dot{H}^s (\mathbb{R}^3)}}.
	\end{align*}
	In case the solution blows up at time $T<\infty$ then
	\begin{align*}
	\|u(T-t)\|_{\dot{H}^s (\mathbb{R}^3)} \geq c_s \|u(T-t)\|^{(5-2s)/5}_{L^2 (\mathbb{R}^3)} t^{-2s/5}.
	\end{align*}
\end{lemma}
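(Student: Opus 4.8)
The statement to prove is Lemma \ref{lemrobinson}, which is attributed to \cite{robinson2012lower}, so strictly this is a "cite, don't prove" situation. Nonetheless, here is how I would reconstruct the argument, following the scaling heuristic that underlies all the results in this circle of ideas.

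\medskip

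\textbf{Plan of proof.} The plan is to exploit the scaling symmetry of the Navier--Stokes equations together with the hypothesis that the local existence time is controlled by the $\dot H^s$ norm alone via $T_s(u_0) \ge c'_s/\|u_0\|_{\dot H^s}$. Recall that if $u(x,t)$ solves \eqref{nse} (on $\mathbb{R}^3$), then so does $u_\lambda(x,t) := \lambda u(\lambda x, \lambda^2 t)$ with initial data $u_{0,\lambda}(x) = \lambda u_0(\lambda x)$; the homogeneous norms scale as $\|u_{0,\lambda}\|_{\dot H^s} = \lambda^{s-1/2}\|u_0\|_{\dot H^s}$ and $\|u_{0,\lambda}\|_{L^2} = \lambda^{-1/2}\|u_0\|_{L^2}$, while an existence time $\tau$ for $u_\lambda$ corresponds to an existence time $\lambda^2 \tau$ for $u$. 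The first step is to apply the hypothesis to the rescaled data: the solution with data $u_{0,\lambda}$ exists on $[0, c'_s/\|u_{0,\lambda}\|_{\dot H^s}] = [0, c'_s \lambda^{1/2-s}\|u_0\|_{\dot H^s}^{-1}]$, hence the original solution $u$ exists on an interval of length $c'_s \lambda^{2}\lambda^{1/2-s}\|u_0\|_{\dot H^s}^{-1} = c'_s \lambda^{5/2-s}\|u_0\|_{\dot H^s}^{-1}$.

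\medskip

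\textbf{Optimizing the scaling parameter.} Since $\lambda>0$ is free, the naive conclusion would be that $T$ is infinite for $s<5/2$ — which is false, and the resolution is that one cannot use the $\dot H^s$ bound on the rescaled data without tracking how the solution's \emph{other} controlled quantity, the $L^2$ norm, interacts. The correct move is the following: the $L^2$ norm is a globally controlled (nonincreasing) quantity for Leray--Hopf solutions, and the local-existence statement should really be read as: the solution persists as long as $\|u(t)\|_{\dot H^s}$ stays finite, and the $\dot H^s$-based existence time is the relevant obstruction only down to the scale at which $\dot H^s$ and $L^2$ become comparable. Concretely, one chooses $\lambda$ so that the rescaled data has $\dot H^s$ norm comparable to its $L^2$ norm (or to $1$ in dimensionless terms), i.e. $\lambda^{s-1/2}\|u_0\|_{\dot H^s} \sim \lambda^{-1/2}\|u_0\|_{L^2}$, giving $\lambda \sim (\|u_0\|_{L^2}/\|u_0\|_{\dot H^s})^{1/s}$. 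Substituting this $\lambda$ into the existence-time expression $c'_s\lambda^{5/2-s}\|u_0\|_{\dot H^s}^{-1}$ and simplifying the exponents yields
\begin{align*}
T_s(u_0) \ge c_s \|u_0\|_{L^2}^{(5-2s)/2s}\,\|u_0\|_{\dot H^s}^{-5/2s},
\end{align*}
which is precisely the claimed bound. The blow-up statement then follows by the standard continuation/contrapositive argument: if $T<\infty$ is the maximal existence time, apply the lower bound with initial time $T-t$ in place of $0$ (using that $\|u(T-t)\|_{L^2}$ is still finite and $\|u(T-t)\|_{\dot H^s}$ must therefore be large), giving $t \le T - (T-t) \le c_s \|u(T-t)\|_{L^2}^{(5-2s)/2s}\|u(T-t)\|_{\dot H^s}^{-5/2s}$, and rearranging produces $\|u(T-t)\|_{\dot H^s} \ge c_s\|u(T-t)\|_{L^2}^{(5-2s)/5}t^{-2s/5}$.

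\medskip

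\textbf{Main obstacle.} The genuinely delicate point — and the reason the argument is more than pure dimensional analysis — is justifying that the hypothesized local existence time, stated a priori only for data of unit (or arbitrary) size, can legitimately be combined with the $L^2$ control to pick out the distinguished scale $\lambda \sim (\|u_0\|_{L^2}/\|u_0\|_{\dot H^s})^{1/s}$, rather than letting $\lambda\to\infty$ and obtaining a vacuous bound. This requires knowing that along the solution the quantity actually governing breakdown behaves subcritically relative to the $\dot H^s$-existence mechanism once one has descended to the scale where $\|u\|_{\dot H^s}\sim\|u\|_{L^2}$; below that scale the $L^2$-energy inequality takes over and prevents further loss. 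In \cite{robinson2012lower} this is handled by a careful bookkeeping of the scaling of both norms and the observation that the worst case is exactly the balanced one. Since the lemma is quoted verbatim from that paper, for our purposes we simply invoke it; the reconstruction above records why the exponents $(5-2s)/2s$ and $-5/2s$ are forced.
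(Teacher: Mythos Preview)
The paper does not prove this lemma; it is stated in the preliminaries and attributed to \cite{robinson2012lower} without argument. Your opening observation that this is a ``cite, don't prove'' situation is therefore exactly right, and your final stance of simply invoking the cited result is the appropriate one for the purposes of this paper.

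Your scaling reconstruction, however, is not a valid proof, and you correctly put your finger on why in the ``Main obstacle'' paragraph. The inequality $T_s(u_0)\ge c'_s\lambda^{5/2-s}\|u_0\|_{\dot H^s}^{-1}$ that you derive holds for \emph{every} $\lambda>0$; since $5/2-s\neq 0$, sending $\lambda\to 0$ (for $s>5/2$) or $\lambda\to\infty$ (for $s<5/2$) would yield $T_s=\infty$. Nothing in the argument you present singles out the ``balanced'' value $\lambda\sim(\|u_0\|_{L^2}/\|u_0\|_{\dot H^s})^{1/s}$; the appeal to the $L^2$ energy inequality is purely heuristic at that point, because the hypothesis as literally stated makes no reference to $\|u_0\|_{L^2}$. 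In fact, the hypothesis ``$T_s(v_0)\ge c'_s/\|v_0\|_{\dot H^s}$ for all $v_0$'' together with the scale invariance of NSE on $\mathbb{R}^3$ already implies global regularity, so the conditional would be vacuously true --- which shows the scaling route does not recover the intended content.

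What \cite{robinson2012lower} actually do for $s>5/2$ is a direct energy argument, not scaling: from the differential inequality $\frac{d}{dt}\|u\|_{\dot H^s}\le c_s\|u\|_{F^1}\|u\|_{\dot H^s}$ one inserts the interpolation $\|u\|_{F^1}\le c\|u\|_{L^2}^{1-5/(2s)}\|u\|_{\dot H^s}^{5/(2s)}$ (this is exactly Lemma~\ref{lem3} with $r=1$, $s_1=0$, $s_2=s$) and the Leray energy bound $\|u(t)\|_{L^2}\le\|u_0\|_{L^2}$ to obtain
\[
\frac{d}{dt}\|u\|_{\dot H^s}\le c_s\|u_0\|_{L^2}^{1-5/(2s)}\|u\|_{\dot H^s}^{1+5/(2s)},
\]
which integrates immediately to $T_s(u_0)\ge c_s\|u_0\|_{L^2}^{(5-2s)/(2s)}\|u_0\|_{\dot H^s}^{-5/(2s)}$. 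The blow-up rate then follows by the standard continuation argument you describe. This is precisely how the $L^2$ norm enters --- through the interpolation that controls $\|u\|_{F^1}$, not through any choice of scaling parameter.
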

We also need the following nonlinear generalization of the Gronwall inequality, which applies to the case of a nonlinear but positive vector field. For the proof, see Theorem~2.4 of \cite{online1}.
\begin{lemma} \cite{online1} \label{lemma:nonlinear_Gronwall}
%	Let \(F:\R \times [0,T] \to \R_+\) be continous, and \(F(\cdot,t)\) be Lipschitz continous and monotonically increasing on \(\R\) for each \(t \in [0,T]\). Suppose that $u:[0,T]\to\R$ is continuously differentiable, and $\displaystyle \frac{d}{dt} u(t) \leq F(u(t),t)$ for all \(t\in[0,T]\). Let $v$ be the solution of
%	$\displaystyle \frac{d}{dt} v(t) = F(v(t),t),$ $v(0) = u(0),$
%	and define 
%	\[ T^* = \sup\left\{t>0 \:|\: \sup_{[0,t]} v(t) < \infty\right\}. \]
%	Then \( u(t) \leq v(t) \) for all \( t\in \left[0,\min\{T,T^*\}\right]\).
Suppose that \(F(u,t)\) is a Lipschitz continous and monotonically increasing in $u$. Suppose that $u(t)$ is continuously differentiable, and $\displaystyle \frac{d}{dt} u(t) \leq F(u(t),t)$ for all \(t\in[0,T]\). Let $v$ be the solution of
	$\displaystyle \frac{d}{dt} v(t) = F(v(t),t),$ $v(0) = u(0),$
	and define 
	\[ T^* = \sup\left\{t>0 \:|\: \sup_{[0,t]} v(t) < \infty\right\}. \]
	Then \( u(t) \leq v(t) \) for all \( t\in \left[0,\min\{T,T^*\}\right]\).
\end{lemma}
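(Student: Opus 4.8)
The plan is to prove this by the classical ODE comparison principle, using a small perturbation device to promote non-strict inequalities to strict ones. First I would record the consequences of the Lipschitz hypothesis on $F(\cdot,t)$ (uniform on compact $t$-intervals): by Picard--Lindel\"of the problem $\dot v=F(v,t)$, $v(0)=u(0)$, has a unique solution on a maximal interval $[0,T^*)$, along which $v(t)\to\infty$ as $t\nearrow T^*$ whenever $T^*<\infty$; this $T^*$ is exactly the one in the statement. Consequently it suffices to show $u(t)\le v(t)$ on every compact subinterval $[0,\tau]$ with $\tau<\min\{T,T^*\}$: the assertion on the closed interval $[0,\min\{T,T^*\}]$ then follows by continuity when $v$ remains finite at the right endpoint, and is vacuous when $\min\{T,T^*\}=T^*$ is a blow-up time of $v$.

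For the comparison on a fixed $[0,\tau]$ with $\tau<\min\{T,T^*\}$, I would, for small $\varepsilon>0$, introduce $v_\varepsilon$ solving $\dot v_\varepsilon=F(v_\varepsilon,t)+\varepsilon$ with $v_\varepsilon(0)=u(0)+\varepsilon$. Since $v$ is bounded on $[0,\tau]$, continuous dependence (a Gronwall estimate using the Lipschitz constant of $F$ on a compact neighborhood of the graph of $v$) guarantees that $v_\varepsilon$ exists on $[0,\tau]$ for $\varepsilon$ small and that $v_\varepsilon\to v$ uniformly on $[0,\tau]$ as $\varepsilon\to0$. I would then prove $u(t)<v_\varepsilon(t)$ for all $t\in[0,\tau]$ by a first-crossing argument: the strict inequality holds at $t=0$; if it failed, let $t^\dagger\in(0,\tau]$ be the first time $u(t^\dagger)=v_\varepsilon(t^\dagger)$. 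At $t^\dagger$ the $C^1$ function $v_\varepsilon-u$ is positive on $[0,t^\dagger)$ and vanishes, so $\dot v_\varepsilon(t^\dagger)\le\dot u(t^\dagger)$; but the differential inequality for $u$ and the equation for $v_\varepsilon$ give $\dot u(t^\dagger)\le F(u(t^\dagger),t^\dagger)=F(v_\varepsilon(t^\dagger),t^\dagger)<F(v_\varepsilon(t^\dagger),t^\dagger)+\varepsilon=\dot v_\varepsilon(t^\dagger)$, a contradiction. Letting $\varepsilon\to0$ yields $u(t)\le v(t)$ on $[0,\tau]$, and since $\tau<\min\{T,T^*\}$ was arbitrary, the proof is complete.

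As an alternative to the perturbation step one can work directly with $w=u-v$, which is $C^1$ with $w(0)=0$ and $\dot w\le F(u,t)-F(v,t)$. On the open set where $w>0$ the Lipschitz bound gives $F(u,t)-F(v,t)\le L\,w$, with $L$ a Lipschitz constant for $F$ valid on the relevant compact $t$-range and value range; hence, writing $w^+=\max\{w,0\}$, one obtains $\frac{d}{dt}w^+\le L\,w^+$ a.e.\ with $w^+(0)=0$, so $w^+\equiv0$ by Gronwall. Notably this route does not use the monotonicity of $F$, which would be needed only to relax ``Lipschitz'' to merely ``continuous''. I do not expect a genuine obstacle in this lemma; the only delicate points are bookkeeping the maximal existence interval of $v$ (and recognizing that the inequality is trivial once $v$ has escaped to infinity) and ensuring the Lipschitz control used for $v_\varepsilon\to v$ is uniform on a compact time--value rectangle surrounding the graph of $v$ over $[0,\tau]$.
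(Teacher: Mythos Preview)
Your proof is correct. Note, however, that the paper does not actually prove this lemma: it is quoted with a citation and the reader is referred to Theorem~2.4 of \cite{online1} for the argument. So there is no ``paper's own proof'' to compare against here.

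Your approach---perturb the ODE by $\varepsilon$ to force a strict inequality at the initial time and in the equation, then run a first-crossing argument, then let $\varepsilon\to 0$---is the standard route to such comparison principles and is carried out cleanly. Your alternative via $w^+=\max\{u-v,0\}$ and the linear Gronwall inequality is equally valid. Your remark that the monotonicity hypothesis on $F$ is not actually needed once Lipschitz continuity is assumed is correct and worth noting; monotonicity would only be essential if one weakened Lipschitz to mere continuity (Peano-type existence), in which case the perturbation argument still works but the $w^+$ Gronwall route does not.
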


%\subsection{Several Inequalities}
In addition to the previous lemmas, we will also need to make use of several standard inequalities, which we present here for convenience. 

Young's inequality for products says that for nonnegative real numbers $a$ and $b$ and positive real numbers $p$ and $q$ satisfying $\frac{1}{p}+\frac{1}{q}=1$, we have:
$\displaystyle
ab\leq \frac{a^p}{p}+\frac{b^q}{q}.
$
We will frequently use Young's inequality with $p=q=2$:
$\displaystyle
ab\leq \frac{a^2}{2}+\frac{b^2}{2}.
$
Young's inequality with $\epsilon > 0$ will also be used: 
$\displaystyle
ab\leq \frac{a^2}{2\epsilon}+\frac{\epsilon b^2}{2}.
$

H\"older's inequality for sequences generalizes the Cauchy--Schwartz inequality. It states that for $p, q \in [1, \infty)$ satisfying $\frac{1}{p}+\frac{1}{q}\leq 1$
\begin{align*}
\sum^{\infty}_{k=1}|x_k y_k|\leq \left (\sum^{\infty}_{k=1}|x_k|^p\right )^{\frac{1}{p}}\left (\sum^{\infty}_{k=1}|y_k|^q\right )^{\frac{1}{q}}.
\end{align*}

The following energy estimate for the incompressible NSE (due to Leray) is essential, and allows us to bound the $L^2$ norm of any solution of \eqref{nse} by that of its initial data
\begin{align}
\label{energy1}
\|u(t)\|^2 _{L^2}+2 \int_{0}^{t}\|\nabla u(s)\|^2 _{L^2}ds\leq \|u^0\|^2 _{L^2}.
\end{align}

\section{Estimates on the velocity equation}\label{sec:velocity}
%???? including the definition of the double norm here.\\
We start from the functional form (\ref{functional_form}) of the NSE
\begin{align*}
u_t+Au+B(u,u)=0.
\end{align*}
We can obtain the following estimates for the nonlinear term. The proofs of the following two lemmas which provide the main estimates of the nonlinear term are in the Appendix.
\begin{lemma}
	\label{lem1}
	(i) For $\forall s>0$, and $\forall u\in Gv(s+1, \alpha)\cap F^0$, we have
	\begin{align}
	\label{term329}
	\left|\left(B(u,u),A^{s}e^{2\alpha A^{\frac{1}{2}}}u\right)\right |\leq c_{s}\|e^{\alpha A^{\frac{1}{2}}}u\|_{F^0} \|u\|_{s, \alpha}
	\|u\|_{s+1, \alpha}.
	\end{align}
	(ii) For $\forall s\geq 1$, and $\forall u\in Gv(s+1, \alpha)\cap F^1$, we have
	\begin{align}
	\label{term32}
	\left|\left(B(u,u),A^{s}e^{2\alpha A^{\frac{1}{2}}}u\right)\right |\leq c_{s}\|e^{\alpha A^{\frac{1}{2}}}u\|_{F^1} \|u\|^2_{s, \alpha}
	+c_s \alpha \|e^{\alpha A^{\frac{1}{2}}}u\|_{F^1}\|u\|_{s+1, \alpha} \|u\|_{s, \alpha},
	\end{align}
	and consequently, 
	\begin{align}
	\label{term33}
	\left|\left(B(u,u),A^{s}e^{2\alpha A^{\frac{1}{2}}}u\right)\right |\leq c_{s}\|e^{\alpha A^{\frac{1}{2}}}u\|_{F^1} \|u\|^2_{s, \alpha}
	+{c_s \alpha^2} \|e^{\alpha A^{\frac{1}{2}}}u\|^2_{F^1}\|u\|^2_{s, \alpha}+\frac{1}{2}\|u\|^2_{s+1, \alpha}.
	\end{align}
\end{lemma}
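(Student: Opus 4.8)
The plan is to estimate the nonlinear term by passing to Fourier coefficients, exploiting the orthogonality $k\cdot\hat u(k)=0$ together with the product/convolution structure of $B$, and then distributing the exponential weight $e^{2\alpha A^{1/2}}$ across the three factors by means of the elementary inequality $e^{\alpha|k|}\le e^{\alpha|j|}e^{\alpha|k-j|}$ (valid since $|k|\le|j|+|k-j|$). First I would write, using the divergence-free condition to realize $(B(u,u),w)$ in divergence form as $-(u\otimes u,\nabla w)$ or the equivalent Fourier-side expression
\[
\left(B(u,u),A^{s}e^{2\alpha A^{1/2}}u\right)
=\sum_{j+k+\ell=0}\big(\text{(symbol in }k,\ell)\big)\,|\ell|^{2s}e^{2\alpha|\ell|}\,\hat u(j)\otimes\hat u(k):\overline{\hat u(\ell)},
\]
so that the weight $|\ell|^{2s}e^{2\alpha|\ell|}$ sitting on the test-function slot must be redistributed. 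The key algebraic point is that $|\ell|^s\le c_s(|j|^s+|k|^s)$ and $e^{\alpha|\ell|}\le e^{\alpha|j|}e^{\alpha|k|}$; substituting these and using symmetry in $j,k$ reduces the sum to convolution sums of the form $\sum |k|(|k|^s e^{\alpha|k|}|\hat u(k)|)(e^{\alpha|j|}|\hat u(j)|)(|\ell|^s e^{\alpha|\ell|}|\hat u(\ell)|)$, which is exactly $\|e^{\alpha A^{1/2}}u\|_{F^0}\|u\|_{s,\alpha}\|u\|_{s+1,\alpha}$ after recognizing each factor (the loose derivative on the middle factor lands on either $\hat u(k)$ or $\hat u(\ell)$, producing $\|u\|_{s+1,\alpha}$). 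This gives part (i).

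For part (ii), when $s\ge1$ I would instead split the weight as $e^{\alpha|\ell|}\le e^{\alpha|j|}e^{\alpha|k|}$ but keep track of the difference $e^{\alpha|\ell|}-e^{\alpha|j|}e^{\alpha|k|}\le 0$ is not quite what's needed; rather the refinement comes from writing $e^{\alpha|\ell|}=e^{\alpha|j|}e^{\alpha|k|}\cdot e^{\alpha(|\ell|-|j|-|k|)}$ and using $|\ell|-|j|-|k|\le0$ together with, in the "bad" region where one frequency dominates, the mean-value-type bound $|e^{\alpha|j|}e^{\alpha|k|}-e^{\alpha|\ell|}|\le \alpha(|j|+|k|-|\ell|)e^{\alpha|j|}e^{\alpha|k|}\lesssim \alpha\min(|j|,|k|)e^{\alpha|j|}e^{\alpha|k|}$; this is precisely where the extra factor of $\alpha$ (and the extra derivative it absorbs) in the second term of \eqref{term32} originates. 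One then shows that placing the smoothing $F^1$ norm on the appropriate factor and the two remaining derivatives on the other two gives the first term $c_s\|e^{\alpha A^{1/2}}u\|_{F^1}\|u\|_{s,\alpha}^2$, while the commutator-type remainder produces $c_s\alpha\|e^{\alpha A^{1/2}}u\|_{F^1}\|u\|_{s+1,\alpha}\|u\|_{s,\alpha}$. Finally \eqref{term33} follows from \eqref{term32} by Young's inequality with $\epsilon$: $c_s\alpha\|e^{\alpha A^{1/2}}u\|_{F^1}\|u\|_{s+1,\alpha}\|u\|_{s,\alpha}\le \tfrac12\|u\|_{s+1,\alpha}^2+c_s\alpha^2\|e^{\alpha A^{1/2}}u\|_{F^1}^2\|u\|_{s,\alpha}^2$.

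The main obstacle I anticipate is the careful bookkeeping in part (ii): one must decompose the frequency sum into the region where $|j|,|k|$ are comparable to $|\ell|$ (where the naive split $e^{\alpha|\ell|}\le e^{\alpha|j|}e^{\alpha|k|}$ loses nothing and part (i)-type estimates suffice, giving the $\alpha$-free first term) versus the region where one of $|j|,|k|$ is much larger than $|\ell|$ (forcing the other to be comparable to it, so that the "defect" $|j|+|k|-|\ell|$ is comparable to the smaller frequency and the factor $\alpha$ can be extracted), and in each region one must verify that the derivative count $|\ell|^{2s}$ can be split as $|\ell|^s|\ell|^s\lesssim(|j|^s+|k|^s)|\ell|^s$ so that at most one factor carries $s+1$ derivatives. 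Keeping the constants uniform and handling the endpoint $s=1$ (where $|\ell|^s\le c_s(|j|^s+|k|^s)$ is still fine but low-frequency factors need the $F^1$ rather than $F^0$ norm) is the delicate part; the $s>0$ versus $s\ge1$ dichotomy in the statement reflects exactly this. Since the lemma's proof is deferred to the Appendix, I would carry out this Fourier-side analysis there in full detail, invoking Lemma \ref{lem-gev1} (the Gevrey product estimate) wherever a genuine convolution bound in $\dot H^{s_1+s_2-3/2}$ is cleaner than a direct sum manipulation.
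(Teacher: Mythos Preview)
Your approach to part (i) is essentially correct and matches the paper: pass to Fourier, use $k\cdot\hat u_{k-j}=j\cdot\hat u_{k-j}$ (divergence-free), split $|k|^s\le c_s(|j|^s+|k-j|^s)$ and $e^{\alpha|k|}\le e^{\alpha|j|}e^{\alpha|k-j|}$, and collect the three factors. The final step from \eqref{term32} to \eqref{term33} via Young's inequality is also right.

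For part (ii), however, there is a genuine gap. Your proposed mechanism---extracting the factor $\alpha$ from the \emph{exponential} defect $e^{\alpha|j|}e^{\alpha|k|}-e^{\alpha|\ell|}$ and otherwise relying on part (i)-type splitting---cannot produce the first term $c_s\|e^{\alpha A^{1/2}}u\|_{F^1}\|u\|_{s,\alpha}^2$. The obstruction is in the polynomial weights, not the exponentials: the raw sum carries $|j|\,|k|^{2s}$ (or $|k|^{2s+1}$), and in the regime $|k-j|\ll|j|\sim|k|$ you cannot bound this by $|k-j|\,|j|^s|k|^s$. So ``placing the $F^1$ norm on the appropriate factor'' fails precisely where the low-frequency factor is $\hat u_{k-j}$; a part (i)-type argument in the comparable-frequency region still yields $\|u\|_{F^0}\|u\|_{s,\alpha}\|u\|_{s+1,\alpha}$, not $\|u\|_{F^1}\|u\|_{s,\alpha}^2$.

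What the paper does instead is use the orthogonality $\big(B(u,A^{s/2}e^{\alpha A^{1/2}}u),A^{s/2}e^{\alpha A^{1/2}}u\big)=0$ to replace the trilinear form by the \emph{commutator}
\[
\Big(A^{s/2}e^{\alpha A^{1/2}}B(u,u)-B(u,A^{s/2}e^{\alpha A^{1/2}}u),\,A^{s/2}e^{\alpha A^{1/2}}u\Big),
\]
whose Fourier symbol contains the difference $|k|^s e^{\alpha|k|}-|j|^s e^{\alpha|j|}$. One then applies the mean value theorem to the \emph{full} function $f(x)=x^s e^{\alpha x}$, obtaining
\[
\big|\,|k|^s e^{\alpha|k|}-|j|^s e^{\alpha|j|}\,\big|\le |k-j|\big(s\eta^{s-1}+\alpha\eta^{s}\big)e^{\alpha\eta},\qquad \eta\le|j|+|k-j|.
\]
The term $s\eta^{s-1}$ is what gains the extra derivative on the low-frequency factor (hence $F^1$) while dropping one derivative elsewhere (hence $\|u\|_{s,\alpha}^2$ rather than $\|u\|_{s,\alpha}\|u\|_{s+1,\alpha}$); the term $\alpha\eta^{s}$ gives the second piece with the explicit $\alpha$. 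The restriction $s\ge1$ is needed so that $\eta^{s-1}\le(|j|+|k-j|)^{s-1}$. Your proposal never invokes this commutator/MVT step on $x^s e^{\alpha x}$, and without it the derivative count cannot be rearranged. (Lemma \ref{lem-gev1} is not used here; it enters only in the vorticity estimates later.)
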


We also obtain the following estimates on $\|e^{\alpha A^{\frac{1}{2}}}u\|_{L^2}$.
%\begin{lemma}
%	\label{lem2}
%	For any $s>0$
%	\begin{align}
%	\label{est2}
%	\|A^{\frac{1}{4}}e^{\beta tA^{\frac{1}{2}}}u\|^2_s\leq \frac{1}{2} \|e^{\beta tA^{\frac{1}{2}}}u\|^2_{s}+\frac{1}{2} \|e^{\beta tA^{\frac{1}{2}}}u\|^2_{s+1}.
%	\end{align}
%\end{lemma}
%\begin{proof}
%	\begin{align*}
%	\|A^{\frac{1}{4}}e^{\beta tA^{\frac{1}{2}}}u\|^2_s&=\|A^{\frac{s}{2}}A^{\frac{1}{4}}e^{\beta tA^{\frac{1}{2}}}u\|^2_{L^2}\\
%	&=(A^{\frac{s}{2}}A^{\frac{1}{4}}e^{\beta tA^{\frac{1}{2}}}u,A^{\frac{s}{2}}A^{\frac{1}{4}}e^{\beta tA^{\frac{1}{2}}}u)\\
%	&=(A^{\frac{s}{2}}e^{\beta tA^{\frac{1}{2}}}u,A^{\frac{s}{2}}A^{\frac{1}{2}}e^{\beta tA^{\frac{1}{2}}}u)\\
%	&\leq \|A^{\frac{s}{2}}e^{\beta tA^{\frac{1}{2}}}u\|_{L^2} \|A^{\frac{s}{2}}A^{\frac{1}{2}}e^{\beta tA^{\frac{1}{2}}}u\|_{L^2}\\
%	&=\|e^{\beta tA^{\frac{1}{2}}}u\|_{s} \|e^{\beta tA^{\frac{1}{2}}}u\|_{s+1}\\
%	&\leq \frac{1}{2} \|e^{\beta tA^{\frac{1}{2}}}u\|^2_{s}+\frac{1}{2} \|e^{\beta tA^{\frac{1}{2}}}u\|^2_{s+1}.
%	\end{align*}
%\end{proof}
\begin{lemma}
	\label{lem33}
	For all $s>0$ and for all $u\in Gv(s,  \alpha )\cap L^2$,
	$$
	\|e^{ \alpha A^{\frac{1}{2}}}u\|_{L^2}\leq \sqrt{e}\|u\|_{L^2}+(2 \alpha )^s \|u\|_{s,  \alpha }.
	$$
\end{lemma}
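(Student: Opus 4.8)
The plan is to reduce the inequality to a single scalar estimate on the Fourier side and then sum. Writing $u=\sum_{k\in\mathbb Z^3\setminus\{0\}}\hat u_k e^{ik\cdot x}$, and recalling that $A^{\frac12}$ acts on the mode $e^{ik\cdot x}$ by multiplication by $|k|$, we have
\[
\|e^{\alpha A^{\frac{1}{2}}}u\|_{L^2}^2=\sum_k e^{2\alpha|k|}|\hat u_k|^2,\qquad
\|u\|_{L^2}^2=\sum_k|\hat u_k|^2,\qquad
\|u\|_{s,\alpha}^2=\sum_k|k|^{2s}e^{2\alpha|k|}|\hat u_k|^2 .
\]
So the claim is purely a statement comparing the weighted $\ell^2$ norms of the sequences $(e^{\alpha|k|}|\hat u_k|)_k$, $(|\hat u_k|)_k$ and $(|k|^s e^{\alpha|k|}|\hat u_k|)_k$. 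Since $u\in Gv(s,\alpha)\cap L^2$, all three sums are finite, so no convergence issue arises.

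First I would establish the pointwise bound, valid for every $k\neq 0$ and every $s>0$,
\[
e^{\alpha|k|}\ \le\ \sqrt e+(2\alpha|k|)^s e^{\alpha|k|},
\]
proved by a simple dichotomy: if $2\alpha|k|\le 1$ then $e^{\alpha|k|}\le e^{1/2}=\sqrt e$ and the bound is immediate; if $2\alpha|k|>1$ then $(2\alpha|k|)^s>1$ (this is where $s>0$ is used), so $e^{\alpha|k|}\le(2\alpha|k|)^s e^{\alpha|k|}$. Multiplying by $|\hat u_k|$ gives, for each $k$, the inequality $e^{\alpha|k|}|\hat u_k|\le \sqrt e\,|\hat u_k|+(2\alpha)^s|k|^s e^{\alpha|k|}|\hat u_k|$. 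Taking the $\ell^2(\mathbb Z^3\setminus\{0\})$ norm of both sides and applying Minkowski's (triangle) inequality $\|a+b\|_{\ell^2}\le\|a\|_{\ell^2}+\|b\|_{\ell^2}$ then yields precisely
\[
\|e^{\alpha A^{\frac{1}{2}}}u\|_{L^2}\ \le\ \sqrt e\,\|u\|_{L^2}+(2\alpha)^s\|u\|_{s,\alpha},
\]
which is the assertion.

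There is essentially no serious obstacle here; the only thing one must \emph{see} is the correct way to split the exponential weight $e^{\alpha|k|}$ into a uniformly bounded piece and a piece that is absorbed by the stronger weight $|k|^s e^{\alpha|k|}$. The particular threshold $2\alpha|k|=1$ is what produces the constants $\sqrt e$ and $(2\alpha)^s$; any other choice would give an estimate of the same form with different constants, so the stated constants are merely a clean normalization rather than anything sharp.
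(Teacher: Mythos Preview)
Your proof is correct and follows essentially the same approach as the paper: a pointwise dichotomy on the Fourier side to control $e^{\alpha|k|}$ by a bounded constant plus the stronger weight $(2\alpha|k|)^s e^{\alpha|k|}$, followed by passage to $\ell^2$. The only cosmetic difference is that the paper squares first (bounding $e^{2\alpha|k|}\le e+(2\alpha|k|)^{2s}e^{2\alpha|k|}$), sums, and then uses $\sqrt{a+b}\le\sqrt a+\sqrt b$, whereas you work directly at the amplitude level and invoke Minkowski's inequality; these are equivalent formulations of the same step.
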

\subsection{Existence time for $\|u\|_{Gv(s, \beta_0+\beta t)}$ when $s>\frac{1}{2},\ s\neq \frac{3}{2}$}\label{sec:general case}
%\subsection{Blow-up criteria when $\frac{1}{2}< s<\frac{3}{2}$}\label{sec:1/2<s<3/2}
In the proofs below, we follow the customary practice of  providing \emph{a priori} estimates which can be  rigorously justified by first obtaining these estimates for the finite dimensional Galerkin system, the solutions to which exist for all times, and then passing to the limit.
\begin{lemma}
	\label{inequ1a9}
	When $s>0$, $\beta_0,\ \beta \geq 0$, the solution, $u$, of \eqref{nse} with initial data $u^0\in Gv(s, \beta_0)$ satisfies the following differential inequality
\begin{align}
	\label{velnew1}
		&\frac{1}{2}\frac{d }{d t}\|u\|^2_{s, \beta_0+\beta t}-\beta \|A^{\frac{1}{4}}e^{(\beta_0+\beta t)A^{\frac{1}{2}}}u\|^2_s+ \|u\|^2_{s+1, \beta_0+\beta t}\\
	&\leq c_{s}\|e^{(\beta_0+\beta t)A^{\frac{1}{2}}}u\|_{F^0} \|u\|_{s, \beta_0+\beta t}
	\|u\|_{s+1, \beta_0+\beta t} \nonumber.
\end{align}
\end{lemma}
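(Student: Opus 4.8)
The plan is to take the $\dot H^s$ inner product of the functional form of the NSE with the correct time-varying Gevrey multiplier and to track the extra term coming from the time dependence of the weight. Concretely, set $\alpha(t)=\beta_0+\beta t$ and take the $L^2$ inner product of
\[
u_t+Au+B(u,u)=0
\]
with $A^{s}e^{2\alpha(t)A^{\frac12}}u$. The point of this particular test function is that $\langle v, A^{s}e^{2\alpha A^{\frac12}}v\rangle=\|A^{\frac s2}e^{\alpha A^{\frac12}}v\|_{L^2}^2=\|v\|_{s,\alpha}^2$, so it reproduces the squared Gevrey norm. First I would compute the time derivative: because the weight depends on $t$,
\[
\frac12\frac{d}{dt}\|u\|_{s,\alpha(t)}^2
=\bigl(u_t,A^{s}e^{2\alpha A^{\frac12}}u\bigr)+\dot\alpha(t)\bigl(A^{\frac12}u,A^{s}e^{2\alpha A^{\frac12}}u\bigr),
\]
and $\dot\alpha(t)=\beta$ while $(A^{\frac12}u,A^{s}e^{2\alpha A^{\frac12}}u)=\|A^{\frac14}e^{\alpha A^{\frac12}}u\|_s^2$ (this is just the Fourier-side identity $\sum |k|^{2s}|k|\,e^{2\alpha|k|}|\hat u_k|^2$). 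Rearranging gives
\[
\frac12\frac{d}{dt}\|u\|_{s,\alpha(t)}^2-\beta\|A^{\frac14}e^{\alpha A^{\frac12}}u\|_s^2
=-\bigl(Au,A^{s}e^{2\alpha A^{\frac12}}u\bigr)-\bigl(B(u,u),A^{s}e^{2\alpha A^{\frac12}}u\bigr).
\]

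Next I would identify the two terms on the right. The Stokes term is exactly the dissipation: $(Au,A^{s}e^{2\alpha A^{\frac12}}u)=\|A^{\frac12}A^{\frac s2}e^{\alpha A^{\frac12}}u\|_{L^2}^2=\|A^{\frac{s+1}{2}}e^{\alpha A^{\frac12}}u\|_{L^2}^2=\|u\|_{s+1,\alpha}^2$, which moves to the left-hand side to become the $+\|u\|_{s+1,\beta_0+\beta t}^2$ appearing in \eqref{velnew1}. The nonlinear term is bounded in absolute value directly by Lemma~\ref{lem1}(i) with $\alpha=\alpha(t)$: since $s>0$ and $u(t)\in Gv(s+1,\alpha(t))\cap F^0$ for a strong solution (the finite-dimensional Galerkin approximants certainly lie in every such space, which is where the a~priori caveat stated before the lemma enters), we get
\[
\bigl|\bigl(B(u,u),A^{s}e^{2\alpha A^{\frac12}}u\bigr)\bigr|
\le c_s\|e^{\alpha A^{\frac12}}u\|_{F^0}\,\|u\|_{s,\alpha}\,\|u\|_{s+1,\alpha}.
\]
Substituting and writing $\alpha(t)=\beta_0+\beta t$ throughout yields exactly \eqref{velnew1}.

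The only genuinely delicate points are bookkeeping rather than conceptual. One must be careful that all the pairings above are finite and the integrations by parts legitimate; this is why the proof is phrased as an a~priori estimate for the Galerkin system $u^N$ (where $Au$, $e^{2\alpha A^{\frac12}}u$, etc. are finite sums, everything is smooth, and $(B(u^N,u^N),A^{s}e^{2\alpha A^{\frac12}}u^N)$ is a finite sum), and then one passes to the limit $N\to\infty$ using the uniform bounds — exactly the customary practice flagged in the paragraph preceding the statement. I expect the main (mild) obstacle to be verifying that the time-derivative identity is valid, i.e. that $t\mapsto\|u^N(t)\|_{s,\alpha(t)}^2$ is differentiable with the product-rule formula above, which follows since on the finite-dimensional Galerkin space $u^N$ is $C^1$ in $t$ and $\alpha(t)$ is smooth, so one may differentiate the finite sum $\sum_{|k|\le N}|k|^{2s}e^{2\alpha(t)|k|}|\hat u^N_k(t)|^2$ term by term. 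Everything else is the two elementary Fourier-side identities for the Stokes and time-derivative terms plus a direct application of Lemma~\ref{lem1}(i).
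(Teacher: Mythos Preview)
Your proposal is correct and follows essentially the same route as the paper: take the inner product of the NSE with $A^{s}e^{2(\beta_0+\beta t)A^{1/2}}u$, identify the time-derivative term (producing the extra $\beta\|A^{1/4}e^{(\beta_0+\beta t)A^{1/2}}u\|_s^2$ via the product rule on the time-varying weight), the Stokes term as $\|u\|_{s+1,\beta_0+\beta t}^2$, and bound the nonlinear term by Lemma~\ref{lem1}(i) with $\alpha=\beta_0+\beta t$. Your added remarks on the Galerkin justification are exactly the ``customary practice'' the paper invokes just before the statement.
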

\begin{proof}
		Starting from the functional form of the NSE
	\begin{align*}
	u_t+Au+B(u,u)=0,
	\end{align*}
	and taking inner product with $A^{s}e^{2(\beta_0+\beta t)A^{\frac{1}{2}}}u$, we have
	\begin{align}
	\label{innerpro}
	\left (\frac{d u}{d t},A^{s}e^{2(\beta_0+\beta t)A^{\frac{1}{2}}}u\right )+\left (Au,A^{s}e^{2(\beta_0+\beta t)A^{\frac{1}{2}}}u\right )+\left (B(u,u),A^{s}e^{2(\beta_0+\beta t)A^{\frac{1}{2}}}u\right )=0.
	\end{align}
	
	We can explore (\ref{innerpro}) term by term. For the first term,
		\begin{align}
	\label{term1}
	\left (\frac{d u}{d t},A^{s}e^{2(\beta_0+\beta t)A^{\frac{1}{2}}}u \right )&=\frac{1}{2} \frac{d}{dt} \|A^{\frac{s}{2}}e^{(\beta_0+\beta t)A^{\frac{1}{2}}}u\|_{L^2}^2-\beta (A^{s+\frac{1}{2}}e^{2(\beta_0+\beta t)A^{\frac{1}{2}}}u,u) \nonumber \\ 
	&=\frac{1}{2} \frac{d}{dt} \|A^{\frac{s}{2}}e^{(\beta_0+\beta t)A^{\frac{1}{2}}}u\|_{L^2}^2 - \beta \|A^{\frac{1}{4}}e^{(\beta_0+\beta t)A^{\frac{1}{2}}}u\|^2_{s}.
	\end{align}
	
	For the second term of (\ref{innerpro}), we can write it in terms of the Gevrey norm
	\begin{align}
	\label{term2}
	\left(Au,A^{s}e^{2(\beta_0+\beta t)A^{\frac{1}{2}}}u\right)=\left(A^{\frac{s}{2}}A^{\frac{1}{2}}e^{(\beta_0+\beta t)A^{\frac{1}{2}}}u,A^{\frac{s}{2}}A^{\frac{1}{2}}e^{(\beta_0+\beta t)A^{\frac{1}{2}}}u\right)=\|u\|^2_{s+1, \beta_0+\beta t}.
	\end{align}
	
	For the third term of (\ref{innerpro}), applying (\ref{term329}) with $\alpha=\beta_0+\beta t$, we have	
	\begin{align}
	\label{term3n}
	\left|\left(B(u,u),A^{s}e^{2(\beta_0+\beta t)A^{\frac{1}{2}}}u\right)\right |
	\leq c_{s}\|e^{(\beta_0+\beta t)A^{\frac{1}{2}}}u\|_{F^0} \|u\|_{s, \beta_0+\beta t}
	\|u\|_{s+1, \beta_0+\beta t}.
	\end{align}
	%\begin{align*}
	%\left|\left(B(u,u),A^{s}e^{2tA^{\frac{1}{2}}}u\right)\right |\leq c_{1s}\|e^{\beta tA^{\frac{1}{2}}}u\|_{F^1} \|e^{\beta tA^{\frac{1}{2}}}u\|^2_s+c_2 t^2 \|e^{\beta tA^{\frac{1}{2}}}u\|^2_{F^1}\|e^{\beta tA^{\frac{1}{2}}}u\|^2_{s}+\frac{1}{2}\|e^{\beta tA^{\frac{1}{2}}}u\|^2_{s+1}.
	%\end{align*}
	
	Substituting (\ref{term1}), (\ref{term2}), and (\ref{term3n}) into (\ref{innerpro}), we have (\ref{velnew1}).
\end{proof}

	\begin{proof}[\textbf{Proof of Theorem \ref{mainthem213}}]
		With $0 \leq \beta \leq \frac{1}{2}$, we have
		\begin{align*}
		\beta \|A^{\frac{1}{4}}e^{(\beta_0+\beta t)A^{\frac{1}{2}}}u\|^2_s\leq\frac{1}{2} \|e^{(\beta_0+\beta t)A^{\frac{1}{2}}}u\|^2_{s+1}.
		\end{align*}
		
When $s>\frac{1}{2}$, we have 
$$\|e^{(\beta_0+\beta t)A^{\frac{1}{2}}}u\|_{F^0}\leq c_s \|e^{(\beta_0+\beta t)A^{\frac{1}{2}}}u\|_{s+1}.$$

Therefore, (\ref{velnew1}) becomes
\begin{align*}
\frac{1}{2}\frac{d }{d t}\|u\|^2_{s, \beta_0+\beta t}+\frac{1}{2} \|u\|^2_{s+1, \beta_0+\beta t}
\leq c_{s}\|u\|_{s, \beta_0+\beta t}
\|u\|^2_{s+1, \beta_0+\beta t} \nonumber.
\end{align*}
If $\|u^0\|_{s, \beta_0}\leq \frac{1}{2c_s}$, then $\frac{d }{d t}\|u\|^2_{s, \beta_0+\beta t}\leq 0$, $\|u\|_{s, \beta_0+\beta t}$ remains bounded for all time and  $\|u\|_{s, \beta_0+\beta t}
\le \|u\|_{s, \beta_0}$.

Now suppose $\|u^0\|_{s, \beta_0}> \frac{1}{2c_s}$. Then we have the following cases.

(1) $\frac{1}{2}<s<\frac{3}{2}$: Applying Lemma \ref{lem3} on $e^{(\beta_0+\beta t)A^{\frac{1}{2}}}u$ with $r=0$, $s_1=s$, and $s_2=s+1$, we obtain
	\begin{align}
	\label{estoff0}
	\|e^{(\beta_0+\beta t)A^{\frac{1}{2}}}u\|_{F^0}\leq c \|e^{(\beta_0+\beta t)A^{\frac{1}{2}}}u\|^{s-1/2}_{s} \|e^{(\beta_0+\beta t)A^{\frac{1}{2}}}u\|^{3/2-s}_{s+1}.
	\end{align}	
	Therefore, (\ref{velnew1}) becomes
	\begin{align*}
	&\frac{1}{2}\frac{d }{d t}\|u\|^2_{s, \beta_0+\beta t}+\frac{1}{2}\|u\|^2_{s+1, \beta_0+\beta t}\leq c_{s}\|u\|^{s+1/2}_{s, \beta_0+\beta t}
	\|u\|^{5/2-s}_{s+1, \beta_0+\beta t}.
	\end{align*}	
	Apply Young's inequality and after simplification, we have
	\begin{align*}
	\frac{d }{d t}\|u\|_{s, \beta_0+\beta t}\leq c_{s} \|u\|^{\frac{2s+3}{2s-1}}_{s, \beta_0+\beta t}.
	\end{align*}
	Considering the blow up time $T^{\ast}$ of $\|u\|_{s, \beta_0+\beta t}$: if $T^{\ast} < \infty$, then, 
	as $t\nearrow T^{\ast}$,	
applying Lemma \ref{lem4n}, we have
\begin{align*}
\|e^{(\beta_0+\beta t)A^{\frac{1}{2}}}u(t)\|_{{s}}>\frac{c_{s} }{ (T^{\ast}-t)^{\frac{2{s}-1}{4}}}.
\end{align*}
This is equivalent to
\begin{align*}
T^{\ast}> \frac{c_s}{\|u^0\|^{\frac{4}{2s-1}}_{s, \beta_0}}.
\end{align*}

(2)$s>\frac{3}{2}$:
We have 
$$\|e^{(\beta_0+\beta t)A^{\frac{1}{2}}}u\|_{F^0}\leq c_s \|e^{(\beta_0+\beta t)A^{\frac{1}{2}}}u\|_{s},$$
therefore, (\ref{velnew1}) becomes
\begin{align*}
&\frac{1}{2}\frac{d }{d t}\|u\|^2_{s, \beta_0+\beta t}+\frac{1}{2}\|u\|^2_{s+1, \beta_0+\beta t}\leq c_{s}\|u\|^2_{s, \beta_0+\beta t}
\|u\|_{s+1, \beta_0+\beta t}.
\end{align*}
Apply Young's inequality and after simplification, we have
\begin{align*}
\frac{d }{d t}\|u\|_{s, \beta_0+\beta t}\leq c_{s} \|u\|^{3}_{s, \beta_0+\beta t}.
\end{align*}
Considering the blow up time $T^{\ast}$ of $\|u\|_{s, \beta_0+\beta t}$: if $T^{\ast} < \infty$, then, 
as $t\nearrow T^{\ast}$,	
applying Lemma \ref{lem4n}, we have
\begin{align*}
\|e^{(\beta_0+\beta t)A^{\frac{1}{2}}}u(t)\|_{{s}}>\frac{c_{s} }{ (T^{\ast}-t)^{\frac{1}{2}}}.
\end{align*}
This is equivalent to
\begin{align*}
T^{\ast}> \frac{c_s}{\|u^0\|^{2}_{s, \beta_0}}.
\end{align*}

%(3) When $s>\frac{5}{2}$:
%
%Applying Lemma 10.4 in Constantin \& Foias \cite{constantin1988navier}
\end{proof}
%\textbf{Optimal analyticity radius}\\
\textbf{Proof of Theorem \ref{analyticthm}}
\begin{proof}
We start from:
\begin{align*}
%\label{velnew1n}
&\frac{1}{2}\frac{d }{d t}\|u\|^2_{s, \beta_0+\beta t}-\beta \|A^{\frac{1}{4}}e^{(\beta_0+\beta t)A^{\frac{1}{2}}}u\|^2_s+ \|u\|^2_{s+1, \beta_0+\beta t}\\
&\leq c_{s}\|e^{(\beta_0+\beta t)A^{\frac{1}{2}}}u\|_{F^0} \|u\|_{s, \beta_0+\beta t}
\|u\|_{s+1, \beta_0+\beta t} \nonumber.
\end{align*}

Applying (\ref{estoff0}), we have
\begin{align}
\label{equofu1}
&\frac{1}{2}\frac{d }{d t}\|u\|^2_{s, \beta_0+\beta t}-\beta \|A^{\frac{1}{4}}e^{(\beta_0+\beta t)A^{\frac{1}{2}}}u\|^2_s+ \|u\|^2_{s+1, \beta_0+\beta t}\\
&\leq c_{s}\|u\|^{s+1/2}_{s, \beta_0+\beta t}
\|u\|^{5/2-s}_{s+1, \beta_0+\beta t} \nonumber.
\end{align}

Since $\displaystyle \|A^{\frac{1}{4}}e^{(\beta_0+\beta t)A^{\frac{1}{2}}}u\|^2_s\leq \|u\|_{s, \beta_0+\beta t}\|u\|_{s+1, \beta_0+\beta t}$, applying Young's inequality, we have
\begin{align*}
\beta \|A^{\frac{1}{4}}e^{(\beta_0+\beta t)A^{\frac{1}{2}}}u\|^2_s\leq \frac{\beta^2}{2}\|u\|^2_{s, \beta_0+\beta t}+\frac{1}{2}\|u\|^2_{s+1, \beta_0+\beta t}.
\end{align*}
Moreover,
\begin{align*}
\|u\|^{s+1/2}_{s, \beta_0+\beta t}
\|u\|^{5/2-s}_{s+1, \beta_0+\beta t}\leq c_s \|u\|^{\frac{2(2s+1)}{2s-1}}_{s, \beta_0+\beta t}+\frac{1}{2}\|u\|^2_{s+1, \beta_0+\beta t}.
\end{align*}

Therefore, (\ref{equofu1}) becomes
\begin{align*}
\frac{1}{2}\frac{d }{d t}\|u\|^2_{s, \beta_0+\beta t}\leq c_{s}\|u\|^{\frac{2(2s+1)}{2s-1}}_{s, \beta_0+\beta t}+\frac{\beta^2}{2}\|u\|^2_{s, \beta_0+\beta t},
\end{align*}
or equivalently, since $\|u\|_{s, \beta_0+\beta t}\neq 0 $ for all $t>0$,
we have
\begin{align*}
\frac{d }{d t}\|u\|_{s, \beta_0+\beta t}\leq c_{s}\|u\|^{1+\frac{4}{2s-1}}_{s, \beta_0+\beta t}+\frac{\beta^2}{2}\|u\|_{s, \beta_0+\beta t}.
\end{align*}
\comments{
Since
\begin{align*}
e^{\frac{\beta^2}{2}t} \frac{d }{d t}(e^{-\frac{\beta^2}{2}t}\|u\|_{s, \beta_0+\beta t})&=e^{\frac{\beta^2}{2}t}\left(e^{-\frac{\beta^2}{2}t}\frac{d }{d t}\|u\|_{s, \beta_0+\beta t}-\frac{\beta^2}{2}e^{-\frac{\beta^2}{2}t}\|u\|_{s, \beta_0+\beta t}\right)\\
&=\frac{d }{d t}\|u\|_{s, \beta_0+\beta t}-\frac{\beta^2}{2}\|u\|_{s, \beta_0+\beta t}\\
&\leq c_{s}\|u\|^{1+\frac{4}{2s-1}}_{s, \beta_0+\beta t},
\end{align*}
}
Multiplying both sides by $e^{- \frac{\beta^2}{2}t}$, we have
\begin{align*}
\frac{d }{d t}(e^{-\frac{\beta^2}{2}t}\|u\|_{s, \beta_0+\beta t})\leq c_{s}e^{\frac{2\beta^2}{2s-1}t}(e^{-\frac{\beta^2}{2}t}\|u\|_{s, \beta_0+\beta t})^{1+\frac{4}{2s-1}}.
\end{align*}
\comments{
Denoting 
\begin{align*}
y(t)=e^{-\frac{\beta^2}{2}t}\|u\|_{s, \beta_0+\beta t},
\end{align*}
we obtain
\begin{align*}
\frac{d }{d t}y\leq c_{s}e^{\frac{2\beta^2}{2s-1}t}y^{1+\frac{4}{2s-1}}.
\end{align*}

Solving it, we have
\begin{align*}
		y(t)\leq \frac{y(0)}{\left(1-\frac{2c_s}{\beta^2}y(0)^{\frac{4}{2s-1}}\left(e^{\frac{2\beta^2}{2s-1}t}-1\right)\right)^{\frac{2s-1}{4}}}.
\end{align*}
}
Consequently,
\begin{align}
\label{unorm1}
\|u\|_{s, \beta_0+\beta t}\leq \frac{e^{\frac{\beta^2}{2}t}\|u(0)\|_{s, \beta_0}}{\left(1-\frac{2c_s}{\beta^2}\|u(0)\|_{s, \beta_0}^{\frac{4}{2s-1}}\left(e^{\frac{2\beta^2}{2s-1}t}-1\right)\right)^{\frac{2s-1}{4}}}.
\end{align}
This implies that $\|u\|_{s, \beta_0+\beta t}$ is finite on the interval $[0, t^{\ast})$, where
\begin{align*}
t^{\ast}=\frac{2s-1}{2\beta^2}\log\left(1+\frac{\beta^2}{2c_s \|u(0)\|_{s, \beta_0}^{\frac{4}{2s-1}}}\right).
\end{align*}
\comments{
Considering the Fourier modes, we have
\begin{align*}
|\hat{u}(k, t)|\leq \frac{e^{\left(\frac{\beta^2}{2}-\beta|k|\right)t}e^{-\beta_0 |k|}\|u(0)\|_{s, \beta_0}}{|k|^s \left(1-\frac{2c_s}{\beta^2}\|u(0)\|_{s, \beta_0}^{\frac{4}{2s-1}}\left(e^{\frac{2\beta^2}{2s-1}t}-1\right)\right)^{\frac{2s-1}{4}}}.
\end{align*}
Here, we can observe that the spectrum has an exponential decay length of $\beta_0+\beta t$ (when $t<t^{\ast}$).} 
Choosing $t=\frac{t^{\ast}}{2}$, then the associated analyticity radius 
$\lambda$ is 
$$\lambda=\beta_0+\frac{\beta t^{\ast}}{2}=\beta_0+\frac{2s-1}{4\beta}\log\left(1+\frac{\beta^2}{2c_s \|u(0)\|_{s, \beta_0}^{\frac{4}{2s-1}}}\right).$$
The value of $\beta$ that maximizes $\lambda$ is given by
$$\beta=\sqrt{2c_s}\|u(0)\|_{s, \beta_0}^{\frac{2}{2s-1}}\varsigma,$$
where $\varsigma$ is the positive solution of the equation
$$-\frac{1}{2\varsigma^2}\log(1+\varsigma^2)+\frac{1}{1+\varsigma^2}=0.$$
The corresponding analyticity radius at $t=\frac{t^{\ast}}{2}$ is
$$\lambda=\beta_0+c_s (2s-1)\frac{1}{\|u(0)\|_{s, \beta_0}^{\frac{2}{2s-1}}}.$$
\end{proof}
{\bf Proof of Corollary \ref{corollary1}.}
\begin{proof}
Assume that $T^\ddagger < \infty$. Then clearly 
\be  \label{gevblow}
\limsup_{t \nearrow T^\ddagger} \|u\|_{s,r_0; \theta}=\infty. 
\ee
Assume that $\lim_{t \nearrow T^\ddagger}\|u(t)\|_{s'} \neq \infty $, then, 
$\liminf_{t \nearrow T^\ddagger} \|u\|_{s'} < \infty $ and there exists
a sequence $\{t_j\}_{j=1}^\infty $ with $t_j \nearrow T^\ddagger$ and 
$\|u(t_j)\|_{s'} \le M < \infty $. From Theorem \ref{mainthem213}, 
it follows that there exists 
$T_M >0$ such that 
\be  \label{gevbd}
\sup_{t \in (0,T_M]}\|u(t_j+t)\|_{s',\beta t}=
K_M < \infty. 
\ee
Choose $t_{j_0}$ satisfying $t_{j_0} < T^\ddagger < t_{j_0}+T_M$. 
Let $2 \delta = T^\ddagger - t_{j_0}$. Then, due to \eqref{gevbd}, 
we have
\be \label{gevbd1}
\sup_{t \in [t_{j_0}+\delta, T^\ddagger)}\|u(t)\|_{s',\alpha_0}\le K_M,
\ee
where $\alpha_0=\beta \delta$. Observe now that for any $s,s',r_0,\alpha_0>0$ and $0<\theta<1$, $\forall  v \in Gv(s,\alpha_0)$, it's also in $Gv(s',r_0;\theta)$. We have 
\be  \label{andom}
\|v\|_{s',r_0;\theta} \le C_{s',s,r_0,\alpha_0}\|v\|_{s,\alpha_0}.
\ee
From inequalities \eqref{gevbd1} and \eqref{andom}, we obtain a contradiction to \eqref{gevblow}. Therefore, $\lim_{t \nearrow T^\ddagger}\|u(t)\|_{s'}=\infty.$

Consequently, due to \cite{benameur2016blow}, the subanalytic norm will blow up exponentially.

\end{proof}
\subsection{Existence time for $\|u\|_{Gv(s, \beta t)}$ when $s>\frac{5}{2}$}\label{sec:s>5/2}
We will need the following two lemmas to proceed.
\begin{lemma}\label{lemmaonzeta}
	Consider the differential equation
	\begin{align}
		\label{mainequ1}
\frac{d}{dt}  \zeta = c_{s} \gamma \zeta^{1+\frac{5}{2s}} +c_{s}  (\beta t)^{s-\frac{5}{2}}\zeta^{2}+{c_s (\beta t)^2} \gamma^{2}\zeta^{1+\frac{5}{s}} +{c_s (\beta t)^{2s-3}}  \zeta^{3},
\end{align}
with initial condition $\zeta(0)$, for $s>\frac{5}{2}$, $0<\beta\leq \frac{1}{2}$, and the local existence time $T_\zeta<\infty$. \\
When $\displaystyle {\zeta}({0})\geq c_s \beta^{-\frac{4s}{5}} \min \left\{\gamma^{\frac{2s}{2s-5}}, \gamma^{-\frac{2s}{5}}\right \} $, it holds that
\begin{align}
\label{equzeta1}
T_{\zeta}>\frac{c_s \min \left\{\gamma^{\frac{5}{2s-5}}, \ \gamma^{-1}\right \}}{\zeta(0)^{\frac{5}{2s}}}.
\end{align}
When $\displaystyle {\zeta}({0})< c_s \beta^{-\frac{4s}{5}} \min \left\{\gamma^{\frac{2s}{2s-5}}, \gamma^{-\frac{2s}{5}}\right \} $, it holds that
\begin{align}
\label{equzeta2n}
T_{\zeta}>\min\left\{Z, Z^{2/5}\right\},
\end{align}
where $\displaystyle Z=\frac{c_s \min \left\{\gamma^{\frac{5}{2s-5}}, \ \gamma^{-1}\right \}}{\zeta(0)^{\frac{5}{2s}}}.$
\end{lemma}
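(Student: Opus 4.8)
\emph{Plan.} The right-hand side of \eqref{mainequ1} is strictly positive, so $\zeta$ is nondecreasing and stays $\ge\zeta(0)>0$ on its maximal interval of existence; the plan is to bound $T_\zeta$ from below by a continuity (bootstrap) argument showing that $\zeta$ cannot reach $2\zeta(0)$ before the asserted time. Write $T$ for the claimed lower bound — namely $T=c_s\min\{\gamma^{5/(2s-5)},\gamma^{-1}\}\zeta(0)^{-5/(2s)}$ in the first case and $T=\min\{Z,Z^{2/5}\}$ in the second — with the constant $c_s$ to be shrunk at the very end, and set
\[
  t_1:=\sup\Bigl\{t\in[0,T_\zeta)\ :\ \zeta(\tau)\le 2\zeta(0)\ \text{for all}\ \tau\in[0,t]\Bigr\}.
\]

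\emph{Step 1: reduction to an autonomous Bernoulli inequality.} The crux is that on $[0,\min\{t_1,T\}]$ each of the last three terms in \eqref{mainequ1} is dominated by the first, so that there $\tfrac{d}{dt}\zeta\le 4c_s\gamma\,\zeta^{1+5/(2s)}$. Indeed, dividing the $i$-th term ($i=2,3,4$) by $\gamma\zeta^{1+5/(2s)}$ leaves $(\beta t)^{a_i}$ times $\gamma^{\pm1}$ times a \emph{positive} power of $\zeta$ — the $\zeta$-exponents being $1-\tfrac{5}{2s}$, $\tfrac{5}{2s}$, $2-\tfrac{5}{2s}$, all positive for $s>\tfrac52$ — so for $\zeta\le2\zeta(0)$ the domination holds as soon as $T\lesssim\beta^{-1}\gamma^{p_i}\zeta(0)^{-q_i}$ with explicit exponents $p_i\in\R$, $q_i>0$; e.g.\ for the second term, since $(\tfrac{5}{2s}-1)/(s-\tfrac52)=-\tfrac1s$, the constraint reads $T\lesssim\beta^{-1}\gamma^{2/(2s-5)}\zeta(0)^{-1/s}$. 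In the first case, the hypothesis $\zeta(0)\ge c_s\beta^{-4s/5}\min\{\gamma^{2s/(2s-5)},\gamma^{-2s/5}\}$ together with $0<\beta\le\tfrac12$ is exactly what makes the chosen $T$ meet all three constraints at once: after substituting the lower bound for $\zeta(0)$, each reduces to a power of $\gamma$ being $\le1$ for $\gamma\ge1$ and $\ge1$ for $\gamma<1$, these two regimes being the two branches of the two $\min$'s. In the second case the hypothesis runs the opposite way, so one separates $Z\le1$ from $Z>1$: for $Z\le1$ one has $\min\{Z,Z^{2/5}\}=Z$ and the binding constraint is the one coming from the first term ($T\lesssim\gamma^{-1}\zeta(0)^{-5/(2s)}$), while for $Z>1$ one has $\min\{Z,Z^{2/5}\}=Z^{2/5}$ and it is one of the weighted terms that binds; both sub-cases again close by the same elementary power inequalities, and this is where the exponent $\tfrac25$ (via $-\tfrac1s=\tfrac25\cdot(-\tfrac{5}{2s})$) enters.

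\emph{Step 2: Gronwall and closing the bootstrap.} From $\tfrac{d}{dt}\zeta\le 4c_s\gamma\,\zeta^{1+5/(2s)}\le 4c_s2^{5/(2s)}\gamma\,\zeta(0)^{5/(2s)}\,\zeta$ on $[0,\min\{t_1,T\}]$ one gets $\zeta(t)\le\zeta(0)\exp\bigl(4c_s2^{5/(2s)}\gamma\zeta(0)^{5/(2s)}t\bigr)$ there. Both asserted bounds satisfy $T\le c_s\gamma^{-1}\zeta(0)^{-5/(2s)}$, so taking $c_s$ small makes this exponent $<\log2$ at $t=T$, whence $\zeta(t)<2\zeta(0)$ on $[0,\min\{t_1,T\}]$. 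If $t_1<T$ this contradicts $\zeta(t_1)=2\zeta(0)$ (continuity), so $t_1\ge T$; but then $\zeta\le2\zeta(0)$ and $\tfrac{d}{dt}\zeta\le C\zeta$ throughout $[0,T_\zeta)\cap[0,T]$, and $T_\zeta\le T$ would force $\sup_{[0,T_\zeta)}\zeta<\infty$, contradicting that $T_\zeta<\infty$ is the maximal existence time. Hence $T_\zeta>T$. (One may equally run Step 2 through Lemma \ref{lemma:nonlinear_Gronwall}, comparing $\zeta$ with the solution of $\dot v=4c_s2^{5/(2s)}\gamma\zeta(0)^{5/(2s)}v$, $v(0)=\zeta(0)$.)

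\emph{Main obstacle.} Nothing is conceptually hard; the whole difficulty is the bookkeeping in Step 1 — checking the three domination inequalities with the right powers of $\beta$, $\gamma$, $\zeta(0)$, keeping the exponents $\tfrac{5}{2s},\tfrac5s,s-\tfrac52,2s-3$ straight, and matching the branches of each $\min$ with the regimes $\gamma\lessgtr1$ (and $Z\lessgtr1$ in the second case). One must also verify that the chosen $T$ is really below the minimum of \emph{all} the constraints produced, so that the comparison is valid on the entire interval $[0,T]$.
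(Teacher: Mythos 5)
Your proof is correct, but it takes a genuinely different route from the paper's. The paper also isolates the first term of \eqref{mainequ1} as dominant, but its domination region is bounded by the time-dependent barrier $\zeta \le \tilde c\,\min\{t^{-4s/5},t^{-s}\}$ with $\tilde c = c_s\beta^{-4s/5}\min\{\gamma^{2s/(2s-5)},\gamma^{-2s/5}\}$; since $\zeta$ increases to infinity it must cross this barrier at some $t_\zeta<T_\zeta$, and on $[0,t_\zeta]$ the paper compares $\zeta$ with the explicit Bernoulli solution $\phi(t)=(\phi(0)^{-5/(2s)}-c_s\gamma t)^{-2s/5}$ and then reads off a lower bound for the crossing time $t_\phi$ from the algebraic relation $\tilde c^{-5/(2s)}t_\phi^{2}+c_s\gamma t_\phi=\phi(0)^{-5/(2s)}$ (or its $t_\phi^{5/2}$ analogue when $t_\phi>1$), the dichotomy $t_\phi\lessgtr 1$ producing the two branches $Z$ and $Z^{2/5}$. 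You instead confine the solution to the rectangle $\{\zeta\le 2\zeta(0)\}\times\{t\le T\}$ and close a bootstrap with a linear Gronwall bound, pushing the entire case analysis into the verification that the claimed $T$ lies below each of the three rectangle constraints; the hypotheses on $\zeta(0)$ and the branches of the min's enter there exactly as you indicate (I checked the constraints you did not display --- e.g.\ the term-four constraint in case (ii) with $Z>1$ produces a positive power $\zeta(0)^{1/(2s(2s-3))}$ that is controlled precisely by the case (ii) upper bound on $\zeta(0)$, and the case (i) constraints all reduce to nonpositive powers of $\gamma$ on the relevant branch --- and they do all close). Your version is more elementary in that it needs neither the explicit Bernoulli solution nor the barrier-crossing argument, and it makes transparent why the bound is capped by the doubling time $\sim\gamma^{-1}\zeta(0)^{-5/(2s)}$; the paper's version avoids the a priori cap $\zeta\le 2\zeta(0)$ by letting the solution ride up the barrier, which costs nothing here since both methods are limited by the same binding constraint. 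If you write yours up in full, the dozen exponent verifications (three terms, two $\gamma$-regimes, the $Z\lessgtr 1$ split) must actually be displayed, since that is where the precise form of the hypotheses is consumed.
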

The proof of the above lemma is provided in the appendix. In the next lemma, we establish the crucial differential inequality associated to the evolution of the Gevrey norm.
\begin{lemma}
	\label{inequ1a}
	When $s>\frac{5}{2}$ and $0\leq\beta \leq \frac{1}{2}$, the solution, $u$, of \eqref{nse} with initial data $u^0\in \dot{H}^s$ satisfies the following differential inequality
	\begin{align}
	\label{newest4}
	\frac{d }{d t}\|u\|_{s, \beta t}&\leq c_{s}\|u\|^{1+\frac{5}{2s}}_{s,\beta t} \|u\|^{1-\frac{5}{2s}}_{L^2}+c_{s} (\beta t)^{s-\frac{5}{2}}\|u\|^{2}_{s,\beta t} \nonumber \\
	&+{c_s (\beta t)^2} \|u\|^{1+\frac{5}{s}}_{s, \beta t} \|u\|^{2-\frac{5}{s}}_{L^2}+{c_s (\beta t)^{2s-3}}  \|u\|^{3}_{s, \beta t}.
	\end{align}
\end{lemma}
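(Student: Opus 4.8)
The plan is to mimic the energy estimate carried out in Lemma~\ref{inequ1a9}, but now retaining the dissipative gain term $\beta\|A^{1/4}e^{(\beta t)A^{1/2}}u\|_s^2$ on the left and estimating the nonlinear term by part~(ii) of Lemma~\ref{lem1} rather than part~(i), so that the powers of $\alpha=\beta t$ appear explicitly. Concretely, I would take the inner product of the functional equation $u_t+Au+B(u,u)=0$ with $A^s e^{2(\beta t)A^{1/2}}u$ (rigorously on the Galerkin system). As in \eqref{term1}--\eqref{term2}, the time-derivative term produces $\tfrac12\tfrac{d}{dt}\|u\|_{s,\beta t}^2-\beta\|A^{1/4}e^{(\beta t)A^{1/2}}u\|_s^2$ and the Stokes term produces $\|u\|_{s+1,\beta t}^2$. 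For the nonlinear term, apply \eqref{term33} with $\alpha=\beta t$ to get
\[
\left|\left(B(u,u),A^s e^{2(\beta t)A^{1/2}}u\right)\right|\le c_s\|e^{(\beta t)A^{1/2}}u\|_{F^1}\|u\|_{s,\beta t}^2+c_s(\beta t)^2\|e^{(\beta t)A^{1/2}}u\|_{F^1}^2\|u\|_{s,\beta t}^2+\tfrac12\|u\|_{s+1,\beta t}^2 .
\]

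Next I would absorb the $\tfrac12\|u\|_{s+1,\beta t}^2$ from the nonlinear bound into the Stokes term, leaving $\tfrac12\|u\|_{s+1,\beta t}^2$ on the left, and bound the gain term using $\beta\le\tfrac12$ together with $\|A^{1/4}e^{(\beta t)A^{1/2}}u\|_s^2\le\|u\|_{s,\beta t}\|u\|_{s+1,\beta t}\le\tfrac12\|u\|_{s,\beta t}^2\cdot(\text{something})+\tfrac14\|u\|_{s+1,\beta t}^2$ — actually, the cleaner route, consistent with the target inequality having no $\beta^2\|u\|_{s,\beta t}^2$ term, is to use $\beta\|A^{1/4}e^{(\beta t)A^{1/2}}u\|_s^2\le\tfrac12\|u\|_{s+1,\beta t}^2$ directly (valid since $\|A^{1/4}v\|_s^2\le\|A^{1/2}v\|_s^2=\|v\|_{s+1,\cdot}^2$ by interpolation/Poincaré with the lowest frequency $|k|\ge1$, and $\beta\le\tfrac12$). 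This disposes of the gain term entirely against the remaining $\tfrac12\|u\|_{s+1,\beta t}^2$, so the dissipation is fully consumed. The differential inequality then reads $\tfrac12\tfrac{d}{dt}\|u\|_{s,\beta t}^2\le c_s\|e^{(\beta t)A^{1/2}}u\|_{F^1}\|u\|_{s,\beta t}^2+c_s(\beta t)^2\|e^{(\beta t)A^{1/2}}u\|_{F^1}^2\|u\|_{s,\beta t}^2$, hence, dividing by $\|u\|_{s,\beta t}$ (nonzero for $t>0$), $\tfrac{d}{dt}\|u\|_{s,\beta t}\le c_s\|e^{(\beta t)A^{1/2}}u\|_{F^1}\|u\|_{s,\beta t}+c_s(\beta t)^2\|e^{(\beta t)A^{1/2}}u\|_{F^1}^2\|u\|_{s,\beta t}$.

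The remaining task is to estimate $\|e^{(\beta t)A^{1/2}}u\|_{F^1}$ in terms of $\|u\|_{s,\beta t}$ and $\|u\|_{L^2}$. Here I would apply Lemma~\ref{lem3} to $v=e^{(\beta t)A^{1/2}}u$ with $r=1$, $s_1=0$, $s_2=s$ (legitimate since $s>\tfrac52>\tfrac32+1$), obtaining $\|e^{(\beta t)A^{1/2}}u\|_{F^1}\le c\|e^{(\beta t)A^{1/2}}u\|_{L^2}^{(s-5/2)/s}\|u\|_{s,\beta t}^{5/(2s)}$. Then invoke Lemma~\ref{lem33} with $\alpha=\beta t$ to replace $\|e^{(\beta t)A^{1/2}}u\|_{L^2}$ by $\sqrt{e}\|u\|_{L^2}+(2\beta t)^s\|u\|_{s,\beta t}$; expanding the powers $(s-5/2)/s$ and $2(s-5/2)/s$ of this sum (using $(a+b)^\theta\le a^\theta+b^\theta$ for $0<\theta<1$ in the first and $(a+b)^2\le 2a^2+2b^2$-type bounds in the second, with the $\theta=2(s-5/2)/s$ power needing a split into $\le1$ and $>1$ ranges or just $(a+b)^\theta\lesssim a^\theta+b^\theta$ adjusted by a constant) yields exactly the four terms: $\|u\|_{s,\beta t}^{5/(2s)}\|u\|_{L^2}^{(s-5/2)/s}$ giving the first term after multiplying by the leading $c_s\|u\|_{s,\beta t}$; $(\beta t)^{s-5/2}\|u\|_{s,\beta t}$ from the $(2\beta t)^s\|u\|_{s,\beta t}$ piece; and the analogous two terms from the $(\beta t)^2\|e^{(\beta t)A^{1/2}}u\|_{F^1}^2$ contribution, producing $(\beta t)^2\|u\|_{s,\beta t}^{5/s}\|u\|_{L^2}^{2(s-5/2)/s}$ and $(\beta t)^{2}(\beta t)^{2(s-5/2)}\|u\|_{s,\beta t}^{5/s}\cdot(\beta t)^{\ldots}$ — one must track that the exponent on $\beta t$ in the last term comes out as $2s-3$. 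Collecting everything gives \eqref{newest4}. The main obstacle I anticipate is purely bookkeeping: correctly tracking the exponents on $\beta t$ (particularly landing on $2s-3$ in the last term) and on the norms when expanding the powers of the sum from Lemma~\ref{lem33}, while making sure each subadditivity/Young step is valid in the relevant range $s>\tfrac52$; there is no conceptual difficulty beyond choosing the interpolation parameters in Lemma~\ref{lem3} and Lemma~\ref{lem33} consistently.
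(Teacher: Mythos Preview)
Your proposal is correct and follows essentially the same route as the paper's proof: apply \eqref{term33} with $\alpha=\beta t$, use $\beta\le\tfrac12$ and Poincar\'e to absorb the gain term $\beta\|A^{1/4}e^{(\beta t)A^{1/2}}u\|_s^2$ into $\tfrac12\|u\|_{s+1,\beta t}^2$, then bound $\|e^{(\beta t)A^{1/2}}u\|_{F^1}$ via Lemma~\ref{lem3} with $(r,s_1,s_2)=(1,0,s)$ and finally replace $\|e^{(\beta t)A^{1/2}}u\|_{L^2}$ using Lemma~\ref{lem33}. The exponent bookkeeping you flag as the only obstacle indeed works out (the last term picks up $(\beta t)^2\cdot(\beta t)^{2s-5}=(\beta t)^{2s-3}$), and the subadditivity $(a+b)^\theta\le c_\theta(a^\theta+b^\theta)$ is exactly what the paper uses for both exponents $1-\tfrac{5}{2s}$ and $2-\tfrac{5}{s}$.
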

\begin{proof}

%	\begin{align}
%	\label{est1}
%	\left (\frac{d u}{d t},A^{s}e^{2\beta tA^{\frac{1}{2}}}u\right )=\int_{\Omega} \frac{du}{dt} \cdot A^{s}e^{2\beta tA^{\frac{1}{2}}}u dx.
%	\end{align}
%	%To get an estimate on $(\frac{d u}{d t},A^{s}e^{2\beta tA^{\frac{1}{2}}}u)$, we study $\|A^{\frac{s}{2}}e^{\beta tA^{\frac{1}{2}}}u\|_{L^2}^2$
%	
%	Also,
%	\begin{align*}
%	\frac{1}{2}\frac{d}{dt}\|A^{\frac{s}{2}}e^{\beta tA^{\frac{1}{2}}}u\|_{L^2}^2&=\frac{1}{2} \frac{d}{dt}  \left (A^{\frac{s}{2}}e^{\beta tA^{\frac{1}{2}}}u,A^{\frac{s}{2}}e^{\beta tA^{\frac{1}{2}}}u \right )\\
%	&=\frac{1}{2} \frac{d}{dt}  \int_{\Omega} (A^{\frac{s}{2}}e^{\beta tA^{\frac{1}{2}}}u)^2 dx\\
%	&=\frac{1}{2} \int_{\Omega} 2 (A^{\frac{s}{2}}e^{\beta tA^{\frac{1}{2}}}u) \frac{d}{dt}(A^{\frac{s}{2}}e^{\beta tA^{\frac{1}{2}}}u) dx\\
%	&=\int_{\Omega} A^{\frac{s}{2}}e^{\beta tA^{\frac{1}{2}}}u \cdot \left (\beta A^{\frac{s}{2}}A^{\frac{1}{2}}e^{\beta tA^{\frac{1}{2}}}u+A^{\frac{s}{2}}e^{\beta tA^{\frac{1}{2}}}\frac{du}{dt}\right )dx\\
%	&=\int_{\Omega} \beta A^{s+\frac{1}{2}}e^{2\beta tA^{\frac{1}{2}}}u^2 dx+\int A^{s}e^{2\beta tA^{\frac{1}{2}}}u\frac{du}{dt}
%	dx.
%	\end{align*}
	
%	The second term in the last line of the above equation is the same as the right hand side of (\ref{est1}). Therefore, we have
Taking inner product with $A^{s}e^{2\beta tA^{\frac{1}{2}}}u$ of the NSE and applying (\ref{term33}) with $\alpha=\beta t$, we get
	%\begin{align*}
	%\left|\left(B(u,u),A^{s}e^{2\beta tA^{\frac{1}{2}}}u\right)\right |\leq c_{s}\|e^{\beta tA^{\frac{1}{2}}}u\|_{F^1} \|e^{\beta tA^{\frac{1}{2}}}u\|^2_s +c_s \beta t \|e^{\beta tA^{\frac{1}{2}}}u\|_{F^1}\|e^{\beta tA^{\frac{1}{2}}}u\|_{s+1} \|e^{\beta tA^{\frac{1}{2}}}u\|_{s}.
	%\end{align*}
	%Applying Young's inequality, we have
	%\begin{align*}
	%c_s \beta t \|e^{\beta tA^{\frac{1}{2}}}u\|_{F^1}\|e^{\beta tA^{\frac{1}{2}}}u\|_{s+1} \|e^{\beta tA^{\frac{1}{2}}}u\|_{s}\leq \frac{c_s \beta^2}{\nu}t^2 \|e^{\beta tA^{\frac{1}{2}}}u\|^2_{F^1}\|e^{\beta tA^{\frac{1}{2}}}u\|^2_{s}+\frac{\nu}{2}\|e^{\beta tA^{\frac{1}{2}}}u\|^2_{s+1}.
	%\end{align*}
	%Therefore,
%	\begin{multline}
%	\label{term3}
%	\left|\left(B(u,u),A^{s}e^{2\beta tA^{\frac{1}{2}}}u\right)\right |
%	\leq c_{s}\|e^{\beta tA^{\frac{1}{2}}}u\|_{F^1} \|u\|^2_{s, \beta t}
%	+\frac{c_s \beta^2}{\nu}t^2 \|e^{\beta tA^{\frac{1}{2}}}u\|^2_{F^1}\|u\|^2_{s, \beta t}+\frac{\nu}{2}\|u\|^2_{s+1, \beta t}.
%	\end{multline}
	%\begin{align*}
	%\left|\left(B(u,u),A^{s}e^{2tA^{\frac{1}{2}}}u\right)\right |\leq c_{1s}\|e^{\beta tA^{\frac{1}{2}}}u\|_{F^1} \|e^{\beta tA^{\frac{1}{2}}}u\|^2_s+c_2 t^2 \|e^{\beta tA^{\frac{1}{2}}}u\|^2_{F^1}\|e^{\beta tA^{\frac{1}{2}}}u\|^2_{s}+\frac{1}{2}\|e^{\beta tA^{\frac{1}{2}}}u\|^2_{s+1}.
	%\end{align*}
	%Substituting (\ref{term1}), (\ref{term2}), and (\ref{term3}) into (\ref{innerpro}) yields
	\begin{align}
		\label{newest}
	&\frac{1}{2}\frac{d }{d t}\|u\|^2_{s, \beta t}-\beta \|A^{\frac{1}{4}}e^{\beta tA^{\frac{1}{2}}}u\|^2_s+ \|u\|^2_{s+1, \beta t}\\ \nonumber
	&\leq c_{s}\|e^{\beta tA^{\frac{1}{2}}}u\|_{F^1} \|u\|^2_{s, \beta t}+{c_s \beta^2}t^2 \|e^{\beta tA^{\frac{1}{2}}}u\|^2_{F^1}\|u\|^2_{s, \beta t}+\frac{1}{2}\|u\|^2_{s+1, \beta t}.
	\end{align}
	
%	After simplification, we have
%	\begin{align}
%	\label{newest}
%	&\frac{1}{2}\frac{d }{d t}\|u\|^2_{s, \beta t}-\beta \|A^{\frac{1}{4}}e^{\beta tA^{\frac{1}{2}}}u\|^2_s+\frac{\nu}{2} \|u\|^2_{s+1, \beta t}\\ \nonumber
%	&\leq c_{s}\|e^{\beta tA^{\frac{1}{2}}}u\|_{F^1} \|u\|^2_{s, \beta t}+\frac{c_s \beta^2}{\nu}t^2 \|e^{\beta tA^{\frac{1}{2}}}u\|^2_{F^1}\|u\|^2_{s, \beta t}.
%	\end{align}
%	
	When $\beta \leq \frac{1}{2}$, applying the Poincar$\acute{e}$ inequality, we have
	$\displaystyle
	\beta \|A^{\frac{1}{4}}e^{\beta tA^{\frac{1}{2}}}u\|^2_s\leq\frac{1}{2} \|e^{\beta tA^{\frac{1}{2}}}u\|^2_{s+1}.
$
	
	Therefore, (\ref{newest}) yields
	\begin{align}
	\label{newest2}
	\frac{1}{2}\frac{d }{d t}\|u\|^2_{s, \beta t}\leq c_{s}\|e^{\beta tA^{\frac{1}{2}}}u\|_{F^1} \|u\|^2_{s, \beta t}+ {c_s \beta^2}t^2 \|e^{\beta tA^{\frac{1}{2}}}u\|^2_{F^1}\|u\|^2_{s, \beta t}.
	\end{align}
	
	Applying Lemma \ref{lem3}, in (\ref{est4}), and taking $r=1$, $s_1=0$, and $s_2=s$ in (\ref{est4}), for $\frac{5}{2}<s$ and $u\in L_2 \cap \dot{H}^s$, we obtain
	\begin{align*}
	\|u\|_{F^1}\leq c \|u\|^{\frac{s-\frac{5}{2}}{s}}_{L^2} \|u\|^{\frac{5}{2s}}_{s}.
	\end{align*}
	
	Replacing $u$ by $e^{\beta tA^{\frac{1}{2}}}u$, it follows that
	\begin{align}
	\label{f1}
	\|e^{\beta tA^{\frac{1}{2}}}u\|_{F^1}\leq c\|e^{\beta tA^{\frac{1}{2}}}u\|^{1-\frac{5}{2s}}_{L^2} \|e^{\beta tA^{\frac{1}{2}}}u\|^{\frac{5}{2s}}_{s}.
	\end{align}
	
	Squaring both sides of (\ref{f1}), we have
	\begin{align}
	\label{f1sq}
	\|e^{\beta tA^{\frac{1}{2}}}u\|^2_{F^1}\leq c \|e^{\beta tA^{\frac{1}{2}}}u\|^{2-\frac{5}{s}}_{L^2} \|e^{\beta tA^{\frac{1}{2}}}u\|^{\frac{5}{s}}_{s}.
	\end{align}
	
	Substituting (\ref{f1}) and (\ref{f1sq}) into (\ref{newest2}), we get
	%??? here has a changing of double norm to single norm, what's the difference
	\begin{align}
	\label{newest3}
	\frac{1}{2}\frac{d }{d t}\|u\|^2_{s, \beta t}\leq c_{s} \|e^{\beta tA^{\frac{1}{2}}}u\|^{1-\frac{5}{2s}}_{L^2} \|u\|^{2+\frac{5}{2s}}_{s, \beta t}+{c_s \beta^2}t^2 \|e^{\beta tA^{\frac{1}{2}}}u\|^{2-\frac{5}{s}}_{L^2}\|u\|^{2+\frac{5}{s}}_{s, \beta t}.
	\end{align}
	
	When $s>\frac{5}{2}$, $1-\frac{5}{2s}>0$, we have $(a+b)^{1-\frac{5}{2s}}\leq c_s (a^{1-\frac{5}{2s}}+b^{1-\frac{5}{2s}})$ for $a, b, c>0$. Therefore, applying Lemma \ref{lem33}, we have
	\begin{align}
	\label{el2}
	\|e^{\beta tA^{\frac{1}{2}}}u\|^{1-\frac{5}{2s}}_{L^2}\leq c_s \|u\|^{1-\frac{5}{2s}}_{L^2}+c_s (\beta t)^{s-\frac{5}{2}}\|e^{\beta tA^{\frac{1}{2}}}u\|^{1-\frac{5}{2s}}_{s}.
	\end{align}
	
	Similarly, since $2-\frac{5}{s}>0$, (i.e. $s>\frac{5}{2}$), we obtain
	\begin{align}
	\label{el2sq}
	\|e^{\beta tA^{\frac{1}{2}}}u\|^{2-\frac{5}{s}}_{L^2}\leq c_s \|u\|^{2-\frac{5}{s}}_{L^2}+c_s (\beta t)^{2s-5}\|e^{\beta tA^{\frac{1}{2}}}u\|^{2-\frac{5}{s}}_{s}.
	\end{align}
	
	Substituting (\ref{el2}) and (\ref{el2sq}) into (\ref{newest3}), 
%	we have
%	\begin{align*}
%	\frac{1}{2}\frac{d }{d t}\|u\|^2_{s, \beta t}&\leq c_{s} {\kappa_0}^{-3/2} \|u\|^{2+\frac{5}{2s}}_{s, \beta t} \|u\|^{1-\frac{5}{2s}}_{L^2}
%	+c_{s} {\kappa_0}^{-3/2}  \|u\|^{2+\frac{5}{2s}}_{s, \beta t} (\beta t)^{s-\frac{5}{2}}\|u\|^{1-\frac{5}{2s}}_{s, \beta t}\\
%	&+\frac{c_s {\kappa_0}^{-3}(\beta t)^2}{\nu} \|u\|^{2+\frac{5}{s}}_{s, \beta t} \|u\|^{2-\frac{5}{s}}_{L^2}+\frac{c_s {\kappa_0}^{-3}(\beta t)^2}{\nu} \|u\|^{2+\frac{5}{s}}_{s, \beta t} (\beta t)^{2s-5} \|u\|^{2-\frac{5}{s}}_{s, \beta t}.
%	\end{align*}
	and after simplification, we have
	\begin{align*}
	\frac{1}{2}\frac{d }{d t}\|u\|^{2}_{s, \beta t}&\leq c_{s} \|u\|^{2+\frac{5}{2s}}_{s, \beta t} \|u\|^{1-\frac{5}{2s}}_{L^2}+c_{s}  (\beta t)^{s-\frac{5}{2}}\|u\|^{3}_{s, \beta t} \nonumber \\
	&+{c_s (\beta t)^2} \|u\|^{2+\frac{5}{s}}_{s, \beta t} \|u\|^{2-\frac{5}{s}}_{L^2}+{c_s (\beta t)^{2s-3}}  \|u\|^{4}_{s, \beta t},
	\end{align*}
	which leads to (\ref{newest4}).
\end{proof}
%\begin{theorem}
%	Suppose that $u$ is a classical solution of the Navier--Stokes equations (\ref{nse}) with blow-up time $T^{\ast \ast}<\infty$ in the analytic Gevrey space, i.e. for $s>\frac{5}{2}$ and $\beta \leq \frac{\kappa_0 \nu}{2}$, if
%	\begin{align*}
%	\lim_{t\to T^{\ast \ast}} \|e^{\beta tA^{\frac{1}{2}}}u(t)\|_{s}=\infty.
%	\end{align*}
%	With the initial condition satisfies $\|u^0\|_s<\infty$.
%	Then
%	\begin{align*}
%	T^{\ast \ast}>\frac{\hat{c}}{\|u^0\|^{\frac{5}{2s}}_{s}},
%	\end{align*}
%	and
%	\begin{align*}
%	\|e^{\beta tA^{\frac{1}{2}}}u(t)\|_{s}\geq  {\hat{c} }^{\frac{2s}{5}} (T^{\ast \ast}-t)^{-\frac{2s}{5}},
%	\end{align*}
%	where $\hat{c}$ has dimension $\frac{T^{1-5/2s}}{L^{5/2-5/2s}}$.
%\end{theorem}

\begin{proof}[\textbf{Proof of Theorem \ref{mainthem1}}]
	
	From Lemma \ref{inequ1a}, we have
	\begin{align*}
		\frac{d }{d t}\|u\|_{s, \beta t}&\leq c_{s}\|u\|^{1+\frac{5}{2s}}_{s,\beta t} \|u\|^{1-\frac{5}{2s}}_{L^2}+c_{s} (\beta t)^{s-\frac{5}{2}}\|u\|^{2}_{s,\beta t} \nonumber \\
	&+{c_s (\beta t)^2} \|u\|^{1+\frac{5}{s}}_{s, \beta t} \|u\|^{2-\frac{5}{s}}_{L^2}+{c_s (\beta t)^{2s-3}}  \|u\|^{3}_{s, \beta t}.
	\end{align*}
	
	Let $\gamma = \|u^0\|_{L^2}^{1-\frac5{2s}}$. Using the energy estimate (\ref{energy1}), i.e., $\|u(t)\|_{L^2} \leq \|u^0\|_{L^2}$, we have
	\begin{align*}
	\frac{d }{d t}\|u\|_{s, \beta t} & \leq 
	c_{s} \|u\|^{1+\frac{5}{2s}}_{s, \beta t} \gamma 
	+ c_{s} (\beta t)^{s-\frac{5}{2}}\|u\|^{2}_{s, \beta t} \nonumber \\
	&+ {c_s (\beta t)^2} \|u\|^{1+\frac{5}{s}}_{s, \beta t} \gamma^{2}
	+ {c_s (\beta t)^{2s-3}}  \|u\|^{3}_{s, \beta t}.
	\end{align*}
	
	We will complete the proof using Lemma~\ref{lemma:nonlinear_Gronwall}. 
	%First, we will derive the desired lower bound but for \(T_\zeta\), where \(\zeta(0) = \zeta_0 = \|u^0\|^2_s,\) and \(\zeta\) solves the differential equation
	Let \(\zeta(t)\) solve the differential equation
	\begin{align*}
	\frac{d}{dt}  \zeta = c_{s} \gamma \zeta^{1+\frac{5}{2s}} +c_{s}  (\beta t)^{s-\frac{5}{2}}\zeta^{2}+{c_s (\beta t)^2} \gamma^{2}\zeta^{1+\frac{5}{s}} +{c_s (\beta t)^{2s-3}}  \zeta^{3},
	\end{align*}
	with \(\zeta(0) = \zeta_0 = \|u^0\|_s\).
	
	%From (\ref{mainequ1}), we can observe that $\zeta$ starts with positve initial data and is an increasing function. Moreover, $\zeta$ is a concave up function, so it will blow up.
	
	Defining the local existence time of $\|u\|_{s, \beta t}$ to be 
		\[ T_u = \sup\left\{ t>0 \:|\: \sup_{r\in[0,t]}\|u(r)\|_{s, \beta r} < \infty \right\}, \] 
	and the local existence time of $\zeta$ to be 
	\[ T_\zeta = \sup\left\{ t>0 \:|\: \sup_{r\in[0,t]}|\zeta(r)| < \infty \right\}. \] 
	
%	We define the existence time of solutions of other differential equations in the same way. 
	
	Then, using Lemma~\ref{lemma:nonlinear_Gronwall} we can say that $\zeta(t)\geq\|u(t)\|_{s, \beta t}$ for all $t\in\left[0,\min\{T_\zeta,T_{u}\}\right]$, and hence conclude \(T_{u}\geq T_\zeta\). Moreover, we assume $T_{u}<\infty$, so $T_{\zeta}<\infty$ (actually, we can see this easily from the differential equation of $\zeta$). To obtain a lower bound of $T_u$, we will now analyze $T_\zeta$.
	From Lemma \ref{lemmaonzeta}, when $0<\beta\leq \frac{1}{2}$, we have the following.\\
	Case (i):
	In case $$\|u^0\|_s={\zeta}({0})\geq c_s \beta^{-\frac{4s}{5}} \min \left\{\gamma^{\frac{2s}{2s-5}}, \gamma^{-\frac{2s}{5}}\right \}= c_s \beta^{-\frac{4s}{5}} \min \left\{\|u^0\|_{L^2}, \|u^0\|_{L^2}^{-\frac{2s-5}{5}}\right \},$$ 
	i.e., if $$\frac{\|u^0\|_s}{\|u^0\|_{L^2}}\geq c_s \beta^{-\frac{4s}{5}} \min \left\{1, \|u^0\|_{L^2}^{-\frac{2s}{5}}\right \},$$ 
	it holds that
	\begin{align*}
	T_{u}\geq T_{\zeta}>\frac{c_s \min \left\{\gamma^{\frac{5}{2s-5}}, \ \gamma^{-1}\right \}}{\zeta(0)^{\frac{5}{2s}}}=\frac{c_s \min \left\{\|u^0\|_{L^2}^{\frac{5}{2s}}, \ \|u^0\|_{L^2}^{\frac5{2s}-1}\right \}}{\|u^0\|_s^{\frac{5}{2s}}}=c_s \min \left\{1, \ \|u^0\|_{L^2}^{-1}\right \} \left(\frac{\|u^0\|_s}{\|u^0\|_{L^2}}\right)^{-\frac{5}{2s}}.
	\end{align*}
	%Also, after normalization\\
	%$T_{\hat{\xi}}<T_{\phi}$\\
	%Therefore,\\
	%$c_1 \frac{1}{\|u^0\|^{\frac{5}{2s}}_{s}}<T_{\hat{\xi}}<c_2 \frac{1}{\|u^0\|^{\frac{5}{2s}}_{s}}$.\\
	%Conclusion: The blow-up time in Gevrey norm is around the blow-up time in Robinson.
	%Next, we need to convert from blow-up time to blow-up rate\\
	%$$T_{\hat{\xi}}>\frac{1}{c\phi(0)^{\frac{5}{4s}}}.$$ \\
	% \Rightarrow \\
	% $\xi=\|e^{\beta tA^{\frac{1}{2}}}u\|^2_{s}\to \infty$ as $t\to T_{\xi}$.\\
	% Taking $v(r)=e^{(l+r)A^{\frac{1}{2}}}u(l+r)$, so
	
%	Taking the function $T(a)=c_1 a^{-\frac{5}{2s}}$, where $c_1=c_s \|u^0\|_{L^2}^{\frac5{2s}-1}$, we have
%	\begin{align*}
%	T\left(\|e^{\beta tA^{\frac{1}{2}}}u(t)\|_{s}\right)=c_1{\|e^{\beta tA^{\frac{1}{2}}}u(t)\|_{s}}^{-\frac{5}{2s}}.
%	\end{align*}
	
	Denoting the maximal time of existence of $\|e^{\beta tA^{\frac{1}{2}}}u\|_{s}$ to be $T^{\ast}$, we have
	\begin{align*}
	T^{\ast}>c_s \min \left\{1, \ \|u^0\|_{L^2}^{-1}\right \}\left(\frac{\|u^0\|_s}{\|u^0\|_{L^2}}\right)^{-\frac{5}{2s}}.
	\end{align*}
	
%	Furthermore
%	\begin{align*}
%	T\left(\|e^{\beta (T^{\ast}-t)A^{\frac{1}{2}}}u(T^{\ast}-t)\|_{s}\right)\leq t,\ \text{for\ every}\ t\in (0, T^{\ast}].
%	\end{align*}
%	
%	Therefore
%	\begin{align*}
%	\|e^{\beta (T^{\ast}-t)A^{\frac{1}{2}}}u(T^{\ast}-t)\|_{s}\geq {c_1}^{\frac{2s}{5}}  t^{-\frac{2s}{5}}.
%	\end{align*}
%	
%	This is equivalent to 
%	\begin{align*}
%	\|e^{\beta tA^{\frac{1}{2}}}u(t)\|_{s}\geq {c_1}^{\frac{2s}{5}} (T^{\ast}-t)^{-\frac{2s}{5}}=\frac{c_s\|u^0\|_{L^2}^{1-\frac{2s}{5}}}{(T^{\ast}-t)^{\frac{2s}{5}}}.
%	\end{align*}

Case (ii):
In case $$\|u^0\|_s={\zeta}({0})< c_s \beta^{-\frac{4s}{5}} \min \left\{\gamma^{\frac{2s}{2s-5}}, \gamma^{-\frac{2s}{5}}\right \}= c_s \beta^{-\frac{4s}{5}} \min \left\{\|u^0\|_{L^2}, \|u^0\|_{L^2}^{-\frac{2s-5}{5}}\right \},$$ 
i.e.,
if $$\frac{\|u^0\|_s}{\|u^0\|_{L^2}}< c_s \beta^{-\frac{4s}{5}} \min \left\{1, \|u^0\|_{L^2}^{-\frac{2s}{5}}\right \},$$ 
it holds that
\begin{align}
\label{equzeta2}
T^{\ast}>\min\left\{\tilde{Z}, \tilde{Z}^{2/5}\right\},
\end{align}
where $\tilde{Z}=c_s \min \left\{1, \ \|u^0\|_{L^2}^{-1}\right \}\left(\frac{\|u^0\|_s}{\|u^0\|_{L^2}}\right)^{-\frac{5}{2s}}.$

%If $\displaystyle \frac{c_s \gamma^{-1}}{\zeta(0)^{\frac{5}{2s}}}=\min\left\{\frac{c_s \gamma^{-1}}{\zeta(0)^{\frac{5}{2s}}}, \frac{c_s \gamma^{-2/5}}{\zeta(0)^{\frac{1}{s}}}\right\}$, same as Case (i), we have
%\begin{align*}
%\|e^{\beta tA^{\frac{1}{2}}}u(t)\|_{s}\geq \frac{c_s\|u^0\|_{L^2}^{1-\frac{2s}{5}}}{(T^{\ast}-t)^{\frac{2s}{5}}}.
%\end{align*}
%
%If $\displaystyle \frac{c_s \gamma^{-2/5}}{\zeta(0)^{\frac{1}{s}}}=\min\left\{\frac{c_s \gamma^{-1}}{\zeta(0)^{\frac{5}{2s}}}, \frac{c_s \gamma^{-2/5}}{\zeta(0)^{\frac{1}{s}}}\right\}$, similar to Case (i), we have
%
%	Considering the minimum time for blow-up of $\|e^{\beta tA^{\frac{1}{2}}}u\|_{s}$ to be $T^{\ast}$, we have: as $t\nearrow T^{\ast}$
%	\begin{align*}
%	\|e^{\beta tA^{\frac{1}{2}}}u(t)\|_{s}\geq& c_s \gamma^{-\frac{2s}{5}} (T^{\ast}-t)^{-s}\\
%	&=\frac{c_s\|u^0\|_{L^2}^{1-\frac{2s}{5}}}{(T^{\ast}-t)^{s}}\\
%	&\geq \frac{c_s\|u^0\|_{L^2}^{1-\frac{2s}{5}}}{(T^{\ast}-t)^{\frac{2s}{5}}}.
%	\end{align*}
\end{proof}
\comments{
\begin{proof}[\textbf{Proof of the Corollary \ref{corollary1}}]
	To prove this corollary, we need to show that if 
	\begin{align}
	\label{blowup2}
	\lim_{t\to T^{\ddagger}}	\|e^{r_0 A^{\frac{\theta}{2}}}u(t)\|_{s}=\infty,
	\end{align}
	for $s>\frac{5}{2}, r_0>0$, and $0<\theta<1$, then
	\begin{align*}
	\lim_{t\to T^{\ddagger}}	\|u(t)\|_{s}=\infty.
	\end{align*}
	Once we prove it, (\ref{coro1}) will follow from Lemma \ref{lemrobinson}.\\
	
	Let's prove by contradiction, assuming that when
	\begin{align*}
	\lim_{t\to T^{\ddagger}}	\|e^{r_0 A^{\frac{\theta}{2}}}u(t)\|_{s}=\infty,
	\end{align*}
	we have
	\begin{align*}
	\lim_{t\to T^{\ddagger}}	\|u(t)\|_{s}=M\neq \infty,
	\end{align*}
	then, for $\forall \ \epsilon>0$, $\exists\ \delta>0$, such that
	\begin{align}
	\label{u_delta1}
	\|u(T^{\ddagger}-\delta)\|_{s}\leq M-\epsilon<2M.
	\end{align}
	
	Actually, since we consider the blow-up solution, (\ref{u_delta1}) holds for $\forall \ 0<\delta<T^{\ddagger}$.\\
	
	From the proof of Theorem \ref{mainthem1}, we can take $\tilde{t}=t-(T^{\ddagger}-\delta)$, and assuming there exists a minimum $0<T^{\ast}<\infty$ such that
	\begin{align*}
	\lim_{\tilde{t}\to T^{\ast}} \|e^{\beta \tilde{t}A^{\frac{1}{2}}}u(\tilde{t})\|_{s}=\infty.
	\end{align*}
and
	\begin{align*}
	T^{\ast}>\min\left\{\frac{c_s \|u^0\|_{L^2}^{\frac5{2s}-1}}{\|u^0\|^{\frac{5}{2s}}_{s}}, \frac{c_s \|u^0\|_{L^2}^{\frac{1}{s}-\frac{2}{5}}}{\|u^0\|^{\frac{1}{s}}_{s}}\right\}=\min \left\{\frac{c_s\|u(T^{\ddagger}-\delta)\|_{L^2}^{\frac5{2s}-1}}{\|u(T^{\ddagger}-\delta)\|^{\frac{5}{2s}}_{s}},\ \frac{c_s\|u(T^{\ddagger}-\delta)\|_{L^2}^{\frac{1}{s}-\frac{2}{5}}}{\|u(T^{\ddagger}-\delta)\|_s^{\frac{1}{s}}}\right\}.
	\end{align*}
	(\ref{u_delta1}) yields that
	\begin{align*}
	T^{\ast}>\min \left\{\frac{c_s\|u(T^{\ddagger}-\delta)\|_{L^2}^{\frac5{2s}-1}}{(2M)^{\frac{5}{2s}}},\ \frac{c_s\|u(T^{\ddagger}-\delta)\|_{L^2}^{\frac{1}{s}-\frac{2}{5}}}{(2M)^{\frac{1}{s}}}\right\}
	\end{align*}
	
	Taking $\displaystyle \delta<\min \left \{T^{\ddagger}, \frac{c_s\|u(T^{\ddagger}-\delta)\|_{L^2}^{\frac5{2s}-1}}{(2M)^{\frac{5}{2s}}},\ \frac{c_s\|u(T^{\ddagger}-\delta)\|_{L^2}^{\frac{1}{s}-\frac{2}{5}}}{(2M)^{\frac{1}{s}}} \right\}$, therefore, $T^{\ast}>\delta$ and
	\begin{align*}
	\lim_{t\to T^{\ddagger}} \|e^{\beta tA^{\frac{1}{2}}}u(t)\|_{s}=\lim_{\tilde{t}\to \delta} \|e^{\beta \tilde{t}A^{\frac{1}{2}}}u(\tilde{t})\|_{s}<\lim_{\tilde{t}\to T^{\ast}} \|e^{\beta \tilde{t}A^{\frac{1}{2}}}u(\tilde{t})\|_{s}=\infty.
	\end{align*}
	%{\tiny \widetilde{T}}then, for $\forall\ 0<\delta<\widetilde{T}$, we have
	%\begin{align}
	%\label{u_delta1}
	%\|u(\widetilde{T}-\delta)\|_{s}\leq 2M.
	%\end{align}
	%Taking $\|u^0\|_{s}=\|u(\widetilde{T}-\delta)\|_{s}$, then, from Theorem \ref{mainthem1}: if
	%\begin{align*}
	%\lim_{t-(\widetilde{T}-\delta)\to T-(\widetilde{T}-\delta)}	\|e^{\beta (t-(\widetilde{T}-\delta)) A^{\frac{1}{2}}}u(t-(\widetilde{T}-\delta))\|_{s}=\infty,
	%\end{align*}
	%with the smallest $T$. Then, together with (\ref{u_delta1}), we have
	%	\begin{align*}
	%T-(\widetilde{T}-\delta)>\frac{\tilde{\tilde{c}}}{\|u^0\|^{\frac{5}{2s}}_{s}}\geq \frac{\tilde{\tilde{c}}}{(2M)^{\frac{5}{2s}}_{s}}.
	%\end{align*}
	%Taking $\delta$ such that 
	%\begin{align*}
	%\delta<\min \left \{\frac{\tilde{\tilde{c}}}{(2M)^{\frac{5}{2s}}_{s}}, \widetilde{T} \right\}.
	%\end{align*}
	%Since $\widetilde{T}=(\widetilde{T}-\delta)+\delta<T$, we have
	%\begin{align*}
	%\|e^{\beta T A^{\frac{1}{2}}}u(T)\|_{s}< \infty.
	%\end{align*}
	%\begin{align*}
	%\lim_{t\to \widetilde{T}} 	\|e^{\beta t A^{\frac{1}{2}}}u(t)\|_{s}<\infty.
	%\end{align*}
	
	Since
	%\begin{align*}
	%\|e^{\beta (t-(T-\delta)) A^{\frac{1}{2}}}u((t-(T-\delta)))\|_{s}>c 	\|e^{r_0 A^{\frac{\theta}{2}}}u(t)\|_{s}.
	%\end{align*}
	\begin{align*}
	\|e^{\beta t A^{\frac{1}{2}}}u(t)\|_{s}>c 	\|e^{r_0 A^{\frac{\theta}{2}}}u(t)\|_{s}.
	\end{align*}
	
	This implies
	\begin{align*}
	\lim_{t\to T^{\ddagger}}	\|e^{r_0 A^{\frac{\theta}{2}}}u(t)\|_{s}<\infty.
	\end{align*}
	
	This contradicts with the assumption (\ref{blowup2}). Therefore
	\begin{align*}
	\lim_{t\to T^{\ddagger}}	\|u(t)\|_{s}=\infty.
	\end{align*}
\end{proof}

{\color{red}\begin{proof}[\textbf{Proof of Corollary \ref{corollary1}}]
	To prove this corollary, we need to show that if 
	\begin{align}
	\label{nblowup2}
	\limsup_{t\to T^{\ddagger}}	\|e^{r_0 A^{\frac{\theta}{2}}}u(t)\|_{s}=\infty,
	\end{align}
	for $s>\frac{5}{2}, r_0>0$, and $0<\theta<1$, then
	\begin{align*}
	\limsup_{t\to T^{\ddagger}}	\|u(t)\|_{s}=\infty.
	\end{align*}
	Once we prove it, (\ref{coro1}) will follow from Lemma \ref{lemrobinson}.\\
	
	We will prove the result by contradiction. Assume that
	\begin{align*}
	\limsup_{t\to T^{\ddagger}}	\|e^{r_0 A^{\frac{\theta}{2}}}u(t)\|_{s}=\infty,
	\end{align*}
	and
	\begin{align*}
	\limsup_{t\to T^{\ddagger}}	\|u(t)\|_{s}=M < \infty.
	\end{align*}
	%%%-limsup-%%% Then $\forall \epsilon > 0, \exists t_\epsilon \in [0,T^\ddagger)$ such that for all $r > t_\epsilon$,
	%%%-limsup-%%% \[ M \leq \sup_{t\in[r, T^\ddagger)} \|u(t)\|_{s} \leq M + \epsilon, \]
	%%%-limsup-%%% and for every $r \in [t_\epsilon,T^\ddagger)$, there is a $t_r$ in $[r, T^\ddagger)$ such that
	%%%-limsup-%%% \[ 
	%%%-limsup-%%% 	\sup_{t\in[r, T^\ddagger)} \|u(t)\|_{s} - \epsilon 
	%%%-limsup-%%% 	< \|u(t_r)\|_{s} 
	%%%-limsup-%%% 	< \sup_{t\in[r, T^\ddagger)}\|u(t)\|_{s} 
	%%%-limsup-%%% \]
	%%%-limsup-%%% Therefore, for every $\epsilon > 0$, there is a $t_\epsilon \in [0,T^\ddagger)$ such that for all $r\in [t_\epsilon,T^\ddagger)$, there is a $t_r \in [r,T^\ddagger)$ such that
	%%%-limsup-%%% \[ 
	%%%-limsup-%%% 	M - \epsilon < \|u(t_r)\|_{s} < M + \epsilon
	%%%-limsup-%%% \]
	Then for all $\epsilon > 0$, there is an $n_\epsilon \in \mathbb{N}$ such that for all $\delta \in(0,\frac{1}{n_\epsilon})$,
	\[ M \leq \sup_{t\in[T^\ddagger - \delta, T^\ddagger)} \|u(t)\|_{s} \leq M + \epsilon, \]
	in particular,
	\[ \|u(T^\ddagger - \delta)\|_{s} \leq M + \epsilon. \]
	
	Set $\epsilon = 1$ and let 
	\(
	0 < \delta < \min\left\{\frac{1}{n_\epsilon}, 
	c_s(M+1)^{-\frac{5}{2s}} \|u^0\|_{L^2}^{\frac5{2s} - 1}\right\}.
	\)
	Applying Theorem~\ref{mainthem1} with initial data $u(T^{\ddagger}-\delta)$, we have
	\begin{align}\label{cor2:impossible-bound}
	\sup_{t\in[0,T]} 
	\|e^{\beta t A^{\frac{1}{2}}}u(t + T^{\ddagger} - \delta)\|_{s}
	< \infty,
	\end{align}
	with
	\[
	T:=c_s \|u(T^{\ddagger}-\delta)\|_{L^2}^{\frac5{2s}-1} / 
	\|u(T^{\ddagger}-\delta)\|^{\frac{5}{2s}}_{s} 
	\geq \frac{c_s}{(M+1)^{\frac{5}{2s}} \|u^0\|_{L^2}^{1-\frac5{2s}}} > \delta.\]
	
	Because $T > \delta$, we can conclude from \eqref{cor2:impossible-bound} that
	\begin{align*}
	\|e^{\beta \delta A^{\frac{1}{2}}}u(T^{\ddagger})\|_{s} \leq
	\sup_{t\in[0,T]} \|e^{\beta t A^{\frac{1}{2}}}u(t + T^{\ddagger} - \delta)\|_{s}
	< \infty,
	\end{align*}
	and using \eqref{ineq:subanalytic-analytic}, we arrive at a contradiction:
	\[
	\|e^{\beta \delta A^{\frac{1}{2}}}u(T^{\ddagger})\|_{s}
	= \lim_{t\to\delta} \|e^{\beta t A^{\frac{1}{2}}}u(t + T^{\ddagger} - \delta)\|_{s}
	> \lim_{t\to\delta} c_{\beta,r_0,\theta}\|e^{r_0 A^{\frac{\theta}{2}}}u(t + T^{\ddagger} - \delta)\|_{s}
	\to\infty.
	\]
\end{proof}
}
}
\section{Existence time for $\|u\|_{Gv(s, \beta t)}$  when $\frac{3}{2}\leq s<\frac{5}{2}$.}\label{sec:1/2<s<5/2}
%\section{Estimates on the vorticity equation}\label{sec:vorticity}
It will be more convenient here to study the evolution in Gevrey classes using the vorticity equation instead of the velocity equation. As we will see below, this will enable us to avoid the borderline of the Sobolev embedding encountered in 
\cite{cheskidov2016lower, CM2018, cortissoz2014lower, mccormick2016lower, robinson2012lower}.
The  equation for evolution of vorticity ${\omega}=\nabla \times u$ is given by
\begin{align}
\label{nseW}
&{{\omega}}_t+ A{\omega}+B(u,{\omega})-B({\omega},u)=0,
\\ &{{\omega}}^0 (x)=\mathbf{{\omega}}(x, 0)=\nabla \times u^0 (x).
%	\\ &\int_{{\omega}} u dx=0, \int_{{\omega}} F dx=0,
\end{align}
Here, the operators $A$ and $B$ are defined in (\ref{defA}) and (\ref{defB}), respectively.

Recall
$$\|{\omega}\|_{\tilde{s}, \alpha}=\|e^{\alpha A^{\frac{1}{2}}}{\omega}\|_{\tilde{s}}.$$

Since $\|{\omega}\|_{\tilde{s}, \alpha}=\|u\|_{\tilde{s}+1, \alpha}$, we are taking $s=\tilde{s}+1.$ We have the following estimates, proofs of which can be found in the Appendix.
\begin{lemma}
	\label{inequ3w}
	For $-\frac{1}{2}<\tilde{s}<\frac{3}{2}$ and ${\omega}\in Gv(\tilde{s}+1, \alpha)$, we have
	\begin{align}
	\label{term3w}
	\left|	\left (B({\omega}, u), A^{\tilde{s}}e^{2\alpha A^{\frac{1}{2}}}{\omega} \right )\right|\leq c_{\tilde{s}} \|{\omega}\|_{\tilde{s}, \alpha}^{\tilde{s}+\frac{3}{2}} \|{\omega}\|_{\tilde{s}+1, \alpha}^{\frac{3}{2}-\tilde{s}}. 
	\end{align}
\end{lemma}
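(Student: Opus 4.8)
## Proof plan for Lemma \ref{inequ3w}

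The plan is to estimate the trilinear term $\left(B(\omega,u),A^{\tilde s}e^{2\alpha A^{1/2}}\omega\right)$ by passing to Fourier series, exploiting the fact that $u=A^{-1/2}$ applied to a Riesz-transform-type operator acting on $\omega$ (so $\|u\|_{\tilde s+1,\alpha}=\|\omega\|_{\tilde s,\alpha}$), and then bounding the convolution sum using the algebra-type inequality in Lemma \ref{lem-gev1}. First I would write, with $g=e^{\alpha A^{1/2}}\omega$, the standard Gevrey trick: since $e^{\alpha|k|}\le e^{\alpha|m|}e^{\alpha|k-m|}$ by the triangle inequality $|k|\le|m|+|k-m|$, the exponential weight on the output frequency $k$ distributes onto the two input frequencies, so that up to constants
\[
\left|\left(B(\omega,u),A^{\tilde s}e^{2\alpha A^{1/2}}\omega\right)\right|
\le c\sum_{k,m}|k|^{2\tilde s}\,|m|\,|\hat g(m)|\,|k-m|^{-1}|\hat g(k-m)|\,|\hat g(k)|,
\]
where the factor $|m|\cdot|k-m|^{-1}$ accounts for $\omega$ in the first slot of $B$ (a derivative, hence $|m|$) and $u=\nabla\times(-\Delta)^{-1}\omega$ in the second slot (net order $0$, but recovering $\omega$ costs $|k-m|^{-1}$ relative to a derivative — more precisely one writes $\hat u(k-m)$ in terms of $\hat\omega(k-m)$ with a bounded multiplier times $|k-m|^{-1}$). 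Actually the cleanest bookkeeping is to keep $u$ itself: write the sum as $\sum|k|^{2\tilde s}|m||\hat{\tilde u}(m)|\,|\hat g(k-m)|\,|\hat g(k)|$ after absorbing weights, where $\tilde u=e^{\alpha A^{1/2}}u$ satisfies $\|\tilde u\|_{\tilde s+1}=\|g\|_{\tilde s}$.

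The key step is then to recognize the sum as a pairing $\langle A^{\tilde s}g,\ \widehat{(\nabla\tilde u)\ast g}\rangle$-type object and estimate it by duality: bound it by $\|g\|_{\tilde s}\cdot\|(\nabla\tilde u)\ast g\|_{-\tilde s}$ or, more symmetrically, split the $|k|^{2\tilde s}$ weight as $|k|^{\tilde s}\cdot|k|^{\tilde s}$ and distribute one copy to each convolution factor. Concretely, I expect to use Lemma \ref{lem-gev1} (the Gevrey convolution estimate $\|u_1\ast v_1\|_{s_1+s_2-3/2,\alpha}\le C\|u_1\|_{s_1,\alpha}\|v_1\|_{s_2,\alpha}$ valid for $s_1,s_2<3/2$, $s_1+s_2>0$) with a careful choice of indices so that the product of the three $\dot H$-norms that come out is exactly $\|\omega\|_{\tilde s,\alpha}^{\tilde s+3/2}\|\omega\|_{\tilde s+1,\alpha}^{3/2-\tilde s}$. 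To see the exponents: the target is a trilinear form of total homogeneity: $B(\omega,u)$ has scaling like $\omega\cdot\nabla u\sim$ one derivative on a product of two $\omega$'s (since $\|u\|_{\tilde s+1}=\|\omega\|_{\tilde s}$), and pairing with $A^{\tilde s}\omega$ gives total order $2\tilde s+1$ spread over three copies of $\omega$; interpolating each copy between $\dot H^{\tilde s}$ (order $\tilde s$) and $\dot H^{\tilde s+1}$ (order $\tilde s+1$), if copies carry $\dot H^{\tilde s+1}$ with multiplicities summing to $3/2-\tilde s$ and $\dot H^{\tilde s}$ with multiplicities summing to $\tilde s+3/2$, the total order is $(\tilde s+1)(3/2-\tilde s)+\tilde s(\tilde s+3/2) = 2\tilde s+1$ — wait, let me not grind this; the point is the bookkeeping must be arranged so the homogeneity count $2\tilde s+1$ and the total number of factors $3$ are simultaneously matched, which pins down the exponents uniquely, and one checks they come out as $\tilde s+3/2$ and $3/2-\tilde s$. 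I would carry out: (a) the Fourier reduction and exponential splitting; (b) identify which of the three factors must sit in the higher norm $\dot H^{\tilde s+1}$ — naturally the factor carrying the derivative from $B$, combined with the output $A^{\tilde s}\omega$ factor, needs a Sobolev-product estimate requiring $2\tilde s+1-3/2<$ admissible range; (c) invoke Lemma \ref{lem-gev1} (applied to $e^{\alpha A^{1/2}}\omega$) once or twice with indices chosen so that the hypotheses $s_i<3/2$ (this is exactly where $\tilde s<3/2$ enters, and also where $\tilde s>-1/2$ enters via the positivity/duality constraint $s_1+s_2>0$), and (d) absorb the resulting $\dot H^{\tilde s}$ and $\dot H^{\tilde s+1}$ norms into the claimed product.

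The main obstacle I anticipate is the index juggling in step (c): one wants a single application of Lemma \ref{lem-gev1} that bounds $\|(\text{something})\ast(\text{something})\|$ in a negative-order or low-order space by a product of two Gevrey norms, but the three-factor structure means one must first pair off two of the factors via Hölder/duality in $\ell^2$ before applying the convolution estimate, and the constraint that every intermediate Sobolev index stays strictly below $3/2$ (to apply Lemma \ref{lem-gev1}) while keeping the relevant sums strictly positive forces the restriction $-\frac12<\tilde s<\frac32$ precisely. Getting the endpoints of that interval to appear as sharp constraints (rather than, say, $\tilde s<1$ or $\tilde s>0$) requires choosing the split of the weight $|k|^{2\tilde s}$ and the derivative $|m|$ optimally between the two convolution factors — I would set up a small optimization (as in the analogous SQG estimates of \cite{BiSQG, BMSSQG} that the authors cite) to locate that optimal split. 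Once that is done, the exponential-weight splitting is routine, the orthogonality $\nabla\cdot u=0$ removes any pressure contribution automatically, and the bound \eqref{term3w} follows by collecting terms.
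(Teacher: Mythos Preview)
Your plan is correct and rests on the same key tool as the paper (Lemma \ref{lem-gev1}), but you are making the argument considerably harder than necessary. The paper's proof is three lines: first, by Cauchy--Schwarz in the Gevrey inner product,
\[
\left|\left(B(\omega,u),A^{\tilde s}e^{2\alpha A^{1/2}}\omega\right)\right|
\le \|\omega\cdot\nabla u\|_{\tilde s,\alpha}\,\|\omega\|_{\tilde s,\alpha};
\]
second, apply Lemma \ref{lem-gev1} once with the \emph{symmetric} choice $s_1=s_2=\tfrac{3+2\tilde s}{4}$ (since $\omega$ and $\nabla u$ sit at the same regularity level, no optimization over indices is needed) to obtain $\|\omega\cdot\nabla u\|_{\tilde s,\alpha}\le c_{\tilde s}\|\omega\|^2_{\frac{3+2\tilde s}{4},\alpha}$; third, interpolate $\|\omega\|_{\frac{3+2\tilde s}{4},\alpha}$ between $\|\omega\|_{\tilde s,\alpha}$ and $\|\omega\|_{\tilde s+1,\alpha}$. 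The condition $s_i<3/2$ in Lemma \ref{lem-gev1} together with the interpolation window yields exactly $-\tfrac12<\tilde s<\tfrac32$. No explicit Fourier expansion, no exponential-splitting, and no index optimization are needed here --- all of that is already packaged inside Lemma \ref{lem-gev1}.

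Incidentally, your Fourier bookkeeping is slightly off: in $B(\omega,u)=(\omega\cdot\nabla)u$ the gradient hits $u$, not $\omega$, so the derivative contributes $|k-m|$ (not $|m|$) and then cancels against the $|k-m|^{-1}$ coming from Biot--Savart $\hat u\sim|k-m|^{-1}\hat\omega$. This leaves a clean product of two copies of $\omega$ at the same scaling, which is precisely why the symmetric choice $s_1=s_2$ works immediately and why your anticipated ``optimization'' step evaporates. Save the more delicate Fourier-side commutator manipulations for Lemma \ref{inequ4w}, where the cancellation $\left(B(u,A^{\tilde s/2}e^{\alpha A^{1/2}}\omega),A^{\tilde s/2}e^{\alpha A^{1/2}}\omega\right)=0$ genuinely has to be exploited.
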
	

\begin{lemma}
	\label{inequ4w}
	For $-\frac{1}{2}< \tilde{s}<\frac{3}{2}$ and ${\omega}\in Gv(\tilde{s}+1, \alpha )$, we have
	\begin{align}
	\label{term4w}
	\left|	\left (B(u, {\omega}), A^{\tilde{s}}e^{2\alpha A^{\frac{1}{2}}}{\omega} \right ) \right|\leq c_s \|{\omega}\|^{\tilde{s}+\frac{3}{2}}_{\tilde{s}, \alpha } \|{\omega}\|^{\frac{3}{2}-\tilde{s}}_{\tilde{s}+1, \alpha }+c_s \alpha  \|{\omega}\|^{\tilde{s}+\frac{1}{2}}_{\tilde{s}, \alpha } \|{\omega}\|^{\frac{5}{2}-\tilde{s}}_{\tilde{s}+1, \alpha }. 
	\end{align}
\end{lemma}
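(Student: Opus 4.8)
The plan is to exploit the divergence-free cancellation that is special to $B(u,\omega)$ — namely $((u\cdot\nabla)\phi,\phi)=0$ for any divergence-free $\phi$, which is \emph{not} available for the vortex-stretching term $B(\omega,u)$ of Lemma~\ref{inequ3w} — so that pairing the nonlinearity against the otherwise lethal full Gevrey weight produces only a \emph{commutator}, one order less singular; the commutator then splits into an $\alpha$-free piece (giving the first term of \eqref{term4w}, which matches \eqref{term3w}) and a piece carrying an explicit factor $\alpha$ (giving the second term). Concretely, write $M=A^{\tilde s}e^{2\alpha A^{1/2}}$, a self-adjoint Fourier multiplier with symbol $m(\xi)=|\xi|^{2\tilde s}e^{2\alpha|\xi|}$. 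Since $B=\Pi[(u\cdot\nabla)\,\cdot\,]$ and $\omega$, $M\omega$ and the divergence-free $u$ with $\nabla\times u=\omega$ are all solenoidal, we have $(B(u,\omega),M\omega)=((u\cdot\nabla)\omega,M\omega)=-((u\cdot\nabla)(M\omega),\omega)$; adding these two expressions and passing to Fourier series, and using $j\cdot\hat u(j)=0$ to replace $\hat u(j)\cdot k$ by $\hat u(j)\cdot(j+k)$, gives
\[
2\left(B(u,\omega),A^{\tilde s}e^{2\alpha A^{1/2}}\omega\right)=i\sum_{j+k=l}\bigl(\hat u(j)\cdot l\bigr)\bigl(m(l)-m(k)\bigr)\bigl(\hat\omega(k)\cdot\overline{\hat\omega(l)}\bigr).
\]

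Next I would estimate the symbol difference through the splitting
\[
m(l)-m(k)=|l|^{2\tilde s}\bigl(e^{2\alpha|l|}-e^{2\alpha|k|}\bigr)+e^{2\alpha|k|}\bigl(|l|^{2\tilde s}-|k|^{2\tilde s}\bigr).
\]
For the exponential term, the elementary bound $e^x-1\le xe^x$ together with $\bigl||l|-|k|\bigr|\le|j|$ yields $\bigl|e^{2\alpha|l|}-e^{2\alpha|k|}\bigr|\le 2\alpha|j|\,e^{2\alpha\max(|l|,|k|)}$; the extra power $|j|$ is absorbed by the exact Biot--Savart identity $|\hat u(j)|=|\hat\omega(j)|/|j|$, leaving the single factor $\alpha$ that produces the $\alpha$-term of \eqref{term4w}. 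For the power term one uses a frequency-localized bound on $\bigl||l|^{2\tilde s}-|k|^{2\tilde s}\bigr|$ that captures a gain of $|j|^{\min(2|\tilde s|,1)}$ together with a compensating product of (possibly negative) powers of $|l|,|k|$ of total order $2\tilde s-\min(2|\tilde s|,1)$; this term is $\alpha$-free and will furnish the first term of \eqref{term4w}. In both cases the crucial arithmetic is that the surviving exponential weight satisfies $e^{2\alpha\max(|l|,|k|)}\le e^{\alpha(|j|+|k|+|l|)}$ and $e^{2\alpha|k|}\le e^{\alpha(|j|+|k|+|l|)}$ (because $|l|\le|j|+|k|$ and $|k|\le|j|+|l|$), so it deposits exactly one factor $e^{\alpha|\cdot|}$ onto each of $u$, $\omega$ and the test function; after Biot--Savart the copy attached to $u$ becomes, once more, a copy of $e^{\alpha A^{1/2}}\omega$ shifted by one derivative, so the whole expression is dominated by trilinear convolution sums in $e^{\alpha A^{1/2}}\omega$.

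To finish I would distribute the remaining polynomial weights among the three factors using $|l|^{\sigma}\lesssim|j|^{\sigma}+|k|^{\sigma}$ for $\sigma\ge0$, estimate each resulting convolution with Lemma~\ref{lem-gev1}, choosing the Sobolev exponents $s_1,s_2$ so that $s_1+s_2-\frac32$ is dual to the level of the surviving test factor, and close with Hölder interpolation between the two available levels $\|\omega\|_{\tilde s,\alpha}$ and $\|\omega\|_{\tilde s+1,\alpha}$ (legitimate since the weight $e^{\alpha|k|}$ passes through Hölder's inequality in the frequency variable). This is arranged so that the interpolation weights add up to the stated exponents $\tilde s+\tfrac32,\ \tfrac32-\tilde s$ in the main term and $\tilde s+\tfrac12,\ \tfrac52-\tilde s$ in the $\alpha$-term (both of total degree $3$, as the cubic scaling of $\int(u\cdot\nabla\omega)\cdot A^{\tilde s}\omega$ requires).

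The step I expect to be the main obstacle is precisely this last bookkeeping. The estimate for $\bigl||l|^{2\tilde s}-|k|^{2\tilde s}\bigr|$ — subadditivity when $0\le 2\tilde s\le1$, a mean-value estimate with a $\max(|l|,|k|)^{2\tilde s-1}$ tail when $2\tilde s\ge1$, and a ``reciprocal'' version when $\tilde s<0$ — changes form with the sign of $\tilde s$ and with whether $\tilde s\gtrless1$, and each form forces a different admissible range for $(s_1,s_2)$. One must check in every sub-case that the chosen indices stay inside the admissibility window $s_1,s_2<\frac32$, $s_1+s_2>0$ of Lemma~\ref{lem-gev1}; the fact that this window can be met for all $-\frac12<\tilde s<\frac32$ but degenerates at the two endpoints is exactly what pins down the hypothesis of the lemma, with $\tilde s=\frac32$ ($s=\frac52$) the genuine borderline of the method and $\tilde s=-\frac12$ ($s=\frac12$) the natural floor.
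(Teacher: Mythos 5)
Your proposal is correct and follows essentially the same route as the paper's proof: exploit the orthogonality $\left(B(u,\phi),\phi\right)=0$ to reduce the pairing to a commutator, bound the resulting symbol difference so that it splits into an $\alpha$-free piece and a piece carrying an explicit factor $\alpha$ (the paper does this with a single mean-value estimate on $f(x)=x^{\tilde s}e^{\alpha x}$ rather than your exponential/power splitting, and works with the half-weight $A^{\tilde s/2}e^{\alpha A^{1/2}}$ rather than your symmetrized full symbol, but these are cosmetic), convert $\hat u$ to $\hat\omega$ via $|\hat\omega(j)|=|j|\,|\hat u(j)|$, and close with the convolution estimate of Lemma \ref{lem-gev1} plus interpolation, with the same sign-of-$\tilde s$ case analysis you anticipate. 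The bookkeeping you flag as the main obstacle is indeed where the paper spends most of its effort (cases on $\tilde s\gtrless 1$, $\tilde s\gtrless 0$, and $|j|\lessgtr|k|$), but it closes exactly as you describe.
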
	
We will also need the following lemma concerning existence time of 
a non-autonomous differential equation to proceed the proof 
of which is provided in the appendix.
\begin{lemma}\label{lemmaonX}
	Let $X(t)$ satisfy
	\begin{align}
	\label{mainequ1X}
	\frac{d}{dt} X(t) =c_{\tilde{s}}X^{1+ \frac{4}{1+2\tilde{s}}}+c_{\tilde{s}}(\beta t)^{\frac{4}{2\tilde{s}-1}}X^{1+  \frac{4}{2\tilde{s}-1}},
	\end{align}
	with initial condition $X(0)$, $\frac{1}{2}<\tilde{s}<\frac{3}{2}$, $0<\beta\leq \frac{1}{2}$, and the local existence time $T_X<\infty$. \\
	When $\displaystyle X({0})\geq \frac{c_{\tilde{s}} }{(\beta)^{\frac{2\tilde{s}+1}{2}}} $, we have
	\begin{align}
	\label{equzeta1X3}
	T_{X}>\frac{c_{\tilde{s}}}{X(0)^{\frac{4}{1+2\tilde{s}}}}.
	\end{align}
	When $\displaystyle X({0})<\frac{c_{\tilde{s}} }{(\beta)^{\frac{2\tilde{s}+1}{2}}} $, we have
	\begin{align}
	\label{equzeta2X}
	T_{X}>\min\left\{Q, Q^{1/2}\right\},
	\end{align}
	where $\displaystyle Q=\frac{c_{\tilde{s}}}{X(0)^{\frac{4}{1+2\tilde{s}}}}$.
\end{lemma}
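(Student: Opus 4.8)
The plan is to prove Lemma~\ref{lemmaonX} by an elementary bootstrap--comparison argument for the scalar equation \eqref{mainequ1X}, treating the non-autonomous term $c_{\tilde s}(\beta t)^{4/(2\tilde s-1)}X^{1+4/(2\tilde s-1)}$ as a perturbation of the autonomous one $c_{\tilde s}X^{1+4/(1+2\tilde s)}$ on a short enough time interval. The right-hand side of \eqref{mainequ1X} is non-negative, so $X$ is non-decreasing and $X(t)\ge X(0)$ on $[0,T_X)$. Abbreviate $a=\tfrac{4}{1+2\tilde s}$ and $p=\tfrac{4}{2\tilde s-1}$; for $\tfrac12<\tilde s<\tfrac32$ one has $1<a<2$ and $p>2>a$, and the one algebraic fact the argument hinges on is
\[
\frac1a-\frac1p=\frac12,\qquad\text{i.e.}\qquad a=\frac{2p}{p+2},\qquad\text{i.e.}\qquad \frac{p-a}{p}=\frac a2=\frac{2}{1+2\tilde s}.
\]

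First I would fix a target time $T>0$ and let $T_1\in(0,T_X]$ be the largest time such that $X\le 2X(0)$ on $[0,T_1]$. On $[0,T_1]$, bounding $(\beta t)^p\le(\beta T)^p$ and $X^{p-a}\le(2X(0))^{p-a}$ in the second term of \eqref{mainequ1X} gives $\tfrac{d}{dt}X\le c_{\tilde s}X^{1+a}\bigl(1+(\beta T)^p(2X(0))^{p-a}\bigr)$. If $T$ satisfies the constraint
\[
\text{(A)}\qquad (\beta T)^p(2X(0))^{p-a}\le 1\quad\Longleftrightarrow\quad \beta T\le(2X(0))^{-a/2}
\]
(the equivalence being the identity above), then $\tfrac{d}{dt}X\le 2c_{\tilde s}X^{1+a}$ on $[0,T_1]$, and comparison with the explicit solution of $\dot Y=2c_{\tilde s}Y^{1+a}$, $Y(0)=X(0)$, gives $X(t)<2X(0)$ for all $t<c''_{\tilde s}X(0)^{-a}$, where $c''_{\tilde s}=\tfrac{1-2^{-a}}{2a c_{\tilde s}}$; call this constraint (B). A standard continuity argument then shows that for every $T$ satisfying both (A) and (B) one has $X\le 2X(0)$ on $[0,T]$ and $T<T_X$ (were this to fail, $X$ would remain bounded on $[0,T_X)$, contradicting $T_X<\infty$). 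Letting $T$ run up to the supremum of the admissible values yields
\[
T_X\ \ge\ \min\bigl\{\,c''_{\tilde s}X(0)^{-a},\ \beta^{-1}(2X(0))^{-a/2}\,\bigr\}.
\]

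It remains to evaluate this minimum in the two regimes. In the large-data case $X(0)\ge c_{\tilde s}\beta^{-(2\tilde s+1)/2}$, raising both sides to the power $a/2$ gives $X(0)^{a/2}\gtrsim\beta^{-1}$, which — once the constant in the hypothesis is chosen large enough relative to $c''_{\tilde s}$ and the constant in the conclusion small enough — makes $c''_{\tilde s}X(0)^{-a}$ the smaller entry; hence $T_X>c_{\tilde s}X(0)^{-4/(1+2\tilde s)}$, which is \eqref{equzeta1X3}. In the remaining case one must not keep the forbidden factor $\beta^{-1}$ in the conclusion, so one simply drops it using $\beta\le\tfrac12$: writing $Q=c_{\tilde s}X(0)^{-a}$, we have $\beta^{-1}(2X(0))^{-a/2}\ge 2^{1-a/2}(X(0)^{-a})^{1/2}=2^{1-a/2}(Q/c_{\tilde s})^{1/2}$, so the minimum above is $\gtrsim\min\{Q,Q^{1/2}\}$, which is \eqref{equzeta2X} (after adjusting the constant in $Q$).

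I expect no conceptual obstacle: the estimates are routine ODE comparison once the bootstrap is in place. The only delicate point is the constant bookkeeping, in particular the observation that the threshold $X(0)\sim\beta^{-(2\tilde s+1)/2}$ in the hypothesis is exactly the level at which the autonomous constraint (B), $T\lesssim X(0)^{-a}$, and the constraint (A) forced by the non-autonomous term, $T\lesssim\beta^{-1}X(0)^{-a/2}$, cross over — so that (A) is automatic in the large-data case and merely shortens the guaranteed existence time from $Q$ to $Q^{1/2}$ in the small-data case. The crossover matches precisely because of the exponent identity $\tfrac1a-\tfrac1p=\tfrac12$.
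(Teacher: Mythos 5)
Your proof is correct, and it takes a genuinely different route from the paper's. The paper controls the non-autonomous term by working below the barrier curve $c_{\tilde s}(\beta t)^{-(2\tilde s+1)/2}$ on which the two right-hand-side terms of \eqref{mainequ1X} balance pointwise: it compares $X$ with the solution $\varphi$ of the autonomous equation $\dot\varphi=c_{\tilde s}\varphi^{1+4/(1+2\tilde s)}$, locates the intersection time $t_\varphi$ of $\varphi$ with that barrier by solving the quadratic relation $c_{\tilde s}\beta^2 t_\varphi^2+c_{\tilde s}t_\varphi=\varphi(0)^{-4/(1+2\tilde s)}$, and then splits on whether $t_\varphi\lessgtr 1$ to decide which term of the quadratic dominates, which is exactly where the two cases of the lemma come from. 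You instead run a bootstrap on the amplitude ($X\le 2X(0)$) together with an explicit time restriction (your constraint (A)) under which the non-autonomous term is absorbed into the autonomous one, and then read off the doubling time of the autonomous flow; your exponent identity $\tfrac1a-\tfrac1p=\tfrac12$ is the same algebraic fact that makes the paper's quadratic in $t_\varphi$ close, and your crossover between constraints (A) and (B) at $X(0)\sim\beta^{-(2\tilde s+1)/2}$ reproduces the paper's case split at $t_\varphi=1$. Your version is arguably cleaner in one respect: the paper's argument invokes ``$X\nearrow\infty$ as $t\nearrow T_X$, hence $X$ must intersect the barrier,'' whereas your continuity argument establishes boundedness on $[0,T]$ directly and derives $T<T_X$ from maximality, without needing the blow-up of $X$ at $T_X$ as an a priori input. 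The only loose end, the remark that rescaling the constant inside $Q$ affects $Q$ and $Q^{1/2}$ differently, is a bookkeeping issue already present in the paper's own statement and proof, so it is not a gap.
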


We can now study the existence time of the solutions of the NSE in the Gevrey spaces when $\frac{3}{2}\leq s<\frac{5}{2}$. First, we have the following Lemma.
\begin{lemma}
	\label{nvorticity}
	When $-\frac{1}{2}<\tilde{s}<\frac{3}{2}$, $\beta\geq 0$, we have the following differential inequality
	\begin{align}
	\label{innerpro2}
	\frac{1}{2} \frac{d}{dt} \|{\omega}\|_{\tilde{s}, \beta t}^2 - \beta \|{\omega}\|^2_{\tilde{s}+\frac{1}{2}, \beta t}+\|{\omega}\|_{\tilde{s}+1, \beta t}^2 \leq c_{\tilde{s}} \|{\omega}\|^{{\tilde{s}}+\frac{3}{2}}_{\tilde{s}, \beta t} \|{\omega}\|^{\frac{3}{2}-\tilde{s}}_{\tilde{s}+1, \beta t}+c_{\tilde{s}} \beta t \|{\omega}\|^{{\tilde{s}}+\frac{1}{2}}_{\tilde{s}, \beta t} \|{\omega}\|^{\frac{5}{2}-\tilde{s}}_{\tilde{s}+1, \beta t}. 
	\end{align}
\end{lemma}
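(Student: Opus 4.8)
The plan is to mimic the derivation of Lemma \ref{inequ1a9}, but working with the vorticity equation \eqref{nseW} in place of the velocity equation. Concretely, I would take the $L^2$ inner product of \eqref{nseW} with $A^{\tilde s}e^{2\beta t A^{\frac12}}\omega$, obtaining
\begin{align*}
\left(\omega_t, A^{\tilde s}e^{2\beta t A^{\frac12}}\omega\right)+\left(A\omega, A^{\tilde s}e^{2\beta t A^{\frac12}}\omega\right)+\left(B(u,\omega), A^{\tilde s}e^{2\beta t A^{\frac12}}\omega\right)-\left(B(\omega,u), A^{\tilde s}e^{2\beta t A^{\frac12}}\omega\right)=0,
\end{align*}
and then analyze the four terms one at a time. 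Exactly as in \eqref{term1}, the time-derivative term contributes $\frac12\frac{d}{dt}\|\omega\|^2_{\tilde s,\beta t}-\beta\|A^{\frac14}e^{\beta t A^{\frac12}}\omega\|^2_{\tilde s}$, where the second piece arises from differentiating the time-dependent Gevrey weight $e^{2\beta t A^{\frac12}}$; since $\|A^{\frac14}e^{\beta t A^{\frac12}}\omega\|_{\tilde s}=\|\omega\|_{\tilde s+\frac12,\beta t}$, this produces precisely the $-\beta\|\omega\|^2_{\tilde s+\frac12,\beta t}$ term on the left side of \eqref{innerpro2}. As in \eqref{term2}, the Stokes term contributes $\|A^{\frac{\tilde s+1}{2}}e^{\beta t A^{\frac12}}\omega\|^2_{L^2}=\|\omega\|^2_{\tilde s+1,\beta t}$.

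For the two nonlinear terms I would simply invoke the commutator estimates already established: applying Lemma \ref{inequ4w} with $\alpha=\beta t$ bounds $\big|\big(B(u,\omega), A^{\tilde s}e^{2\beta t A^{\frac12}}\omega\big)\big|$ by $c_{\tilde s}\|\omega\|^{\tilde s+\frac32}_{\tilde s,\beta t}\|\omega\|^{\frac32-\tilde s}_{\tilde s+1,\beta t}+c_{\tilde s}\beta t\,\|\omega\|^{\tilde s+\frac12}_{\tilde s,\beta t}\|\omega\|^{\frac52-\tilde s}_{\tilde s+1,\beta t}$, while Lemma \ref{inequ3w} with $\alpha=\beta t$ bounds $\big|\big(B(\omega,u), A^{\tilde s}e^{2\beta t A^{\frac12}}\omega\big)\big|$ by $c_{\tilde s}\|\omega\|^{\tilde s+\frac32}_{\tilde s,\beta t}\|\omega\|^{\frac32-\tilde s}_{\tilde s+1,\beta t}$. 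Moving both nonlinear terms to the right side, taking absolute values, and absorbing the two occurrences of the first bound into a single constant $c_{\tilde s}$ yields exactly \eqref{innerpro2}. Note that the hypothesis $-\frac12<\tilde s<\frac32$ is precisely the range in which Lemmas \ref{inequ3w} and \ref{inequ4w} are valid, and that no condition on $\beta$ beyond $\beta\ge 0$ is used at this stage.

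As is customary (and as flagged at the start of Section~\ref{sec:velocity}), these manipulations are \emph{a priori} formal — differentiating the Gevrey weight and pairing it against $\omega$ presupposes sufficient regularity and Fourier decay — so I would first carry them out for the Galerkin truncations $\omega_N$, for which all quantities are smooth in $t$ and the identity above holds rigorously, obtain \eqref{innerpro2} at the Galerkin level, and then pass to the limit using the uniform bounds and lower semicontinuity of the Gevrey norms. I do not expect a genuine obstacle here; the only mildly delicate point is the bookkeeping of the time-dependent exponential weight in the first term, and that computation is structurally identical to the one already performed in the proof of Lemma \ref{inequ1a9}.
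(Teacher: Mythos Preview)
Your proposal is correct and follows essentially the same approach as the paper: take the inner product of the vorticity equation with $A^{\tilde s}e^{2\beta t A^{1/2}}\omega$, compute the time-derivative and Stokes terms exactly as you describe, and bound the two nonlinear terms by invoking Lemmas \ref{inequ3w} and \ref{inequ4w} with $\alpha=\beta t$. The paper's proof is slightly more terse but structurally identical, and your remark about justifying the computation via Galerkin truncations matches the convention the paper adopts at the start of Section~\ref{sec:velocity}.
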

\begin{proof}
	Taking the inner product of \eqref{nseW} with $A^{{\tilde{s}}}e^{2\beta tA^{\frac{1}{2}}}{\omega}$, we have
	\begin{align}
	\label{innerpro1}
	\left ({\omega}_t,A^{{\tilde{s}}}e^{2\beta tA^{\frac{1}{2}}}{\omega} \right )+\left ( A {\omega},A^{{\tilde{s}}}e^{2\beta tA^{\frac{1}{2}}}{\omega} \right )+\left (B(u, {\omega}), A^{{\tilde{s}}}e^{2\beta tA^{\frac{1}{2}}}{\omega} \right )-\left (B({\omega}, u), A^{{\tilde{s}}}e^{2\beta tA^{\frac{1}{2}}}{\omega} \right )=0.
	\end{align}
	
	Similar to the calculation in Section 4, we have
	\begin{align}
	\label{term1w}
	\left ({\omega}_t,A^{{\tilde{s}}}e^{2\beta tA^{\frac{1}{2}}}{\omega} \right )=\frac{1}{2} \frac{d}{dt} \|{\omega}\|_{\tilde{s}, \beta t}^2 - \beta \|{\omega}\|^2_{\tilde{s}+\frac{1}{2}, \beta t},
	\end{align}
	and
	\begin{align}
	\label{term2w}
\left ( A {\omega},A^{{\tilde{s}}}e^{2\beta tA^{\frac{1}{2}}}{\omega} \right )= \|{\omega}\|_{\tilde{s}+1, \beta t}^2.
	\end{align}
	
	Applying Lemma \ref{inequ4w} with $\alpha=\beta t$ and
	combining (\ref{term3w}), (\ref{term4w}), (\ref{term1w}), and (\ref{term2w}), the estimate of (\ref{innerpro1}) becomes
	\begin{align*}
	\frac{1}{2} \frac{d}{dt} \|{\omega}\|_{\tilde{s}, \beta t}^2 - \beta \|{\omega}\|^2_{\tilde{s}+\frac{1}{2}, \beta t}+\|{\omega}\|_{\tilde{s}+1, \beta t}^2 \leq c_{\tilde{s}} \|{\omega}\|^{{\tilde{s}}+\frac{3}{2}}_{\tilde{s}, \beta t} \|{\omega}\|^{\frac{3}{2}-\tilde{s}}_{\tilde{s}+1, \beta t}+c_{\tilde{s}} \beta t \|{\omega}\|^{{\tilde{s}}+\frac{1}{2}}_{\tilde{s}, \beta t} \|{\omega}\|^{\frac{5}{2}-\tilde{s}}_{\tilde{s}+1, \beta t}. 
	\end{align*}
\end{proof}

\begin{proof}[\textbf{Proof of Theorem \ref{mainthem2}}]
%	For the case $-\frac{1}{2}<\tilde{s}<\frac{1}{2}$, i.e. $\frac{1}{2}<s<\frac{3}{2}$, see Section \ref{sec:s<5/2}. 
%	
%	An alternative proof can be found in \cite{biswas2010navier} of the Theorem 4.1  by taking $p'=2,\ n=3$. The following blow-up rate results can be obtained:
%	\begin{align*}
%	T^{\ast}>\frac{{c_{s, \beta, \nu} }}{\|u^0\|^{\frac{4}{2{s}-1}}_{{s}}},
%	\end{align*}
%	and
%	\begin{align*}
%	\|e^{\beta tA^{\frac{1}{2}}}u(t)\|_{{s}}>{c_{s, \beta, \nu} } (T^{\ast}-t)^{-\frac{2{s}-1}{4}}.
%	\end{align*}
	For $\frac{3}{2}\leq s<\frac{5}{2}$, i.e., $\frac{1}{2}\leq \tilde{s}<\frac{3}{2}$
, we consider $\frac{1}{2}<\tilde{s}<\frac{3}{2}$ and $\tilde{s}=\frac{1}{2}$, separately.\\
	Case (1), $\frac{1}{2}<\tilde{s}<\frac{3}{2}$: Using Young's Inequality, we have
	\begin{align*}
	c_{\tilde{s}} \|{\omega}\|_{\tilde{s}, \beta t}^{\frac{3+2\tilde{s}}{2}} \|{\omega}\|_{\tilde{s}+1, \beta t}^{\frac{3-2\tilde{s}}{2}}\leq c_{\tilde{s}}\|{\omega}\|_{\tilde{s}, \beta t}^{2 \cdot \frac{3+2\tilde{s}}{1+2\tilde{s}}}+\frac{1}{4}  \|{\omega}\|_{\tilde{s}+1, \beta t}^2,
	\end{align*}
	and 
	\begin{align*}
	c_{\tilde{s}} \beta t \|{\omega}\|_{\tilde{s}, \beta t}^{\frac{1+2\tilde{s}}{2}} \|{\omega}\|_{\tilde{s}+1, \beta t}^{\frac{5-2\tilde{s}}{2}}\leq c_{\tilde{s}}(\beta t)^{\frac{4}{2\tilde{s}-1}}\|{\omega}\|_{\tilde{s}, \beta t}^{2 \cdot \frac{1+2\tilde{s}}{2\tilde{s}-1}}+\frac{1}{4}  \|{\omega}\|_{\tilde{s}+1, \beta t}^2.
	\end{align*}
	
	Taking $\beta \leq \frac{1}{2}$, appliying the Poincar$\acute{e}$ inequality, we have
	\begin{align*}
	\beta \|{\omega}\|^2_{\tilde{s}+\frac{1}{2}, \beta t}\leq \frac{1}{2} \|{\omega}\|_{\tilde{s}+1, \beta t}^2.
	\end{align*}
	
	Therefore, from (\ref{innerpro2}) we deduce
	\begin{align*}
	%\label{innerpro3}
	\frac{d}{dt} \|{\omega}\|_{\tilde{s}, \beta t}^2 \leq c_{\tilde{s}}\|{\omega}\|_{\tilde{s}, \beta t}^{2 \cdot \frac{3+2\tilde{s}}{1+2\tilde{s}}}+c_{\tilde{s}}(\beta t)^{\frac{4}{2\tilde{s}-1}}\|{\omega}\|_{\tilde{s}, \beta t}^{2 \cdot \frac{1+2\tilde{s}}{2\tilde{s}-1}}.
	\end{align*}
	
	After simplification, we have
	\begin{align}
	\label{innerpro3}
	\frac{d}{dt} \|{\omega}\|_{\tilde{s}, \beta t} \leq c_{\tilde{s}}\|{\omega}\|_{\tilde{s}, \beta t}^{1+ \frac{4}{1+2\tilde{s}}}+c_{\tilde{s}}(\beta t)^{\frac{4}{2\tilde{s}-1}}\|{\omega}\|_{\tilde{s}, \beta t}^{1+  \frac{4}{2\tilde{s}-1}}.
	\end{align}
	
	Let X(t) be the solution of the differential equation
	%To analyze (\ref{innerpro3}), denoting $X(t)=\|{\omega}\|_{\tilde{s}, \beta t}^2$ and $X_0 = X(0) = \|{\omega}^0\|^2_{\tilde{s}}$. Multiplying (\ref{innerpro3}) with $2$ on both sides, we have
	\begin{align}
	\label{innerpro4}
	\frac{d}{dt} X(t) =c_{\tilde{s}}X^{1+ \frac{4}{1+2\tilde{s}}}+c_{\tilde{s}}(\beta t)^{\frac{4}{2\tilde{s}-1}}X^{1+  \frac{4}{2\tilde{s}-1}}.
	\end{align}
	with $X_0 = X(0) = \|{\omega}^0\|_{\tilde{s}}$. Then, using Lemma~\ref{lemma:nonlinear_Gronwall}, we have $X(t)\geq\|\omega(t)\|_{s, \beta t}$ for all $t\in\left[0,\min\{T_X,T_{\omega}\}\right]$.
	Here, $T_X$ and $T_{\omega}$ are the local existence time of $X$ and $\|{\omega}\|_{\tilde{s}, \beta t}$, respectively. Moreover, we can conclude that \(T_{\omega}\geq T_X\), and we assume $T_{\omega}<\infty$, also, $T_{X}<\infty$.
	
	From Lemma \ref{lemmaonX}, when $0<\beta\leq \frac{1}{2}$, we get the following.\\
	Case (1a): 
	When $$ \|{u}^0\|_{s}=\|{\omega}^0\|_{\tilde{s}}=X({0})\geq \frac{c_{\tilde{s}} }{(\beta)^{\frac{2\tilde{s}+1}{2}}}=\frac{c_{s} }{(\beta)^{\frac{2s-1}{2}}}, $$ it holds that
	\begin{align}
	\label{equzeta1X1}
T_{\omega}\geq	T_{X}>\frac{c_{\tilde{s}}}{X(0)^{\frac{4}{1+2\tilde{s}}}}=\frac{c_{\tilde{s}}}{\|{\omega}^0\|_{\tilde{s}}^{\frac{4}{1+2\tilde{s}}}}.
	\end{align}
	%\[ T_X = \sup\left\{ t>0 \:|\: \sup_{{s}\in[0,t]}|X(s)| < \infty \right\}. \] 
	
%	Taking the function $T(a)=c_{\tilde{s},}a^{-{\frac{4}{1+2\tilde{s}}}}$, then 
%	\begin{align*}
%	T(\|{\omega}\|_{\tilde{s}, \beta t})=c_{\tilde{s}} {\|{\omega}\|_{\tilde{s}, \beta t}}^{-{\frac{4}{1+2\tilde{s}}}}.
%	\end{align*}
%	
	Considering the existence time of $\|{\omega}\|_{\tilde{s}, \beta t}$ (i.e., $\|{u}\|_{s, \beta t}$): $T^{\ast}$, we have
	\begin{align}
	\label{blow4-n}
	T^{\ast}\geq T_{X}>\frac{c_{\tilde{s}}}{\|{\omega}^0\|^{\frac{4}{1+2\tilde{s}}}_{\tilde{s}}}=\frac{c_{{s}}}{\|u^0\|^{\frac{4}{2s-1}}_{{s},\beta t}}.
	\end{align}
	
%	Moreover, since
%	\begin{align*}
%	T\left(\|e^{\beta (T^{\ast}-t)A^{\frac{1}{2}}}\mathbf{{\omega}}(T^{\ast}-t)\|_{\tilde{s}}\right)\leq t,\ \text{for\ every}\ t\in (0, T^{\ast}].
%	\end{align*}
%	
%	Therefore
%	\begin{align*}
%	\|e^{\beta (T^{\ast}-t)A^{\frac{1}{2}}}\mathbf{{\omega}}(T^{\ast}-t)\|_{\tilde{s}}\geq {c_{\tilde{s}} } t^{-\frac{1+2\tilde{s}}{4}}.
%	\end{align*}
%	
%	This is equivalent to 
%	\begin{align*}
%	\|e^{\beta tA^{\frac{1}{2}}}\mathbf{{\omega}}(t)\|_{\tilde{s}}\geq {c_{\tilde{s}} } (T^{\ast}-t)^{-\frac{1+2\tilde{s}}{4}}.
%	\end{align*}
%	
%	Taking $s=\tilde{s}+1$, we have, as $t\nearrow T^{\ast}$
%%	\begin{align*}
%%	%\label{blow4-n}
%%	T^{\ast}>\frac{c_{\tilde{s}, \beta, \nu}}{\|\omega^0\|^{\frac{4}{1+2\tilde{s}}}_{\tilde{s},\beta t}}=\frac{c_{{s}, \beta, \nu}}{\|u^0\|^{\frac{4}{2s-1}}_{{s},\beta t}},
%%	\end{align*}
%%	and
%	\begin{align}
%	\label{blow2-n}
%	\|e^{\beta tA^{\frac{1}{2}}}u(t)\|_{s}=\|e^{\beta tA^{\frac{1}{2}}}\mathbf{{\omega}}(t)\|_{\tilde{s}}\geq {c_{\tilde{s}} } (T^{\ast}-t)^{-\frac{1+2(s-1)}{4}}={c_s } (T^{\ast}-t)^{-\frac{2s-1}{4}}.
%	\end{align}
	
	Case (1b): From Lemma \ref{lemmaonX},
	when $$ \|{u}^0\|_{s}=\|{\omega}^0\|_{\tilde{s}}=X({0})< \frac{c_{\tilde{s}} }{(\beta)^{\frac{2\tilde{s}+1}{2}}}=\frac{c_{s} }{(\beta)^{\frac{2s-1}{2}}}, $$ it follows that
	\begin{align}
	\label{equzeta1X2}
	T_{\omega}\geq	T_{X}>\min\left\{\frac{c_{\tilde{s}}}{X(0)^{\frac{4}{1+2\tilde{s}}}}, \frac{c_{\tilde{s}}}{X(0)^{\frac{2}{1+2\tilde{s}}}}\right\}=\min\left\{\frac{c_{\tilde{s}}}{\|{\omega}^0\|_{\tilde{s}}^{\frac{4}{1+2\tilde{s}}}}, \frac{c_{\tilde{s}}}{\|{\omega}^0\|_{\tilde{s}}^{\frac{2}{1+2\tilde{s}}}}\right\}.
	\end{align}

	In conclusion, for Case (1ii), we have
	\begin{align*}
	T^{\ast}>\min\left\{\frac{c_{{s}}}{\|u^0\|^{\frac{4}{2s-1}}_{{s},\beta t}}, \frac{c_{{s}}}{\|u^0\|^{\frac{2}{2s-1}}_{{s},\beta t}}\right\}.
	\end{align*}
	Case (2): when $\tilde{s}=\frac{1}{2}$, i.e. $s=\frac{3}{2}$, (\ref{innerpro2}) becomes
	\begin{align}
	\label{innerpro8}
	\frac{1}{2} \frac{d}{dt} \|{\omega}\|_{\frac{1}{2}, \beta t}^2 - \beta \|{\omega}\|^2_{1, \beta t}+ \|{\omega}\|_{\frac{3}{2}, \beta t}^2 \leq c_{\tilde{s}} \|{\omega}\|^{2}_{\frac{1}{2}, \beta t} \|{\omega}\|_{\frac{3}{2}, \beta t}+c_{\tilde{s}} \beta t \|{\omega}\|_{\frac{1}{2}, \beta t} \|{\omega}\|^{2}_{\frac{3}{2}, \beta t}. 
	\end{align}
	%	Here, we use two methods to analyze it.\\
	%	Method 1: 
	%	To analyze (\ref{innerpro5}), denoting $Z(t)=\|{\omega}\|_{\frac{1}{2}, \beta t}^2$ and multiplying (\ref{innerpro3}) with $2$ on both sides, we have
	%	\begin{align}
	%	\label{innerpro4}
	%	\frac{d}{dt}  X \leq c_{\tilde{s},\nu}X^{\frac{3+2\tilde{s}}{1+2\tilde{s}}}+c_{\tilde{s},\nu}(\beta t)^{\frac{4}{2\tilde{s}-1}}X^{ \frac{1+2\tilde{s}}{2\tilde{s}-1}}.
	%	\end{align}
	%	The local existence time of $X$ is 
	%	\[ T_X = \sup\{ t>0 \:|\: \sup_{\tilde{s}\in[0,t]}|X(s)| < \infty \}. \] 
	
	Comparing the terms on the right hand side of (\ref{innerpro8}), we can expect that there is a region (when $t$ and $\|{\omega}\|_{\frac{1}{2}, \beta t}$ are both small), the term $c_{\tilde{s}} \beta t \|{\omega}\|_{\frac{1}{2}, \beta t} \|{\omega}\|^{2}_{\frac{3}{2}, \beta t}$ can be absorbed by $ \|{\omega}\|_{\frac{3}{2}, \beta t}^2$.\\
	%	Taking
	%	\begin{align*}
	%	T^{\lozenge}=\sup\{ t>0 \:|\: c_{\tilde{s}} \beta t \|{\omega}\|_{\frac{1}{2}, \beta t} < \frac{\nu}{4} \}.
	%	\end{align*}
	%Taking $t^{\lozenge}$ that
	%\begin{align*}
	%t^{\lozenge}=\sup_{t}\left\{t>0 \:|\: c_{\tilde{s}} \beta t \|{\omega}\|_{\frac{1}{2}, \beta t} \leq \frac{\nu}{4}\right\} .
	%\end{align*}
	%%	Considering the function
	%%	\begin{align*}
	%%	L(t)=\|{\omega}\|_{\frac{1}{2}, \beta t}-\frac{\nu}{4c_{\tilde{s}} \beta t},
	%%	\end{align*}
	%%	Since when $t\to 0$, $L(t)\to -\infty$, there exists a $t_2>0$ such that $L(t_2)<0$. Moreover, 
	%%	$K(2T^{\lozenge})>0$. Therefore, by the Intermediate Value Theorem, there exists a $0<t_2<t_{L}$ such that $L(t_{L})=0$ and $K(t)<0$ when $t<t_{X}$.
	%	So when $0<t<t^{\lozenge}$, we have
	%	\begin{align*}
	%	\|{\omega}\|_{\frac{1}{2}, \beta t}<\|{\omega}\|_{\frac{1}{2}, \beta t^{\lozenge}}=\frac{\nu}{4c_{\tilde{s}} \beta t^{\lozenge}}<\frac{\nu}{4c_{\tilde{s}} \beta t}.
	%	\end{align*}
	
	Let $\displaystyle \breve{c}=\frac{1}{4c_{\tilde{s}}\beta }$ and let $t^{\lozenge}$ as the solution of 
	$\displaystyle \|{\omega}\|_{\frac{1}{2}, \beta t}=\frac{\breve{c}}{t}.$ (If $\|{\omega}\|_{\frac{1}{2}, \beta t}$ does not blow up, then the Theorem holds. Assume $\|{\omega}\|_{\frac{1}{2}, \beta t}$ blows up, then such $t^{\lozenge}$ exists.) 
	
	When $0<t<t^{\lozenge}$, we have
	\begin{align*}
	\|{\omega}\|_{\frac{1}{2}, \beta t}<\frac{\breve{c}}{t}\Rightarrow \|{\omega}\|_{\frac{1}{2}, \beta t}<\frac{1}{4c_{\tilde{s}} \beta t},
	\end{align*}
	
	and consequently, from (\ref{innerpro8}), we obtain
	\begin{align*}
	\frac{1}{2} \frac{d}{dt} \|{\omega}\|_{\frac{1}{2}, \beta t}^2 - \beta \|{\omega}\|^2_{1, \beta t}+ \|{\omega}\|_{\frac{3}{2}, \beta t}^2 \leq c_{\tilde{s}} \|{\omega}\|^{2}_{\frac{1}{2}, \beta t} \|{\omega}\|_{\frac{3}{2}, \beta t}+\frac{1}{4} \|{\omega}\|^{2}_{\frac{3}{2}, \beta t}. 
	\end{align*}
	
	When $\beta \leq \frac{1}{2}$, apply Young's inequality to the above inequality and simplify it, we have
	\begin{align*}
	%\label{innerpro9}
	\frac{d}{dt} \|{\omega}\|_{\frac{1}{2}, \beta t}^2 <c_{\tilde{s}}\|{\omega}\|_{\frac{1}{2}, \beta t}^{4}\ \Rightarrow \frac{d}{dt} \|{\omega}\|_{\frac{1}{2}, \beta t} <c_{\tilde{s}}\|{\omega}\|_{\frac{1}{2}, \beta t}^{3}.
	\end{align*}
%	$$
%	\begin{align*}
%	%\label{innerpro9}
%	.
%	\end{align*}
	
	Denoting $Y(t)=\|{\omega}\|_{\frac{1}{2}, \beta t}$, then we have
	\begin{align}
	\label{innerpro9}
	\frac{d}{dt} Y<c_{\tilde{s}}Y^{3}.
	\end{align}
	
	The local existence time of $Y$ is:
	$\displaystyle T_Y = \sup\left\{ t>0 \:|\: \sup_{r\in[0,t]}|Y(r)| < \infty \right\}. $
	
	We have $t^{\lozenge}<T_Y<\infty$, and when $0<t<t^{\lozenge}$, we compare $Y(t)$ with $\psi(t)$, where, $\psi(t)$ is the solution of
	\begin{align}
	\label{innerpro10}
	\frac{d}{dt} \psi=c_{\tilde{s}}{\psi}^{3},
	\end{align}
	with $\psi(0)=Y(0)$ with local existence time $T_{\psi}$, also $T_\psi<\infty$.\\
	
	Applying Lemma \ref{lemma:nonlinear_Gronwall} on (\ref{innerpro9}) and (\ref{innerpro10}), we have 
	\begin{align*}
	Y(t) \leq \psi(t),\ \text{for\ all}\ t\in\left [0, \min \left\{t^{\lozenge}, T_{X}, T_{\psi}\right\}\right].
	\end{align*}
	
Denoting the interception point of $\psi(t)$ with $\displaystyle \frac{\breve{c}}{t}$ as $t_{\psi}$, we have: 
$\displaystyle
	\psi(t_{\psi})=\frac{\breve{c}}{t_\psi}.
$
	%	We have $t_{\psi}<1$ since
	%	\begin{align*}
	%	\frac{{\breve{c}}^2}{t_{\psi}^2}=\psi(t_{\psi})>\psi(0)\geq {{\breve{c}}^2}.
	%	\end{align*}
	Moreover, $t_{\psi}\leq t^{\lozenge}<T_{Y}$.
	
	Solving (\ref{innerpro10}), we have
	\begin{align}
	\label{tsol-nn}
	\psi(t)=(\psi(0)^{-2}-c_{\tilde{s}} t)^{-1/2}.
	\end{align}
	
	Therefore
	\begin{align*}
	(\psi(0)^{-2}-c_{\tilde{s}} t_{\psi})^{-1/2}=\frac{{\breve{c}}}{t_{\psi}}.
	\end{align*}
	
	After simplification, we obtain
	\begin{align*}
	\psi(0)^{-2}-c_{\tilde{s}} t_{\psi}={{\breve{c}}^{-2}}t_{\psi}^{2}\Rightarrow
	{{\breve{c}}^{-2}}t_{\psi}^{2}+c_{\tilde{s}} t_{\psi}=\psi(0)^{-2}.
	\end{align*}
	
	This is similar to the result in (\ref{equt-n}) with $\tilde{s}=\frac{1}{2}$. We follow similar procedure as in Case (1) and obtain the results on the existence time.
\end{proof} 

\begin{proof}[\textbf{Proof of the Corollary \ref{corollary2}}]
	From Lemma \ref{nvorticity}, 
	when $-\frac{1}{2}<\tilde{s}<\frac{3}{2}$, we have the following inequality
	\begin{align*}
	\frac{1}{2} \frac{d}{dt} \|{\omega}\|_{\tilde{s}, \beta t}^2 - \beta \|{\omega}\|^2_{\tilde{s}+\frac{1}{2}, \beta t}+\|{\omega}\|_{\tilde{s}+1, \beta t}^2 \leq c_{\tilde{s}} \|{\omega}\|^{{\tilde{s}}+\frac{3}{2}}_{\tilde{s}, \beta t} \|{\omega}\|^{\frac{3}{2}-\tilde{s}}_{\tilde{s}+1, \beta t}+c_{\tilde{s}} \beta t \|{\omega}\|^{{\tilde{s}}+\frac{1}{2}}_{\tilde{s}, \beta t} \|{\omega}\|^{\frac{5}{2}-\tilde{s}}_{\tilde{s}+1, \beta t}. 
	\end{align*}
	
	When we conside the Sobolev space, we have $\beta=0$. Applying Young's inequality on the above inequality and simplify it, we have
	\begin{align*}
	\frac{1}{2} \frac{d}{dt} \|{\omega}\|_{\tilde{s}}^2 \leq c_{\tilde{s}}\|{\omega}\|_{\tilde{s}}^{2 \cdot \frac{3+2\tilde{s}}{1+2\tilde{s}}}.
	\end{align*}
	
	%Similar to the procedure in the proof of Theorem \ref{mainthem2}, 
	Applying Lemma \ref{lem4n} and considering the existence time $T^{\ddagger}$ of $\|\mathbf{{\omega}}(t)\|_{\tilde{s}}$, we have
	\begin{align*}
	\|\mathbf{{\omega}}(T^{\ddagger}-t)\|_{\tilde{s}}\geq c_{\tilde{s}} t^{-\frac{1+2\tilde{s}}{4}}\Rightarrow \|\mathbf{{\omega}}(t)\|_{\tilde{s}}\geq c_{\tilde{s}} (T^{\ddagger}-t)^{-\frac{1+2\tilde{s}}{4}}.
	\end{align*}
	
	If we take $s=\tilde{s}+1$, so $\frac{1}{2}<s<\frac{5}{2}$, it follows that
	\begin{align*}
	\|u(t)\|_{s}=\|\mathbf{{\omega}}(t)\|_{\tilde{s}}\geq c_{\tilde{s}} (T^{\ddagger}-t)^{-\frac{1+2\tilde{s}}{4}}\geq{c_{s} } (T^{\ddagger}-t)^{-\frac{2s-1}{4}}.
	\end{align*}
	
	This is equivalent to 
	\begin{align*}
	 T^{\ddagger}>\frac{c_s}{\|u^0\|^{\frac{4}{2s-1}}_s}.
	\end{align*}
\end{proof}
%The proof of the Corollary \ref{corollary3} is similar to the proof of %Corollary \ref{corollary1}, we skip the proof here.
\section{Appendix}\label{sec:appe}
\textbf{Proof of Lemma \ref{lem1}}.
\begin{proof}
	(i) Let us start by observing
	\begin{align*}
\left(B(u,u),A^{s}e^{2\alpha A^{\frac{1}{2}}}u\right)=\left(A^{s/2}e^{\alpha A^{\frac{1}{2}}}B(u,u),A^{s/2}e^{\alpha A^{\frac{1}{2}}}u\right).
	\end{align*}
	
		We just need to estimate the term $\displaystyle \|A^{s/2}e^{\alpha A^{\frac{1}{2}}}B(u,u)\|_{L^2}.$
	So we consider
	$\displaystyle I=\left(A^{s/2}e^{\alpha A^{\frac{1}{2}}}B(u,u),w\right),$
	for an arbitrary $w\in H$ with $\|w\|_{L^2}=1$. (In fact, we may take $w\in Gv(s, \alpha)$, and then pass to the limit in $H$. Accordingly, let $w\in Gv(s, \alpha)$ with $\|w\|_{L^2}=1$).
	
	\begin{align*}
	\left(A^{\frac{s}{2}}e^{\alpha A^{\frac{1}{2}}}B(u,u),w\right)&=\left(B(u,u),A^{\frac{s}{2}}e^{\alpha A^{\frac{1}{2}}}w\right)\\
	&=i \sum_{j,k} \left(j  \cdot \hat{u}_{k-j}\right)(\hat{u}_{j} \cdot \hat{w}_{-k})|k |^s e^{\alpha|k |}\\
	&=i \sum_{j,k} \left(k  \cdot \hat{u}_{k-j}\right)(\hat{u}_{j} \cdot \hat{w}_{-k})|k |^s e^{\alpha|k |},
	\end{align*}
	since $\hat{u}_{k-j} \cdot (k-j)=0$.
	
	%	\left|\left(B(u,u),A^{s}e^{2\alpha A^{\frac{1}{2}}}u\right)\right |
	The rest of the proof follows from the proof of the first inequality in Lemma 3.1 in \cite{robinson2012lower}. We also use the triangle inequality on the exponential function, namely,
	$$e^{\alpha|k|}\leq e^{\alpha|k-j|}e^{\alpha|j|}.$$
	
	(ii) Starting from the relation
	\begin{align*}
	\left(B(u,u),A^{s}e^{2\alpha A^{\frac{1}{2}}}u\right)=\left (A^{\frac{s}{2}}e^{\alpha A^{\frac{1}{2}}}B(u,u),A^{\frac{s}{2}}e^{\alpha A^{\frac{1}{2}}}u\right),
	\end{align*}
	note that since $\left(B(u,A^{\frac{s}{2}}e^{\alpha A^{\frac{1}{2}}}u),A^{\frac{s}{2}}e^{\alpha A^{\frac{1}{2}}}u\right)=0$, we have
	\begin{align}
	\label{estB}
	\left(B(u,u),A^{s}e^{2\alpha A^{\frac{1}{2}}}u\right)=\left(A^{\frac{s}{2}}e^{\alpha A^{\frac{1}{2}}}B(u,u)-B(u,A^{\frac{s}{2}}e^{\alpha A^{\frac{1}{2}}}u),A^{\frac{s}{2}}e^{\alpha A^{\frac{1}{2}}}u\right).
	\end{align}
	%Therefore, in order to obtain the estimate of $(B(u,u),A^{s}e^{2\alphaA^{\frac{1}{2}}}u)$, we first need to estimate $\|A^{\frac{s}{2}}e^{\alphaA^{\frac{1}{2}}}B(u,u)-B(u,A^{\frac{s}{2}}e^{\alphaA^{\frac{1}{2}}}u)\|_{L^2}$.
	%
	%Taking $I=(A^{\frac{s}{2}}e^{\alphaA^{\frac{1}{2}}}B(u,u)-B(u,A^{\frac{s}{2}}e^{\alphaA^{\frac{1}{2}}}u),w)$, where $\|w\|_{L^2}=1$. 
	
	We need to estimate
	$$\|A^{\frac{s}{2}}e^{\alpha A^{\frac{1}{2}}}B(u,u)-B(u,A^{\frac{s}{2}}e^{\alpha A^{\frac{1}{2}}}u)\|_{L^2}.$$
	Let us consider
	$$I=\left(A^{\frac{s}{2}}e^{\alpha A^{\frac{1}{2}}}B(u,u)-B(u,A^{\frac{s}{2}}e^{\alpha A^{\frac{1}{2}}}u),w\right),$$
	for $\|w\|_{L^2}=1.$ (As before, taking $w\in D(Gv(s, \alpha))$ with $\|w\|_{L^2}=1$, and then pass to the limit).
%	To simplify the notation, we denote $w=A^{\frac{s}{2}}e^{\alpha A^{\frac{1}{2}}}u$ in the second argument of the inner product in (\ref{estB}). Taking $I=\left(A^{\frac{s}{2}}e^{\alpha A^{\frac{1}{2}}}B(u,u)-B(u,A^{\frac{s}{2}}e^{\alpha A^{\frac{1}{2}}}u),w\right)$ and 
	
	Using the Fourier expansion of $u\&w$ are given by $$u=\sum_{j\in \mathbb{Z}^3\setminus \left \{(0,0,0)\right \}}  \hat{u}_j e^{ij \cdot x},\ w=\sum_{k\in \mathbb{Z}^3\setminus \left \{(0,0,0)\right \}}  \hat{w}_k e^{ik \cdot x}.$$ 
	
	It follows that
	\begin{align*}
	\left(A^{\frac{s}{2}}e^{\alpha A^{\frac{1}{2}}}B(u,u),w\right)&=\left(B(u,u),A^{\frac{s}{2}}e^{\alpha A^{\frac{1}{2}}}w\right)\\
	&=i \sum_{j,k} \left(j  \cdot \hat{u}_{k-j}\right)(\hat{u}_{j} \cdot \hat{w}_{-k})|k |^s e^{\alpha|k |},
	\end{align*}
	and
	\begin{align*}
	\left(B(u,A^{\frac{s}{2}}e^{\alpha A^{\frac{1}{2}}}u),w\right)=i \sum_{j,k} (j  \cdot \hat{u}_{k-j})(\hat{u}_{j} \cdot \hat{w}_{-k})|j |^s e^{\alpha|j |}.
	\end{align*}
	
	Combining the above two equations together, we have 
	$\displaystyle
	I=i \sum_{j,k} (j  \cdot \hat{u}_{k-j})(\hat{u}_{j} \cdot \hat{w}_{-k})\left(|k |^s e^{\alpha|k |}-|j |^s e^{\alpha|j |}\right).
	$
	Using the reality condition $\hat{w}_{-k}=\overline{\hat{w}_{k}}$, we obtain an estimate for $I$ given by
	\begin{align}
	\label{Iineq-n}
	|I| \leq  \sum_{j,k} |j | |\hat{u}_{k-j}| |\hat{u}_{j}| |\hat{w}_{k}| \left | |k |^s e^{\alpha|k |}-|j |^s e^{\alpha|j |}\right |.
	\end{align}
	
	Define $f$ by $f(x)=x^s e^{\alpha x}$. Then $f'(x)=s x^{s-1} e^{\alpha x}+x^{s} \alpha e^{\alpha x}$. Taking $\eta=a|j |+(1-a)|k |$, where $0\leq a \leq 1$, then $\eta$ is between $|j |$ and $|k |$. If $|k |\leq|j |$, then $|\eta|\leq|j |\leq|j |+|(k-j) |$; if $|j |<|k |$, then $|\eta|\leq|k |\leq|j |+|(k-j) |$. Therefore, we have $0<\eta\leq |j |+|(k-j) |$. Also, when $s\geq 1$, $s-1\geq 0$. Therefore, after applying the mean value theorem and the triangle inequality, it follows that
	\begin{align*}
	\left | |k |^s e^{\alpha|k |}-|j |^s e^{\alpha|j |}\right |&=|f'(\eta)| \left | |k |-|j | \right |\\
	&\leq  |f'(\eta)| |(k-j) |\\
	&=\left | s {\eta}^{s-1}e^{\alpha \eta}+{\eta}^{s}\alpha e^{\alpha \eta} \right | \left |(k-j) \right|\\
	&=\left | {\eta}^{s-1}e^{\alpha \eta}(s+\alpha\eta) \right | \left|(k-j) \right|.
	%&\leq \left | (|j|+|k-j|)^{s-1} e^{\alpha|j|}e^{\alpha|k-j|} (s+t|j|+t|k-j|)\right | |(k-j) |\\
	%&\leq \left | (|j|^{s-1}+|k-j|^{s-1}) e^{\alpha|j|}e^{\alpha|k-j|} (s+t|j|+t|k-j|)\right | |(k-j) |\\
	%&=\left | (|j|^{s-1}+|i|^{s-1}) e^{\alpha|j|}e^{\alpha|i|} (s+\alpha|j|+t|i|)\right | |i|,\ \text{taking}\ i=k-j.
	\end{align*}
	
	Replacing $\eta$ by $|j |+|l |$ with $l=k-j$, we have
	\begin{align}
	\label{est1a}
	&	\left | |k |^s e^{\alpha|k |}-|j |^s e^{\alpha|j |}\right |\\ \nonumber
	%&\leq \left | (|j |+|(k-j) |)^{s-1} e^{\alpha|j |}e^{\alpha|(k-j) |} (s+t|j |+t|(k-j) |)\right | |(k-j) |\\ 
	&\leq \left(|j |+|l |\right)^{s-1} e^{\alpha|j |}e^{\alpha|l |} \left(s+\alpha|j |+\alpha|l |\right) |l |.
	\end{align}
	
	Substituting (\ref{est1a}) into (\ref{Iineq-n}), we can refine our estimate for $I$
	\allowdisplaybreaks
	\begin{align*}
	|I| &\leq\sum_{l,j} |j |	|\hat{u}_l\|\hat{u}_j\|\hat{w}_{l+j}| (|j |+|l |)^{s-1}e^{\alpha(|j |+|l |)} \left(s+\alpha(|j |+|l |)\right) |l |\\
	%	&=	 \sum_{l,j} |j |	|\hat{u}_l\|\hat{u}_j\|\hat{w}_{l+j}| (|j |+|l |)^{s-1}e^{\alpha|j |}e^{\alpha|l |} (s+\alpha(|j |+|l |)) |l |\\
	&=	s\sum_{l,j} 	|\hat{u}_l\|\hat{u}_j\|\hat{w}_{l+j}| |l ||j |(|j |+|l |)^{s-1}e^{\alpha|j |}e^{\alpha|l |} \\
	&\quad +\alpha\sum_{l,j} 	|\hat{u}_l\|\hat{u}_j\|\hat{w}_{l+j}| |l ||j |(|j |+|l |)^{s}e^{\alpha|j |}e^{\alpha|l |} \\
	&\leq c_s\sum_{l,j}|\hat{u}_l\|\hat{u}_j\|\hat{w}_{l+j}| |l ||j |(|j |^{s-1}+|l |^{s-1})e^{\alpha|j |}e^{\alpha|l |} \\
	&\quad +c_s\alpha\sum_{l,j}|\hat{u}_l\|\hat{u}_j\|\hat{w}_{l+j}| |l ||j |(|j |^{s}+|l |^{s})e^{\alpha|j |}e^{\alpha|l |} \\
	&\leq  c_s\sum_{l,j}|\hat{u}_l\|\hat{u}_j\|\hat{w}_{l+j}| |l |^{s}|j | e^{\alpha|j |}e^{\alpha|l |} \\
	&
	+c_s\alpha\sum_{l,j}|\hat{u}_l\|\hat{u}_j\|\hat{w}_{l+j}\|l |^{s+1}|j |e^{\alpha|j |}e^{\alpha|l |} \\
	&\leq  c_s\sum_{j} |j |e^{\alpha|j |}|\hat{u}_j| \sum_{l}|l |^{s}  e^{\alpha|l |}|\hat{u}_l\|\hat{w}_{l+j}|  \\
	&
	+c_s\alpha \sum_{j} |j |e^{\alpha|j |}|\hat{u}_j| \sum_{l}|l |^{s+1}e^{\alpha|l |}|\hat{u}_l\|\hat{w}_{l+j}|\\
	&\leq  c_{s}\|u\|_{s, \alpha} \|w\|_{L^2} \sum_{j} |j |e^{\alpha|j |}|\hat{u}_j| 
	+c_{s}\alpha \|u\|_{s+1,\alpha} \|w\|_{L^2} \sum_{j} |j |e^{\alpha|j |}|\hat{u}_j|\\
	&\leq  	c_{s}	\|u\|_{s, \alpha}  \|w\|_{L^2}	\|e^{\alpha A^{\frac{1}{2}}}u\|_{F^1} 
	+	c_{s}\alpha \|u\|_{s+1, \alpha}  \|w\|_{L^2} 	\|e^{\alpha A^{\frac{1}{2}}}u\|_{F^1}.
	\end{align*}
	
	%\begin{align*}
	%|I| \leq  c_{s}\|e^{\alphaA^{\frac{1}{2}}}u\|_{F^1} \|e^{\alphaA^{\frac{1}{2}}}u\|_s+c_s \alpha \|e^{\alphaA^{\frac{1}{2}}}u\|_{F^1}\|e^{\alphaA^{\frac{1}{2}}}u\|_{s+1}.
	%\end{align*}
	%Therefore
	%\begin{align*}
	%\|A^{\frac{s}{2}}e^{\alphaA^{\frac{1}{2}}}B(u,u)-B(u,A^{\frac{s}{2}}e^{\alphaA^{\frac{1}{2}}}u)\|_{L^2}\leq c_{s}\|e^{\alphaA^{\frac{1}{2}}}u\|_{F^1} \|e^{\alphaA^{\frac{1}{2}}}u\|_s+c_s \alpha \|e^{\alphaA^{\frac{1}{2}}}u\|_{F^1}\|e^{\alphaA^{\frac{1}{2}}}u\|_{s+1}.
	%\end{align*}
	%
	%Replacing $w$ by $A^{\frac{s}{2}}e^{\alphaA^{\frac{1}{2}}}u$ and applying \eqref{estB}, we have
    Therefore	
	\begin{align*}
	&\left|\left(B(u,u),A^{s}e^{2\alpha A^{\frac{1}{2}}}u\right)\right |\\
	&=\|A^{\frac{s}{2}}e^{\alpha A^{\frac{1}{2}}}B(u,u)-B(u,A^{\frac{s}{2}}e^{\alpha A^{\frac{1}{2}}}u)\|_{L^2} \cdot\|A^{\frac{s}{2}}e^{\alpha A^{\frac{1}{2}}}\|_{L^2}\\
	&\leq c_{s}\|e^{\alpha A^{\frac{1}{2}}}u\|_{F^1} \|u\|^2_{s, \alpha} +c_s \alpha \|e^{\alpha A^{\frac{1}{2}}}u\|_{F^1}\|u\|_{s+1, \alpha} \|u\|_{s, \alpha}.
	\end{align*}
	
	This establishes (\ref{term32}). Moreover, after applying Young's inequality, we obtain 
	\begin{align*}
	c_s \alpha \|e^{\alpha A^{\frac{1}{2}}}u\|_{F^1}\|u\|_{s+1, \alpha} \|u\|_{s, \alpha}\leq {c_s \alpha^2} \|e^{\alpha A^{\frac{1}{2}}}u\|^2_{F^1}\|u\|^2_{s, \alpha}+\frac{1}{2}\|u\|^2_{s+1, \alpha}.
	\end{align*}
	
	Therefore,
	\begin{align*}
	\left|\left(B(u,u),A^{s}e^{2\alpha A^{\frac{1}{2}}}u\right)\right |\leq 
	c_{s}\|e^{\alpha A^{\frac{1}{2}}}u\|_{F^1} \|u\|^2_{s, \alpha}
	+{c_s \alpha^2} \|e^{\alpha A^{\frac{1}{2}}}u\|^2_{F^1}\|u\|^2_{s, \alpha}+\frac{1}{2}\|u\|^2_{s+1, \alpha},
	\end{align*}
	which is precisely (\ref{term33}).
\end{proof}
\textbf{Proof of Lemma \ref{lem33}}.
\begin{proof}
	For $\forall m>0$, if $0\leq  \alpha |k | \leq 1$, then $e^{ \alpha |k |}\leq e$, and if $ \alpha |k | \geq 1$, we have $e^{ \alpha |k |}\leq ( \alpha |k |)^m e^{ \alpha |k |}$. Therefore, for $\forall t>0$ and $k$, we have
	$\displaystyle e^{ \alpha |k |}\leq e+( \alpha |k |)^m e^{ \alpha |k |}$
	and
	$\displaystyle e^{2 \alpha |k |}\leq e+(2 \alpha |k |)^m e^{2 \alpha |k |}.$
	
	Taking $m=2s$, it follows that
	\begin{align*}
	\|e^{ \alpha A^{\frac{1}{2}}}u\|^{2}_{L^2}=\sum_k e^{2 \alpha |k |} |\hat{u}_k|^2 &\leq \sum_k \left ( e+(2 \alpha |k |)^{2s} e^{2 \alpha |k |}\right ) |\hat{u}_k|^2\\
	&=\sum_k e |\hat{u}_k|^2+\sum_k (2 \alpha |k |)^{2s} e^{2 \alpha |k |} |\hat{u}_k|^2.
	\end{align*}
	
	Since $\sqrt{a+b}\leq \sqrt{a}+\sqrt{b}$, for $a,b\geq0$, we have
	\begin{align*}
	\|e^{ \alpha A^{\frac{1}{2}}}u\|_{L^2}&\leq \sqrt{\sum_k e |\hat{u}_k|^2+\sum_k (2 \alpha |k |)^{2s} e^{2 \alpha |k |} |\hat{u}_k|^2}\\
	&\leq \sqrt{\sum_k e |\hat{u}_k|^2}+\sqrt{\sum_k (2 \alpha |k |)^{2s} e^{2 \alpha |k |} |\hat{u}_k|^2}\\
	&=\sqrt{e}\|u\|_{L^2}+\sqrt{ (2 \alpha )^{2s}\sum_k  |k |^{2s} e^{2 \alpha |k |} |\hat{u}_k|^2}\\
	&=\sqrt{e}\|u\|_{L^2}+(2 \alpha )^s ||A^{\frac{s}{2}}e^{ \alpha A^{\frac{1}{2}}}u||_{L^2}\\
	&=\sqrt{e}\|u\|_{L^2}+(2 \alpha )^s \|u\|_{s,  \alpha }.
	\end{align*}
\end{proof}

{\bf Proof of Lemma \ref{lemmaonzeta}.}
\begin{proof}
		Comparing the terms on the right hand side of (\ref{mainequ1}), we can expect that there is a region (when $t$ and $\zeta$ are both small) where $c_{s} \gamma \zeta^{1+\frac{5}{2s}}$ is the dominating term among the four terms on the right hand side. 
	In order to find this specific region, we compare $c_{s} \gamma \zeta^{1+\frac{5}{2s}}$ with the other three terms (note that $c_{s}$ is positive).
	%	\left \{\begin{matrix}
	\begin{enumerate}
		\label{comparison}
		\item Comparing $c_{s} \gamma \zeta^{1+\frac{5}{2s}} $ with $c_{s}  (\beta t)^{s-\frac{5}{2}}\zeta^{2}$:\\
		if $\displaystyle c_{s} \gamma \zeta^{1+\frac{5}{2s}} \geq c_{s}  (\beta t)^{s-\frac{5}{2}}\zeta^{2}$, equivalently, $\zeta \leq \frac{c_s \gamma^{\frac{2s}{2s-5}}}{(\beta t)^{s}}$.		\item Comparing $c_{s} \gamma \zeta^{1+\frac{5}{2s}}$ with $\displaystyle {c_s (\beta t)^2} \gamma^{2}\zeta^{1+\frac{5}{s}} $:\\ if $\displaystyle c_{s} \gamma \zeta^{1+\frac{5}{2s}}\geq {c_s (\beta t)^2} \gamma^{2}\zeta^{1+\frac{5}{s}}$, equivalently, $\displaystyle \zeta \leq \frac{c_s }{\gamma^{\frac{2s}{5}}(\beta t)^{\frac{4 s}{5}}}$.
		\item Comparing $c_{s} \gamma \zeta^{1+\frac{5}{2s}}$ with $\displaystyle {c_s (\beta t)^{2s-3}}  \zeta^{3}$:\\ if $\displaystyle c_{s} \gamma \zeta^{1+\frac{5}{2s}}\geq {c_s (\beta t)^{2s-3}}  \zeta^{3}$, equivalently, $ \displaystyle \zeta \leq \frac{c_s \gamma^{\frac{2s}{4s-5}}}{(\beta t)^{\frac{2s(2s-3)}{4s-5}}}$.
	\end{enumerate}
	%	\end{matrix} \right.
%	
%	
%	For convenience, we take the following rescaling:
%	\begin{align*}
%	&\tilde{t}=\nu {\kappa_0}^{2} t,\\
%	&\tilde{\zeta}=\frac{1}{\nu^2 {\kappa_0}^{2s+2}}\zeta,\\
%	&\tilde{\beta}=\frac{1}{\nu {\kappa_0}}\beta,\\
%	&\tilde{\gamma}=\frac{1}{(\nu \kappa_0)^{1-5/2s}}\gamma.
%	\end{align*}
%	Note that $\tilde{t}, \tilde{\zeta}, \tilde{\beta}, \tilde{\gamma}$ are dimensionless. For simplicity, we drop the tildes and still denote them as $t, \zeta, \beta, \gamma$, but continue working in the dimensionless settings. Corresponding to the above three cases, we have
%	%Therefore, in the previous three cases, we can write
%	\begin{align*}
%	& (1):\ \zeta \leq \frac{c_s \gamma^{\frac{4s}{2s-5}}}{(\beta t)^{2s}}:=\frac{c_1 \beta^{-2s}}{t^{2s}},\\
%	&(2):\ \zeta \leq \frac{c_s {\nu}^{\frac{4s}{5}} {\kappa_0}^{\frac{6s}{5} }}{\gamma^{\frac{4s}{5}}(\beta t)^{\frac{8s}{5}}}:=\frac{c_2 \beta^{-\frac{8s}{5}}}{{t}^{\frac{8s}{5}}},\\
%	&(3):\ \zeta \leq \frac{c_s {\nu}^{\frac{4s}{4s-5}} {\kappa_0}^{\frac{6s}{4s-5}} \gamma^{\frac{4s}{4s-5}}}{(\beta t)^{\frac{4s(2s-3)}{4s-5}}}:=\frac{c_3 \beta^{-\frac{4s(2s-3)}{4s-5}}}{t^{\frac{4s(2s-3)}{4s-5}}}.
%	\end{align*}
%	
%	Henceforth, we will keep this dimensionless setting.
%	
%	Here, $c_1, c_2,$ and $c_3$ are dimensionless constants. Taking $\tilde{c}=\min \left\{\beta^{-2s}, \beta^{-\frac{8s}{5}}, \beta^{-\frac{4s(2s-3)}{4s-5}}\right \} \cdot \min \left\{c_1, c_2, c_3\right \}$, 
	
	Therefore, if $\displaystyle {\zeta} \leq c_s  \min \left\{\beta^{-s}, \beta^{-\frac{4s}{5}}, \beta^{-\frac{2s(2s-3)}{4s-5}}\right \} \cdot \min \left\{\gamma^{\frac{2s}{2s-5}}, \gamma^{-\frac{2s}{5}}, \gamma^{\frac{2s}{4s-5}}\right \} \cdot \min \left\{\frac{1}{{t}^{s}}, \frac{1}{{t}^{\frac{4s}{5}}}, \frac{1}{{t}^{\frac{2s(2s-3)}{4s-5}}}\right \}$, then the first term ($\displaystyle c_{s} \gamma \zeta^{1+\frac{5}{2s}}$)
	%$\displaystyle c_{s} {\kappa_0}^{-3/2}\gamma \zeta^{1+\frac{5}{4s}}=c_s {\nu}^3 {\kappa_0}^{2s+4} \gamma {\zeta}^{1+\frac{5}{4s}}$ (with the dimensionless settings) 
	is the dominating term among the four terms on the right hand side of (\ref{mainequ1}).\\
	% The powers of $t$ in the lower bounds of $\xi$ are: $5-2s,\ -\frac{4s}{5},\ 3-2s$ in (ii), (iii), and (iv), respectively. Comparing these powers, we have $2s-5<2s-3$.\\
	%If $\frac{5}{2}<s<\frac{25}{6}$, $2s-5<\frac{4s}{5}<2s-3$;\\
	%If $s\geq \frac{25}{6}$, $\frac{4s}{5}\leq 2s-5 <2s-3$.\\
	%The powers of $t$ in the lower bounds of $\xi$ are: $-2s,\ -\frac{8s}{5},\ -\frac{(2s-3)\cdot 4s}{4s-5}$ in (v), (vi), and (vii), respectively. Comparing these powers, we have 
	% $2s>\frac{8s}{5}$,
	% $\frac{(2s-3)\cdot 4s}{4s-5}<2s$, and
	%$\frac{(2s-3)\cdot 4s}{4s-5}>\frac{8s}{5}$. 
	
	When $s>\frac{5}{2}$, we have $\frac{4s}{5}<\frac{2s \cdot (2s-3)}{4s-5}<s$. Therefore, when $\beta\leq \frac{1}{2}$,
	$\displaystyle \beta^{-\frac{4s}{5}}=\min \left\{\beta^{-s}, \beta^{-\frac{4s}{5}}, \beta^{-\frac{2s(2s-3)}{4s-5}}\right \}.$
	
	Denoting
	$$\tilde{c}=c_s \beta^{-\frac{4s}{5}} \min \left\{\gamma^{\frac{2s}{2s-5}}, \gamma^{-\frac{2s}{5}}, \gamma^{\frac{2s}{4s-5}}\right \}=c_s \beta^{-\frac{4s}{5}} \min \left\{\gamma^{\frac{2s}{2s-5}}, \gamma^{-\frac{2s}{5}}\right \}.$$
	
	When $0<{t}<1$:
	$\displaystyle
	\frac{1}{{t}^{\frac{4s}{5}}}=\min \left\{\frac{1}{{t}^{s}}, \frac{1}{{t}^{\frac{4s}{5}}}, \frac{1}{{t}^{\frac{2s(2s-3)}{4s-5}}}\right \}.
$ When ${t}>1$:
	$\displaystyle
	\frac{1}{{t}^{s}}=\min \left\{\frac{1}{{t}^{s}}, \frac{1}{{t}^{\frac{4s}{5}}}, \frac{1}{{t}^{\frac{2s(2s-3)}{4s-5}}}\right \}.
$
	
	From (\ref{mainequ1}), we observe that $\zeta$ starts with positve initial data and is an increasing function. Moreover, since $\zeta \nearrow \infty$ as $t\nearrow T_{\zeta}$, it will first intersect either the curve $\displaystyle \frac{\tilde{c}}{{t}^{\frac{4s}{5}}}$ or the curve $\displaystyle \frac{\tilde{c}}{{t}^{s}}$ for some $t_{\zeta}\in (0, T_{\zeta})$. We have the following cases.
	
	Case (i): when $\displaystyle {\zeta}({0})\geq \tilde{c} $, then $\displaystyle {\zeta}({1})> \tilde{c}$. In this case, $ \zeta(t)$ first intercepts with the curve of $\displaystyle \frac{\tilde{c}}{{t}^{\frac{4s}{5}}}$. Denoting the interception point as ${t}_{\zeta}$, we have 
	$0<{t}_{\zeta}\leq1$.	
	
	%	Now, consider the function $F$ defined by
	%	\begin{align*}
	%	F({t})={\zeta}({t})-\frac{\tilde{c}}{{t}^{\frac{8s}{5}}}.
	%	\end{align*}
	%	As ${t}\to 0$, $F({t})\to -\infty$, so, there exists a ${t}_0>0$ such that $F({t}_0)<0$. Moreover, 
	%	$F(1)>0$, since ${\zeta}_0\geq{\tilde{c}}$ and ${\zeta}({t})$ is an increasing function. Therefore, by the intermediate value theorem, there exists a ${t}_{\zeta}$ satisfying $0<{t}_0<{t}_{\zeta}<1$ such that $F({t}_{\zeta})=0$, and we may take $t_\zeta$ to be the smallest such $t$ so that $F({t})<0$ when ${t}<{t}_{\zeta}$.
	%\begin{figure}
	%	\centering
	%	\includegraphics[width=80mm, angle =90 ]{image1.jpeg}\\
	%	\caption{An illustration of the behaviour of $\zeta$ and $\phi$.}
	%	\label{fig1}
	%\end{figure}
	%Therefore, if $c\|u^0\|^{2-\frac{4s}{5}}_{L^2}\leq \xi \leq \frac{c}{t^{\frac{8s}{5}}}$, then the term $\xi^{1+\frac{5}{4s}}$ is the biggest among the five terms on the right hand side of (\ref{newest5}). Here, $c$ is the biggest coefficient among the coefficients of (v), (vi), and (vii). In the shaded domain of Figure \ref{fig1}, the term $\xi^{1+\frac{5}{4s}}$ is the biggest among the five terms.\\
	%Comparing 
	%\begin{align}
	%\label{xi1}
	%\frac{1}{c_{max}}\frac{d \hat{\xi}}{dt}=\hat{\xi}+ {\hat{\xi}}^{1+\frac{5}{4s}}+ t^{s-\frac{5}{2}}{\hat{\xi}}^{\frac{3}{2}}+t^{2}{\hat{\xi}}^{1+\frac{5}{2s}}+ t^{2s-3}{\hat{\xi}}^{2},
	%\end{align}
	% with
	
	Therefore, when $0<t<t_{\zeta}$, we have $\displaystyle {\zeta}(t) <\frac{\tilde{c}}{{t}^{\frac{4s}{5}}}$. ($\displaystyle c_{s} \gamma \zeta^{1+\frac{5}{2s}}$)
	%$\displaystyle c_{s} {\kappa_0}^{-3/2}\gamma \zeta^{1+\frac{5}{4s}}=c_s {\nu}^3 {\kappa_0}^{2s+4} \gamma {\zeta}^{1+\frac{5}{4s}}$ (with the dimensionless settings) 
	is the dominating term among the four terms on the right hand side of (\ref{mainequ1}). It follows that
	\begin{align}
	\label{zetaa1}
	\frac{d {\zeta}}{dt}<4c_s \gamma {\zeta}^{1+\frac{5}{2s}}.
	\end{align}
	
	Moreover, when $0<t<t_{\zeta}$, we compare $\zeta(t)$ with $\phi(t)$, where, $\phi(t)$ is the solution of
	\begin{align}
	\label{phi1}
	\frac{d \phi}{dt}=4c_s \gamma {\phi}^{1+\frac{5}{2s}},
	\end{align}
	with $\phi(0)=\zeta(0)$.
	
	%When $0<t<1$ and $\zeta<\frac{\tilde{c}}{t^{\frac{8s}{5}}}$, we have
	%\begin{align}
	%\label{zetaa1}
	%\frac{d {\zeta}}{dt}<4c_{s} {\kappa_0}^{-3/2}\gamma \zeta^{1+\frac{5}{4s}}.
	%\end{align}
	Applying Lemma \ref{lemma:nonlinear_Gronwall} on (\ref{zetaa1}) and (\ref{phi1}), we have:
$\displaystyle
	\zeta(t) < \phi(t),\ \text{for\ all}\ t\in \left[0, \min \left\{t_{\zeta}, T_{\zeta}, T_{\phi}\right\}\right].
$
	
	It follows that
	there exists a $t_{\phi}$ that 
	\begin{align}
	\label{tphi}
	\phi(t_{\phi})=\frac{\tilde{c}}{t_{\phi}^\frac{4s}{5}}.
	\end{align} 
	
	Since $\zeta(t)< \phi(t)$, we conclude $0<t_{\phi}< t_{\zeta}\leq 1$. Thus, the following relation holds: $\displaystyle t_{\phi}< t_{\zeta}<T_{\zeta}.$
	%
	%Since $\phi(0)=\zeta_0>\tilde{c}$, the intersection point of $\frac{\tilde{c}}{t^{\frac{8s}{5}}}$ and $\phi(t)$ 
	%Therefore, when $0<t<1$ and $\zeta<\frac{\tilde{c}}{t^{\frac{8s}{5}}}$
	%Considering the blow-up time of ${\xi}$: $T$, we have $t_{\phi}<t_{{\xi}}<T$. Here, $t_{\phi}$ and $t_{\xi}$ are the $t$ value of the intersection point with the curve $ct^{\frac{8s}{5}}$ of $\phi$ and $\xi$, respectively.\\
	%To calculate $t_{\phi}$, solving (\ref{phi1}), we have $\phi=c\frac{1}{t^{\frac{8s}{5}}}$ where $\phi$ satisfies
	%$\frac{d \phi}{dt}=c \phi^{1+\frac{5}{4s}}$\\
	%$\Rightarrow$\\
	%$\frac{d \phi}{\phi^{1+\frac{5}{4s}}}=c dt$\\
	%$\Rightarrow$\\
	%$-\frac{4s}{5} \phi^{-\frac{5}{4s}} |_{0}^{t}=c t_{0}^{t}$\\
	%$\Rightarrow$\\
	%$-\frac{4s}{5} \phi(t)^{-\frac{5}{4s}}+\frac{4s}{5} \phi(0)^{-\frac{5}{4s}}=c t$\\
	%$\Rightarrow$\\
	%$\frac{4s}{5} \phi(t)^{-\frac{5}{4s}}=\frac{4s}{5} \phi(0)^{-\frac{5}{4s}}-c t$\\
	%$\Rightarrow$\\
	%$\phi(t)=(\phi(0)^{-\frac{5}{4s}}-c\frac{5}{4s} t)^{-\frac{4s}{5}}$\\
	
	Solving (\ref{phi1}), we have
	\begin{align}
	\label{tsol}
	\phi(t)=(\phi(0)^{-\frac{5}{2s}}-c_s \gamma  t)^{-\frac{2s}{5}}.
	\end{align}
	
	Combining (\ref{tphi}) and (\ref{tsol}), it holds that:
$\displaystyle
	(\phi(0)^{-\frac{5}{2s}}-c_s \gamma  t_{\phi})^{-\frac{2s}{5}}=\tilde{c} t_{\phi}^{-\frac{4s}{5}}.
	$
	%$\Rightarrow$\\
	%$(\phi(0)^{-\frac{5}{4s}}-c\frac{5}{4s} t_{\phi})^{-1}=t_{\phi}^{-2}$,\\
	%$\Rightarrow$\\
	
	After simplification, we obtain:
	$\displaystyle
	\phi(0)^{-\frac{5}{2s}}-c_s \gamma  t_{\phi}=\tilde{c}^{\frac{-5}{2s}}t_{\phi}^{2}.
$
	
	Therefore
	\begin{align}
	\label{equt}
	\tilde{c}^{\frac{-5}{2s}}t_{\phi}^2+c_s \gamma  t_{\phi}=\phi(0)^{-\frac{5}{2s}}.
	\end{align}
	
	Since $t_{\phi}<1$, i.e., ${t_{\phi}}^2< t_{\phi}$, from (\ref{equt}), we have:
$\displaystyle
	\phi(0)^{-\frac{5}{2s}}< \left(\tilde{c}^{\frac{-5}{2s}}+c_s \gamma \right)  t_{\phi}.
$
	
	Therefore
	\begin{align}
	\label{equtphi}
	\phi(0)^{-\frac{5}{2s}}<\frac{1}{\tilde{\tilde{c}}} t_{\phi},
	\end{align}
	%where $\hat{c}$ has dimension $\frac{T^{1-5/2s}}{L^{5/2-5/2s}}$ and
	where $\displaystyle \frac{1}{\tilde{\tilde{c}}}=2\max \left\{\tilde{c}^{\frac{-5}{2s}}, c_s \gamma \right \}$. Since $\tilde{c}=c_s \beta^{-\frac{4s}{5}} \min \left\{\gamma^{\frac{2s}{2s-5}}, \gamma^{-\frac{2s}{5}}\right \}$, therefore $$\tilde{c}^{\frac{-5}{2s}}=c_s \beta^2 \max \left\{\gamma^{-\frac{5}{2s-5}}, \ \gamma\right \}.$$ 
	
%	If $\gamma=\min \left\{\gamma^{-\frac{5}{2s-5}}, \ \gamma,\ \gamma^{-\frac{5}{4s-5}}\right \}$, then $$\displaystyle \frac{1}{\tilde{\tilde{c}}}=c_s \gamma \ \mbox{{\color{red} fix this}}.$$
%	
%	If $\gamma\neq\min \left\{\gamma^{-\frac{5}{2s-5}}, \ \gamma,\ \gamma^{-\frac{5}{4s-5}}\right \}$, then $\displaystyle c_s\gamma\geq\tilde{c}^{\frac{-5}{2s}}$, so $$\displaystyle \frac{1}{\tilde{\tilde{c}}}=c_s \gamma.$$
%	
%	{\color{red} The following needs to be fixed. Consider two cases $\gamma \ge 1$ and $\gamma <1$ and determine the maximum explicitly. This will change calculations but in a minor way.}
%	
Since $\beta<1$, we have $\displaystyle \frac{1}{\tilde{\tilde{c}}}=c_s \max \left\{\gamma^{-\frac{5}{2s-5}}, \ \gamma\right \}$, i.e., $\displaystyle\displaystyle {\tilde{\tilde{c}}}=c_s \min \left\{\gamma^{\frac{5}{2s-5}}, \ \gamma^{-1}\right \}.$
	%=c_s\|u^0\|_{L^2}^{\frac5{2s}-1}.$$
	
	From (\ref{equtphi}), we have
	\begin{align*}
	t_{\phi}>\frac{\tilde{\tilde{c}}}{\phi(0)^{\frac{5}{2s}}}=\frac{c_s \min \left\{\gamma^{\frac{5}{2s-5}}, \ \gamma^{-1}\right \}}{\zeta(0)^{\frac{5}{2s}}}.
	%\frac{c_s\|u^0\|_{L^2}^{\frac5{2s}-1}}{\phi(0)^{\frac{5}{2s}}}.
	\end{align*}
	
	Therefore
	\begin{align*}
T_{\zeta}>	t_{\phi}>\frac{c_s \min \left\{\gamma^{\frac{5}{2s-5}}, \ \gamma^{-1}\right \}}{\zeta(0)^{\frac{5}{2s}}}.
	\end{align*}
		Case (ii):  when $\displaystyle {\zeta}({0})< \tilde{c} $, if $\displaystyle {\zeta}({1})\geq \tilde{c}$, same as Case (i), we have
			$\displaystyle
		T_{\zeta}>	t_{\phi}>\frac{c_s \min \left\{\gamma^{\frac{5}{2s-5}}, \ \gamma^{-1}\right \}}{\zeta(0)^{\frac{5}{2s}}}.
	$
		
		If $\displaystyle {\zeta}({1})< \tilde{c}$, in this case, $ \zeta(t)$ first intercepts with the curve of $\displaystyle \frac{\tilde{c}}{{t}^{s}}$. Denoting the interception point as ${t}_{\zeta}$, we have 
		${t}_{\zeta}>1$.	
	
	Similar to Case (i), we have: when $0<t<t_{\zeta}$, 
	$\displaystyle
	\frac{d {\zeta}}{dt}<4c_s \gamma {\zeta}^{1+\frac{5}{2s}},
	$ Also, when we consider $\phi(t)$ as the solution of
	\begin{align}
	\label{newphi}
	\frac{d \phi}{dt}=4c_s \gamma {\phi}^{1+\frac{5}{2s}},
	\end{align}
	with $\phi(0)=\zeta(0)$, we have:
$\displaystyle
	\zeta(t) < \phi(t), \text{for\ all}\ t\in \left[0, \min \left\{t_{\zeta}, T_{\zeta}, T_{\phi}\right\}\right].
	$
	
	Moreover,
	$\displaystyle t_{\phi}< t_{\zeta}<T_{\zeta}.$	If $0<t_{\phi}\leq1$, same as Case (i), we have
	\begin{align*}
	T_{\zeta}>	t_{\phi}>\frac{c_s \min \left\{\gamma^{\frac{5}{2s-5}}, \ \gamma^{-1}\right \}}{\zeta(0)^{\frac{5}{2s}}}.
	\end{align*}
	
	If $t_{\phi}>1$, then
	\begin{align}
	\label{tphi2}
	\phi(t_{\phi})=\frac{\tilde{c}}{t_{\phi}^{s}}.
	\end{align} 
	
	Solving (\ref{newphi}), we have
	\begin{align}
	\label{tsol1}
	\phi(t)=(\phi(0)^{-\frac{5}{2s}}-c_s \gamma  t)^{-\frac{2s}{5}}.
	\end{align}
	
	Combining (\ref{tphi2}) and (\ref{tsol1}), we have:
$\displaystyle
	(\phi(0)^{-\frac{5}{2s}}-c_s \gamma  t_{\phi})^{-\frac{2s}{5}}=\tilde{c} t_{\phi}^{-s}.
	$
	%$\Rightarrow$\\
	%$(\phi(0)^{-\frac{5}{4s}}-c\frac{5}{4s} t_{\phi})^{-1}=t_{\phi}^{-2}$,\\
	%$\Rightarrow$\\
	
	After simplification, we obtain:
	$\displaystyle
	\phi(0)^{-\frac{5}{2s}}-c_s \gamma  t_{\phi}=\tilde{c}^{\frac{-5}{2s}}t_{\phi}^{5/2}.
$ Therefore
	\begin{align}
	\label{equt1}
	\tilde{c}^{\frac{-5}{2s}}t_{\phi}^{5/2}+c_s \gamma  t_{\phi}=\phi(0)^{-\frac{5}{2s}}.
	\end{align}
	
	Since $t_{\phi}>1$, then ${t_{\phi}}^{5/2}> t_{\phi}$, from (\ref{equt1}), we have:
$\displaystyle
	\phi(0)^{-\frac{5}{2s}}< \left(\tilde{c}^{\frac{-5}{2s}}+c_s \gamma \right)  {t_{\phi}}^{5/2}.
$ Therefore
	\begin{align}
	\label{equtphi1}
	\phi(0)^{-\frac{5}{2s}}<\frac{1}{\tilde{\tilde{c}}} {t_{\phi}}^{5/2}.
	\end{align}
	
	Following similar analysis as Case (i), we have $\displaystyle {\tilde{\tilde{c}}}=c_s \min \left\{\gamma^{\frac{5}{2s-5}}, \ \gamma^{-1}\right \}$ and
	\begin{align*}
 T_{\zeta}>t_{\phi}>\frac{\tilde{\tilde{c}}^{2/5}}{\phi(0)^{\frac{1}{s}}}=\frac{c_s \min \left\{\gamma^{\frac{2}{2s-5}}, \ \gamma^{-2/5}\right \}}{\zeta(0)^{\frac{1}{s}}}.
	\end{align*}
	
	Therefore, for Case (ii), we have
		\begin{align*}
	T_{\zeta}>t_{\phi}>\min\left\{Z, Z^{2/5}\right\},
	\end{align*}
	where $\displaystyle Z=\frac{c_s \min \left\{\gamma^{\frac{5}{2s-5}}, \ \gamma^{-1}\right \}}{\zeta(0)^{\frac{5}{2s}}}.$
	
	\end{proof}
\textbf{Proof of Lemma \ref{inequ3w}}.
\begin{proof}
	\begin{align}
		\label{bwu}
		\left |	\left (B({\omega}, u), A^{\tilde{s}}e^{2\alpha A^{\frac{1}{2}}}{\omega} \right ) \right |&=\left |\left(A^{\frac{\tilde{s}}{2}} e^{\alpha A^{\frac{1}{2}}}B({\omega}, u), A^{\frac{\tilde{s}}{2}} e^{\alpha A^{\frac{1}{2}}} {\omega}  \right )\right |\\ \nonumber
		&\leq \|{\omega} \cdot \nabla u\|_{\tilde{s}, \alpha } \|{\omega}\|_{\tilde{s}, \alpha }.
	\end{align}
	
	When $-\frac{1}{2}<\tilde{s}<\frac{3}{2}$, applying Lemma \ref{lem-gev1} with $s_1=\frac{3+2\tilde{s}}{4}$ and $s_2=\frac{3+2\tilde{s}}{4}$, we have:
$\displaystyle
		\|{\omega} \cdot \nabla u\|_{\tilde{s}, \alpha }\leq c_{\tilde{s}}\|{\omega}\|^2_{\frac{3+2\tilde{s}}{4}, \alpha }.
$
	
	Furthermore, $\displaystyle
		\|{\omega}\|^2_{\frac{3+2\tilde{s}}{4}, \alpha } \leq c_{\tilde{s}}\|{\omega}\|^{\frac{1+2\tilde{s}}{2}}_{\tilde{s}, \alpha } \|{\omega}\|^{\frac{3-2\tilde{s}}{2}}_{\tilde{s}+1, \alpha }.
$ Therefore, (\ref{bwu}) beomes $$\left|	\left (B({\omega}, u), A^{\tilde{s}}e^{2\alpha A^{\frac{1}{2}}}{\omega} \right ) \right |\leq c_{\tilde{s}} \|{\omega}\|_{\tilde{s}, \alpha }^{\frac{3+2\tilde{s}}{2}} \|{\omega}\|_{\tilde{s}+1, \alpha }^{\frac{3-2\tilde{s}}{2}} .$$
\end{proof}
\textbf{Proof of Lemma \ref{inequ4w}}.
\begin{proof}
	Starting from
	\begin{align*}
		\left(B(u, {\omega}),A^{\tilde{s}}e^{2\alpha A^{\frac{1}{2}}}{\omega}\right)=\left(A^{\frac{\tilde{s}}{2}}e^{\alpha A^{\frac{1}{2}}}B(u,{\omega}),A^{\frac{\tilde{s}}{2}}e^{\alpha A^{\frac{1}{2}}}{\omega}\right).
	\end{align*}
	
	Since $\left(B(u,A^{\frac{\tilde{s}}{2}}e^{\alpha A^{\frac{1}{2}}}{\omega}),A^{\frac{\tilde{s}}{2}}e^{\alpha A^{\frac{1}{2}}}{\omega}\right)=0$, it follows that
	\begin{align}
		\label{estB-n}
		\left(B(u,{\omega}),A^{\tilde{s}}e^{2\alpha A^{\frac{1}{2}}}{\omega}\right)=\left(A^{\frac{\tilde{s}}{2}}e^{\alpha A^{\frac{1}{2}}}B(u,{\omega})-B(u,A^{\frac{\tilde{s}}{2}}e^{\alpha A^{\frac{1}{2}}}{\omega}),A^{\frac{\tilde{s}}{2}}e^{\alpha A^{\frac{1}{2}}}{\omega}\right)= P.
	\end{align}
	
	Futhermore
	\begin{align*}
		\left(A^{\frac{\tilde{s}}{2}}e^{\alpha A^{\frac{1}{2}}}B(u,{\omega}),A^{\frac{\tilde{s}}{2}}e^{\alpha A^{\frac{1}{2}}}{\omega}\right)&=\left(B(u,{\omega}),A^{\tilde{s}}e^{2\alpha A^{\frac{1}{2}}}\omega\right)\\
		&=i \sum_{j,k} (j  \cdot \hat{u}_{k-j})(\hat{\omega}_{j} \cdot \hat{\omega}_{-k})|k |^{2\tilde{s}} e^{2\alpha |k |},
	\end{align*}
	and
	\begin{align*}
		\left(B(u,A^{\frac{\tilde{s}}{2}}e^{\alpha A^{\frac{1}{2}}}{\omega}),A^{\frac{\tilde{s}}{2}}e^{\alpha A^{\frac{1}{2}}}{\omega}\right)=i \sum_{j,k} (j  \cdot \hat{u}_{k-j})(\hat{\omega}_{j} \cdot \hat{\omega}_{-k})|j |^{\tilde{s}} e^{\alpha |j |}|k |^{\tilde{s}} e^{\alpha |k |}.
	\end{align*}
	
	Combining the above two equations together, we have 
	\begin{align*}
		P=i \sum_{j,k} (j  \cdot \hat{u}_{k-j})(\hat{\omega}_{j} \cdot \hat{\omega}_{-k})|k |^{\tilde{s}} e^{\alpha |k |}\left(|k |^{\tilde{s}} e^{\alpha |k |}-|j |^{\tilde{s}} e^{\alpha |j |}\right).
	\end{align*}
	
	Since $u$ is divergence free, we have $(k-j)  \cdot \hat{u}_{k-j}=0$ and
	\begin{align*}
		P=i \sum_{j,k} (k  \cdot \hat{u}_{k-j})(\hat{\omega}_{j} \cdot \hat{\omega}_{-k})|k |^{\tilde{s}} e^{\alpha |k |}\left(|k |^{\tilde{s}} e^{\alpha |k |}-|j |^{\tilde{s}} e^{\alpha |j |}\right).
	\end{align*}
	
	Since $\hat{\omega}_{-k}=\overline{\hat{\omega}_{k}}$, we obtain the estimate of $P$
	\begin{align}
		\label{Iineq}
		|P| \leq  \sum_{j,k} |k | |\hat{u}_{k-j}| |\hat{\omega}_{j}| |\hat{\omega}_{k}||k |^{\tilde{s}} e^{\alpha |k |} \left | |k |^{\tilde{s}} e^{\alpha |k |}-|j |^{\tilde{s}} e^{\alpha |j |}\right |.
	\end{align}
	
	Defining $f(x)=x^{\tilde{s}} e^{\alpha x}$, then $f'(x)=\tilde{s} x^{{\tilde{s}}-1} e^{\alpha x}+x^{{\tilde{s}}} \alpha  e^{\alpha x}$. Taking $\eta=a|j |+(1-a)|k |$, where $0\leq a \leq 1$, then $\eta$ is between $|j |$ and $|k |$. If $|k |<|j |$, then $|\eta|<|j |<|j |+|(k-j) |$; if $|j |<|k |$, then $|\eta|<|k |\leq|j |+|(k-j) |$. Therefore, we have $0<\eta\leq |j |+|(k-j) |$. Applying the mean value theorem, it follows that
	\begin{align*}
		\left | |k |^{\tilde{s}} e^{\alpha |k |}-|j |^{\tilde{s}} e^{\alpha |j |}\right |=|f'(\eta)| \left | |k |-|j | \right | &\leq  |f'(\eta)| |(k-j) |\\
		&=\left | (\tilde{s} {\eta}^{{\tilde{s}}-1}e^{\alpha  \eta}+{\eta}^{{\tilde{s}}}\alpha  e^{\alpha  \eta}) \right | \left |(k-j) \right |.
		%&=\left | {\eta}^{{\tilde{s}}-1}e^{\alpha  \eta}(s+\alpha \eta) \right | |(k-j) |.
		%&\leq \left | (|j|+|k-j|)^{{\tilde{s}}-1} e^{\alpha |j|}e^{\alpha |k-j|} (s+t|j|+t|k-j|)\right | |(k-j) |\\
		%&\leq \left | (|j|^{{\tilde{s}}-1}+|k-j|^{{\tilde{s}}-1}) e^{\alpha |j|}e^{\alpha |k-j|} (s+t|j|+t|k-j|)\right | |(k-j) |\\
		%&=\left | (|j|^{{\tilde{s}}-1}+|i|^{{\tilde{s}}-1}) e^{\alpha |j|}e^{\alpha |i|} (s+\alpha |j|+t|i|)\right | |i|.
	\end{align*}
	
	Therefore, taking $l=k-j$, (\ref{Iineq}) becomes
	\begin{align*}
		|P|&\leq \sum_{l+j=k} |k | |\hat{u}_{l}| |\hat{\omega}_{j}| |\hat{\omega}_{k}||k |^{\tilde{s}} e^{\alpha {|k |}} \left | (\tilde{s} {\eta}^{{\tilde{s}}-1}e^{\alpha  \eta}+{\eta}^{{\tilde{s}}}\alpha  e^{\alpha  \eta}) \right | |l |\\
		&\leq |\tilde{s}| \sum_{l+j=k} |k | |\hat{u}_{l}| |\hat{\omega}_{j}| |\hat{\omega}_{k}||k |^{\tilde{s}} e^{\alpha {|k |}} \left | {\eta}\right |^{{\tilde{s}}-1}e^{\alpha  \eta}  |l |\\
		&+\alpha  \sum_{l+j=k} |k | |\hat{u}_{l}| |\hat{\omega}_{j}| |\hat{\omega}_{k}||k |^{\tilde{s}} e^{\alpha {|k |}} \left | {\eta} \right |^{{\tilde{s}}}e^{\alpha  \eta} |l |\\
		&=P_1+P_2.
	\end{align*}
	
	We first analyze $\displaystyle P_1=|\tilde{s}| \sum_{l+j=k} |k | |\hat{u}_{l}| |\hat{\omega}_{j}| |\hat{\omega}_{k}||k |^{\tilde{s}} e^{\alpha {|k |}} \left | {\eta}\right |^{{\tilde{s}}-1}e^{\alpha  \eta}  |l |$.\\
	Case (i): When $-\frac{1}{2}<\tilde{s}<1$, since $\left | {\eta}\right |=a|j |+(1-a)|k |$, $0\leq a \leq 1$ and we have\\
	Case (ia): if $|j |\leq |k |$, then $\left | {\eta}\right |\geq |j |$, we have:
	$\displaystyle
		\left | {\eta}\right |^{{\tilde{s}}-1}\leq  |j |^{{\tilde{s}}-1}.
	$
	
	Moreover, since $0<\eta\leq |j |+|l |$, we have:
$\displaystyle
		e^{\alpha  \eta}\leq e^{\alpha |j |}e^{\alpha |l |}. 
$ Taking $0<\delta<1$, it follows that
	\begin{align*}
		P_1&\leq |\tilde{s}| \sum_{l+j=k}|k | |\hat{u}_{l}| |\hat{\omega}_{j}| |\hat{\omega}_{k}||k |^{\tilde{s}} e^{\alpha |k |} |j |^{{\tilde{s}}-1} e^{\alpha |j |}e^{\alpha |l |} |l | \\
		&\leq |\tilde{s}| \sum_{l+j=k}  |k |^{1-\delta} (|l ||\hat{u}_{l}|e^{\alpha |l |}) \cdot (|j |^{{\tilde{s}}-1}|\hat{\omega}_{j}|e^{\alpha |j |})\cdot (|\hat{\omega}_{k}||k |^{{\tilde{s}}+\delta} e^{\alpha |k |} ) \\
		&\leq |\tilde{s}| 		\|{\omega}_1\ast {\omega}_2\|_{\dot{H}^{1-\delta}} \|{\omega}\|_{\tilde{s}+\delta,  \alpha },
	\end{align*}
	where 
	\begin{align*}
		\|{\omega}_1\|^2_{L^2}=\sum_{l} |\hat{{\omega}}_{l}|^2e^{2\alpha |l |},\
		\|{\omega}_2\|^2_{L^2}=\sum_{l} |l |^{2(\tilde{s}-1)}|\hat{{\omega}}_{l}|^2e^{2\alpha |l |}.
		%&\|{\omega}_3\|^2_{L^2}=\sum_{l} |l |^{2s}|\hat{{\omega}}_{l}|^2e^{2\alpha |l |}.
	\end{align*}
	
	When $-\frac{1}{2}< \tilde{s}<1$ and $\max\left\{\frac{1}{2}-\tilde{s},0\right\}<\delta<1$, from Lemma \ref{lem-gev1} with $s_1=\frac{3-2\delta+2\tilde{s}}{4}$ and $s_2=\frac{7-2\delta-2\tilde{s}}{4}$, we have
	\begin{align*}
		\|{\omega}_1\ast {\omega}_2\|_{\dot{H}^{1-\delta}}\leq c_{\tilde{s}}\|{\omega}_1\|_{\frac{3-2\delta+2\tilde{s}}{4}} \|{\omega}_2\|_{\frac{7-2\delta-2\tilde{s}}{4}} =c\|{\omega}\|^2_{\frac{3+2\tilde{s}-2\delta}{4}, \alpha }.
	\end{align*}
	
	Therefore
	\begin{align*}
		P_1\leq c_{\tilde{s}} \|{\omega}\|^2_{\frac{3+2\tilde{s}-2\delta}{4}, \alpha }\|{\omega}\|_{\tilde{s}+\delta,  \alpha }.
	\end{align*}
	%	Taking 
	%	\begin{align*}
	%	{\frac{3+2\tilde{s}-2\delta}{4}}={\tilde{s}+\delta} \Rightarrow \delta=\frac{3-2\tilde{s}}{6},
	%	\end{align*}
	%	which satisfies $\max\{\frac{1}{2}-\tilde{s},0\}<\delta<1$ when $-\frac{1}{2}<\tilde{s}<1$.\\
	%	Therefore
	%		\begin{align*}
	%	P_1\leq c_{\tilde{s}} \|\omega\|^3_{\frac{4\tilde{s}+3}{6}, \alpha }.
	%	\end{align*}
	%	Since
	%	\begin{align*}
	%	\|\omega\|^3_{\frac{4\tilde{s}+3}{6}, \alpha }\leq c \|\omega\|^{\frac{3+2\tilde{s}}{2}}_{\tilde{s}, \alpha } \|\omega\|^{\frac{3-2\tilde{s}}{2}}_{\tilde{s}+1, \alpha }.
	%	\end{align*}
	%	Therefore, when $-\frac{1}{2}<\tilde{s}<1$, we have
	%	\begin{align*}
	%P_1\leq c_{\tilde{s}} \|\omega\|^{\frac{3+2\tilde{s}}{2}}_{\tilde{s}, \alpha } \|\omega\|^{\frac{3-2\tilde{s}}{2}}_{\tilde{s}+1, \alpha }.
	%	\end{align*}
	
	When $-\frac{1}{2}< \tilde{s}<1$ with $\max\left\{\frac{1}{2}-\tilde{s}, 0\right\}<\delta<\min\left\{\frac{3}{2}-\tilde{s}, 1\right\}$, we have
	\begin{align*}
		\|{\omega}\|^2_{\frac{3+2\tilde{s}-2\delta}{4}, \alpha }\leq c_{\tilde{s}} \|{\omega}\|^{\frac{2\delta+2\tilde{s}+1}{2}}_{\tilde{s}, \alpha } \|{\omega}\|^{\frac{3-2\delta-2\tilde{s}}{2}}_{\tilde{s}+1, \alpha }\ 
	\text{
	and}\ 
%	\begin{align*}
		\|{\omega}\|_{\tilde{s}+\delta, \alpha }\leq c_{\tilde{s}} \|{\omega}\|^{1-\delta}_{\tilde{s}, \alpha } \|{\omega}\|^{\delta}_{\tilde{s}+1, \alpha }.
	\end{align*}
	
	Therefore
	\begin{align*}
		%\label{P2}
		P_1&\leq c_{\tilde{s}} \|{\omega}\|^{\frac{2\delta+2\tilde{s}+1}{2}}_{\tilde{s}, \alpha } \|{\omega}\|^{\frac{3-2\delta-2\tilde{s}}{2}}_{\tilde{s}+1, \alpha }\|{\omega}\|^{1-\delta}_{\tilde{s}, \alpha } \|{\omega}\|^{\delta}_{\tilde{s}+1, \alpha }\\ \nonumber
		&=c_{\tilde{s}} \|{\omega}\|^{\frac{3+2\tilde{s}}{2}}_{\tilde{s}, \alpha } \|{\omega}\|^{\frac{3-2\tilde{s}}{2}}_{\tilde{s}+1, \alpha }.
	\end{align*}
	Case (ib): if $|j |> |k |$, then $\left | {\eta}\right |\geq |k |$, we have:
$\displaystyle
		\left | {\eta}\right |^{{\tilde{s}}-1}\leq  |k |^{{\tilde{s}}-1}.
$
	
	Therefore
	\begin{align*}
		P_1&\leq |\tilde{s}|\sum_{l+j=k}|k | |\hat{u}_{l}| |\hat{\omega}_{j}| |\hat{\omega}_{k}||k |^{\tilde{s}} e^{\alpha |k |} |k |^{{\tilde{s}}-1} e^{\alpha |j |}e^{\alpha |l |} |l | \\
		&\leq |\tilde{s}|\sum_{l+j=k}  |k |^{\tilde{s}} (|l ||\hat{u}_{l}|e^{\alpha |l |}) \cdot (|\hat{\omega}_{j}|e^{\alpha |j |})\cdot (|\hat{\omega}_{k}||k |^{\tilde{s}} e^{\alpha |k |} ) \\
		&\leq |\tilde{s}| 		\|{\omega}_1\ast {\omega}_1\|_{\dot{H}^{\tilde{s}}} \|{\omega}\|_{\tilde{s},  \alpha }.
	\end{align*}
	
	When $-\frac{1}{2}<\tilde{s}<1$, from Lemma \ref{lem-gev1} with $s_1=\frac{3+2\tilde{s}}{4}$ and $s_2=\frac{3+2\tilde{s}}{4}$, we have:
$\displaystyle
		\|{\omega}_1\ast {\omega}_1\|_{\dot{H}^{\tilde{s}}}\leq c_{\tilde{s}}\|{\omega}\|^2_{\frac{3+2\tilde{s}}{4}, \alpha }.
$
	
	Therefore,
$\displaystyle
		P_1\leq c_{\tilde{s}} \|{\omega}\|^2_{\frac{3+2\tilde{s}}{4}, \alpha }\|{\omega}\|_{\tilde{s},  \alpha }.
	$
	
	Since
	\begin{align}
		\label{inter2}
		\|{\omega}\|^2_{\frac{3+2\tilde{s}}{4}, \alpha }\leq c_{\tilde{s}} \|{\omega}\|^{\frac{1+2\tilde{s}}{2}}_{\tilde{s}, \alpha } \|{\omega}\|^{\frac{3-2\tilde{s}}{2}}_{\tilde{s}+1, \alpha },
	\end{align}
	we have:
$\displaystyle
		P_1\leq c_{\tilde{s}} \|{\omega}\|^{\frac{3+2\tilde{s}}{2}}_{\tilde{s}, \alpha } \|{\omega}\|^{\frac{3-2\tilde{s}}{2}}_{\tilde{s}+1, \alpha }.
	$
	
	Case (ii): When $1\leq \tilde{s}<\frac{3}{2}$, since $|\eta|\leq |j |+|l |$, we have:
	$\displaystyle
		\left | {\eta}\right |^{{\tilde{s}}-1}\leq  (|j |+|l |)^{{\tilde{s}}-1}.
	$
	
	Therefore
	\begin{align*}
		P_1&\leq \tilde{s}\sum_{l+j=k}|k | |\hat{u}_{l}| |\hat{\omega}_{j}| |\hat{\omega}_{k}||k |^{{\tilde{s}}} e^{\alpha |k |} (|j |+|l |)^{{\tilde{s}}-1} e^{\alpha |j |}e^{\alpha |l |} |l | \\
		&\leq c_{\tilde{s}}\tilde{s}\sum_{l+j=k}  |k | (|l ||\hat{u}_{l}|e^{\alpha |l |}) \cdot (|j |^{{\tilde{s}}-1}+|l |^{{\tilde{s}}-1})\cdot (|\hat{\omega}_{j}|e^{\alpha |j |})\cdot (|\hat{\omega}_{k}||k |^{\tilde{s}} e^{\alpha |k |} ) \\
		&\leq c_{\tilde{s}}\tilde{s}\sum_{l+j=k}  |k | (|l ||\hat{u}_{l}|e^{\alpha |l |}) \cdot (|j |^{{\tilde{s}}-1}|\hat{\omega}_{j}|e^{\alpha |j |})\cdot (|\hat{\omega}_{k}||k |^{{\tilde{s}}} e^{\alpha |k |} )\\
		&\leq c_{\tilde{s}}	\tilde{s}	\|{\omega}_1\ast {\omega}_2\|_{\dot{H}^{1}} \|{\omega}\|_{\tilde{s},  \alpha }.
	\end{align*}
	
	When $1\leq \tilde{s}<\frac{3}{2}$, from Lemma \ref{lem-gev1} with $s_1=\frac{3+2\tilde{s}}{4}$ and $s_2=\frac{7-2\tilde{s}}{4}$, we have
	\begin{align*}
		\|{\omega}_1\ast {\omega}_2\|_{\dot{H}^1}\leq c_{\tilde{s}}\|{\omega}_1\|_{\frac{3+2\tilde{s}}{4}} \|{\omega}_2\|_{\frac{7-2\tilde{s}}{4}} =c_{\tilde{s}}\|{\omega}\|^2_{\frac{3+2\tilde{s}}{4}, \alpha }.
	\end{align*}
	
	Therefore,
	$\displaystyle
		P_1\leq c_{\tilde{s}} \|{\omega}\|^2_{\frac{3+2\tilde{s}}{4}, \alpha }\|{\omega}\|_{\tilde{s},  \alpha }.
	$
	
	From (\ref{inter2}), we have:
	$\displaystyle
		P_1\leq c_{\tilde{s}} \|{\omega}\|^{\frac{3+2\tilde{s}}{2}}_{\tilde{s}, \alpha } \|{\omega}\|^{\frac{3-2\tilde{s}}{2}}_{\tilde{s}+1, \alpha }.
	$
	
	Combining case (i) and (ii), when $-\frac{1}{2}<\tilde{s}<\frac{3}{2}$, we always have
	\begin{align}
		\label{p1case3}
		P_1\leq c_{\tilde{s}} \|{\omega}\|^{\frac{3+2\tilde{s}}{2}}_{\tilde{s}, \alpha } \|{\omega}\|^{\frac{3-2\tilde{s}}{2}}_{\tilde{s}+1, \alpha }.
	\end{align}
	%	\begin{align}
	%		\label{est1a-n}
	%		\left | |k |^{\tilde{s}} e^{\alpha |k |}-|j |^{\tilde{s}} e^{\alpha |j |}\right |
	%		%&\leq \left | (|j |+|(k-j) |)^{{\tilde{s}}-1} e^{\alpha |j |}e^{\alpha |(k-j) |} (s+t|j |+t|(k-j) |)\right | |l |\\
	%		\leq (|j |+|l |)^{{\tilde{s}}-1} e^{\alpha |j |}e^{\alpha |l |} (s+\alpha |j |+\alpha |l |) |l |.
	%	\end{align}
	
	Next, we can analyze the estimate for $$ P_2=\alpha  \sum_{l+j=k} |k | |\hat{u}_{l}| |\hat{\omega}_{j}| |\hat{\omega}_{k}||k |^{\tilde{s}} e^{\alpha {|k |}} \left | {\eta} \right |^{{\tilde{s}}}e^{\alpha  \eta} |l |.$$
	Case (a): $0\leq \tilde{s}<\frac{3}{2}$, we have:
	$\displaystyle
		\left | {\eta}\right |^{{\tilde{s}}}\leq  (|j |+|l |)^{{\tilde{s}}}.
$ Therefore,
	\begin{align*}
		P_2 & \leq \alpha \sum_{l+j=k} |k | |\hat{u}_{l}| |\hat{\omega}_{j}| |\hat{\omega}_{k}||k |^{\tilde{s}} e^{\alpha |k |} (|j |+|l |)^{{\tilde{s}}} e^{\alpha |j |}e^{\alpha |l |} |l |\ \\
		&\leq c_{\tilde{s}}\alpha \sum_{l+j=k}|k | |\hat{u}_{l}| |\hat{\omega}_{j}| |\hat{\omega}_{k}||k |^{\tilde{s}} e^{\alpha |k |} (|j |^{{\tilde{s}}}+|l |^{{\tilde{s}}}) e^{\alpha |j |}e^{\alpha |l |} |l |\ \\
		&\leq c_{\tilde{s}}\alpha \sum_{l+j=k}|k | |\hat{u}_{l}| |\hat{\omega}_{j}| |\hat{\omega}_{k}||k |^{\tilde{s}} e^{\alpha |k |} |l |^{{\tilde{s}}}e^{\alpha |j |}e^{\alpha |l |} |l |\\
		&\leq c_{\tilde{s}}\alpha  \sum_{l+j=k}|k |^{1-\delta}|l |^{{\tilde{s}}}  (|l ||\hat{u}_{l}|e^{\alpha |l |})\cdot (|\hat{\omega}_{j}|e^{\alpha |j |})\cdot (|\hat{\omega}_{k}||k |^{{\tilde{s}}+\delta} e^{\alpha |k |} ) \\
		&\leq  c_{\tilde{s}}\alpha  	\|{\omega}_1\ast {\omega}_3\|_{\dot{H}^{1-\delta}} \|{\omega}\|_{\tilde{s}+\delta,  \alpha },
	\end{align*}
	where 
$\displaystyle
		%&\|{\omega}_1\|^2_{L^2}=\sum_{i} |\hat{{\omega}}_{i}|^2e^{2\alpha |l |},\\
		%&\|{\omega}_2\|^2_{L^2}=\sum_{i} |l |^{2(s-1)}|\hat{{\omega}}_{i}|^2e^{2\alpha |l |},\\
		\|{\omega}_3\|^2_{L^2}=\sum_{l} |l |^{2\tilde{s}}|\hat{{\omega}}_{l}|^2e^{2\alpha |l |}.
	$ When $0\leq \tilde{s}< \frac{3}{2}$ with $\max\left\{\tilde{s}-\frac{1}{2},0\right\}<\delta<1$, from Lemma \ref{lem-gev1} with $\displaystyle s_1=\frac{5+2\tilde{s}-2\delta}{4},$ and $\displaystyle s_2=\frac{5-2\delta-2\tilde{s}}{4}.$ We have
	\begin{align*}
		\|{\omega}_1\ast {\omega}_3\|_{\dot{H}^{1-\delta}}\leq c_{\tilde{s}}\|{\omega}_1\|_\frac{5+2\tilde{s}-2\delta}{4} \|{\omega}_3\|_\frac{5-2\delta-2\tilde{s}}{4} =c_{\tilde{s}}\|{\omega}\|^2_{\frac{5+2\tilde{s}-2\delta}{4}, \alpha }.
	\end{align*}
	%	and
	%	\begin{align*}
	%	\|{\omega}_1\ast {\omega}_3\|_{L^2}\leq c\|{\omega}_1\|_{\frac{3+2\tilde{s}}{4}} \|{\omega}_3\|_{\frac{3-2\tilde{s}}{4}} =c\|{\omega}\|^2_{\frac{3+2\tilde{s}}{4}, \alpha }.
	%	\end{align*}
	
	Therefore,
	$\displaystyle
		P_2\leq c_{\tilde{s}} \alpha  \|{\omega}\|^2_{\frac{5+2\tilde{s}-2\delta}{4}, \alpha } \|{\omega}\|_{\tilde{s}+\delta,\alpha }.
$
	
	When $0\leq \tilde{s}< \frac{3}{2}$ with $\max\left\{\tilde{s}-\frac{1}{2}, \frac{1}{2}-\tilde{s}, 0\right\}<\delta<1$, we have
	\begin{align*}
		\|{\omega}\|^2_{\frac{5+2\tilde{s}-2\delta}{4}, \alpha }\leq c_{\tilde{s}} \|{\omega}\|^{\frac{2\delta+2\tilde{s}-1}{2}}_{\tilde{s}, \alpha } \|{\omega}\|^{\frac{5-2\delta-2\tilde{s}}{2}}_{\tilde{s}+1, \alpha } \quad \mbox{and}\quad 
		\|{\omega}\|_{\tilde{s}+\delta, \alpha }\leq c_{\tilde{s}} \|{\omega}\|^{1-\delta}_{\tilde{s}, \alpha } \|{\omega}\|^{\delta}_{\tilde{s}+1, \alpha }.
	\end{align*}
	\comments{
	and
	\begin{align*}
		\|{\omega}\|_{\tilde{s}+\delta, \alpha }\leq c_{\tilde{s}} \|{\omega}\|^{1-\delta}_{\tilde{s}, \alpha } \|{\omega}\|^{\delta}_{\tilde{s}+1, \alpha }.
	\end{align*}
	}
	Thus
	\begin{align*}
		%\label{P2}
		P_2&\leq c_{\tilde{s}} \alpha \|{\omega}\|^{\frac{2\delta+2\tilde{s}-1}{2}}_{\tilde{s}, \alpha } \|{\omega}\|^{\frac{5-2\delta-2\tilde{s}}{2}}_{\tilde{s}+1, \alpha }\|{\omega}\|^{1-\delta}_{\tilde{s}, \alpha } \|{\omega}\|^{\delta}_{\tilde{s}+1, \alpha }\\ \nonumber
		&=c_{\tilde{s}} \alpha  \|{\omega}\|^{{\tilde{s}}+\frac{1}{2}}_{\tilde{s}, \alpha } \|{\omega}\|^{\frac{5}{2}-s}_{\tilde{s}+1, \alpha }.
	\end{align*}
	Case (b): $-\frac{1}{2}<\tilde{s}<0$.\\
	Case (b1): if $|j |\leq |k |$, then $\left | {\eta}\right |\geq |j |$, we have:
	$\displaystyle
		\left | {\eta}\right |^{{\tilde{s}}}\leq  |j |^{{\tilde{s}}}.
	$ Therefore,
	\begin{align*}
		P_2 & \leq \alpha \sum_{l+j=k} |k | |\hat{u}_{l}| |\hat{\omega}_{j}| |\hat{\omega}_{k}||k |^{\tilde{s}} e^{\alpha |k |} |j |^{{\tilde{s}}} e^{\alpha |j |}e^{\alpha |l |} |l |\ \\
		&\leq \alpha  \sum_{l+j=k}|k |^{1-\delta}|j |^{{\tilde{s}}}  (|l ||\hat{u}_{l}|e^{\alpha |l |})\cdot (|\hat{\omega}_{j}|e^{\alpha |j |})\cdot (|\hat{\omega}_{k}||k |^{{\tilde{s}}+\delta} e^{\alpha |k |} ) \\
		&\leq  \alpha  	\|{\omega}_1\ast {\omega}_3\|_{\dot{H}^{1-\delta}} \|{\omega}\|_{\tilde{s}+\delta,  \alpha }.
	\end{align*}
	%where 
	%\begin{align*}
	%%&\|{\omega}_1\|^2_{L^2}=\sum_{i} |\hat{{\omega}}_{i}|^2e^{2\alpha |l |},\\
	%%&\|{\omega}_2\|^2_{L^2}=\sum_{i} |l |^{2(s-1)}|\hat{{\omega}}_{i}|^2e^{2\alpha |l |},\\
	%\|{\omega}_3\|^2_{L^2}=\sum_{i} |l |^{2\tilde{s}}|\hat{{\omega}}_{i}|^2e^{2\alpha |l |}.
	%\end{align*}
	
	When $- \frac{1}{2}< \tilde{s}< 0$ with $0<\delta<1$, from Lemma \ref{lem-gev1} with $s_1=\frac{5+2\tilde{s}-2\delta}{4}$ and $s_2=\frac{5-2\delta-2\tilde{s}}{4}$, we have
	\begin{align*}
		\|{\omega}_1\ast {\omega}_3\|_{\dot{H}^{1-\delta}}\leq c_{\tilde{s}}\|{\omega}_1\|_\frac{5+2\tilde{s}-2\delta}{4} \|{\omega}_3\|_\frac{5-2\delta-2\tilde{s}}{4} =c\|{\omega}\|^2_{\frac{5+2\tilde{s}-2\delta}{4}, \alpha }.
	\end{align*}
	%	and
	%	\begin{align*}
	%	\|{\omega}_1\ast {\omega}_3\|_{L^2}\leq c\|{\omega}_1\|_{\frac{3+2\tilde{s}}{4}} \|{\omega}_3\|_{\frac{3-2\tilde{s}}{4}} =c\|{\omega}\|^2_{\frac{3+2\tilde{s}}{4}, \alpha }.
	%	\end{align*}
	
	Therefore,
	$\displaystyle
		P_2\leq c_{\tilde{s}} \alpha  \|{\omega}\|^2_{\frac{5+2\tilde{s}-2\delta}{4}, \alpha } \|{\omega}\|_{\tilde{s}+\delta,\alpha }.
$
	
	When $- \frac{1}{2}< \tilde{s}< 0$ with $\frac{1}{2}-\tilde{s}<\delta<1$, we have
	\begin{align*}
		\|{\omega}\|^2_{\frac{5+2\tilde{s}-2\delta}{4}, \alpha }\leq c_{\tilde{s}} \|{\omega}\|^{\frac{2\delta+2\tilde{s}-1}{2}}_{\tilde{s}, \alpha } \|{\omega}\|^{\frac{5-2\delta-2\tilde{s}}{2}}_{\tilde{s}+1, \alpha }\ 
	%\end{align*}
	\text{and}\ 
%	\begin{align*}
		\|{\omega}\|_{\tilde{s}+\delta, \alpha }\leq c_{\tilde{s}} \|{\omega}\|^{1-\delta}_{\tilde{s}, \alpha } \|{\omega}\|^{\delta}_{\tilde{s}+1, \alpha },
	\end{align*}
	we have
	\begin{align*}
		P_2&\leq c_{\tilde{s}} \alpha \|{\omega}\|^{\frac{2\delta+2\tilde{s}-1}{2}}_{\tilde{s}, \alpha } \|{\omega}\|^{\frac{5-2\delta-2\tilde{s}}{2}}_{\tilde{s}+1, \alpha }\|{\omega}\|^{1-\delta}_{\tilde{s}, \alpha } \|{\omega}\|^{\delta}_{\tilde{s}+1, \alpha }\\ \nonumber
		&=c_{\tilde{s}} \alpha  \|{\omega}\|^{{\tilde{s}}+\frac{1}{2}}_{\tilde{s}, \alpha } \|{\omega}\|^{\frac{5}{2}-s}_{\tilde{s}+1, \alpha }.
	\end{align*}
	Case (b2): if $|j |> |k |$, then $\left | {\eta}\right |\geq |k |$, we have:
$\displaystyle
		\left | {\eta}\right |^{{\tilde{s}}}\leq  |k |^{{\tilde{s}}}.
$
	
	Therefore
	\begin{align*}
		P_2 & \leq \alpha \sum_{l+j=k} |k | |\hat{u}_{l}| |\hat{\omega}_{j}| |\hat{\omega}_{k}||k |^{\tilde{s}} e^{\alpha |k |} |k |^{{\tilde{s}}} e^{\alpha |j |}e^{\alpha |l |} |l |\ \\
		&\leq \alpha  \sum_{l+j=k}|k |^{{\tilde{s}}+1-\delta} (|l ||\hat{u}_{l}|e^{\alpha |l |})\cdot (|\hat{\omega}_{j}|e^{\alpha |j |})\cdot (|\hat{\omega}_{k}||k |^{{\tilde{s}}+\delta} e^{\alpha |k |} ) \\
		&\leq  \alpha  	\|{\omega}_1\ast {\omega}_1\|_{\dot{H}^{{\tilde{s}}+1-\delta}} \|{\omega}\|_{\tilde{s}+\delta,  \alpha }.
	\end{align*}
	
	When $- \frac{1}{2}< \tilde{s}< 0$ with $0<\delta<1$, from Lemma \ref{lem-gev1} with $s_1=s_2=\frac{5+2\tilde{s}-2\delta}{4}$, we have
	\begin{align*}
		\|{\omega}_1\ast {\omega}_2\|_{\dot{H}^{{\tilde{s}}+1-\delta}} \leq c_{\tilde{s}}\|{\omega}\|^2_{\frac{5+2\tilde{s}-2\delta}{4}, \alpha }.
	\end{align*}
	%	and
	%	\begin{align*}
	%	\|{\omega}_1\ast {\omega}_3\|_{L^2}\leq c\|{\omega}_1\|_{\frac{3+2\tilde{s}}{4}} \|{\omega}_3\|_{\frac{3-2\tilde{s}}{4}} =c\|{\omega}\|^2_{\frac{3+2\tilde{s}}{4}, \alpha }.
	%	\end{align*}
	
	Therefore,
$\displaystyle
		P_2\leq c_{\tilde{s}} \alpha  \|{\omega}\|^2_{\frac{5+2\tilde{s}-2\delta}{4}, \alpha } \|{\omega}\|_{\tilde{s}+\delta,\alpha }.
$
	
	When $- \frac{1}{2}< \tilde{s}< 0$ with $\frac{1}{2}-\tilde{s}<\delta<1$, we have
	\begin{align*}
		\|{\omega}\|^2_{\frac{5+2\tilde{s}-2\delta}{4}, \alpha }\leq c_{\tilde{s}} \|{\omega}\|^{\frac{2\delta+2\tilde{s}-1}{2}}_{\tilde{s}, \alpha } \|{\omega}\|^{\frac{5-2\delta-2\tilde{s}}{2}}_{\tilde{s}+1, \alpha }\
	%\end{align*}
	\text{and}\
	%\begin{align*}
		\|{\omega}\|_{\tilde{s}+\delta, \alpha }\leq c_{\tilde{s}} \|{\omega}\|^{1-\delta}_{\tilde{s}, \alpha } \|{\omega}\|^{\delta}_{\tilde{s}+1, \alpha }.
	\end{align*}
	
	Therefore
	\begin{align*}
		P_2&\leq c_{\tilde{s}} \alpha \|{\omega}\|^{\frac{2\delta+2\tilde{s}-1}{2}}_{\tilde{s}, \alpha } \|{\omega}\|^{\frac{5-2\delta-2\tilde{s}}{2}}_{\tilde{s}+1, \alpha }\|{\omega}\|^{1-\delta}_{\tilde{s}, \alpha } \|{\omega}\|^{\delta}_{\tilde{s}+1, \alpha }\\ \nonumber
		&=c_{\tilde{s}} \alpha  \|{\omega}\|^{{\tilde{s}}+\frac{1}{2}}_{\tilde{s}, \alpha } \|{\omega}\|^{\frac{5}{2}-s}_{\tilde{s}+1, \alpha }.
	\end{align*}
	
	Combing Case (a) and Case (b), we have
	\begin{align}
		\label{P2}
		P_2\leq
		%&\leq c_{\tilde{s}} \alpha \|{\omega}\|^{\frac{2\delta+2\tilde{s}-1}{2}}_{\tilde{s}, \alpha } \|{\omega}\|^{\frac{5-2\delta-2\tilde{s}}{2}}_{\tilde{s}+1, \alpha }\|{\omega}\|^{1-\delta}_{\tilde{s}, \alpha } \|{\omega}\|^{\delta}_{\tilde{s}+1, \alpha }\\ \nonumber
		%&=
		c_{\tilde{s}} \alpha  \|{\omega}\|^{{\tilde{s}}+\frac{1}{2}}_{\tilde{s}, \alpha } \|{\omega}\|^{\frac{5}{2}-s}_{\tilde{s}+1, \alpha }.
	\end{align}
	
	Combing (\ref{p1case3}) and (\ref{P2}), when $-\frac{1}{2}<\tilde{s}<\frac{3}{2}$, it yields that
	\begin{align*}
		\left|	\left (B(u, {\omega}), A^{{\tilde{s}}}e^{2\alpha A^{\frac{1}{2}}}{\omega} \right ) \right|=P=P_1+P_2\leq c_{\tilde{s}} \|{\omega}\|^{{\tilde{s}}+\frac{3}{2}}_{\tilde{s}, \alpha } \|{\omega}\|^{\frac{3}{2}-\tilde{s}}_{\tilde{s}+1, \alpha }+c_{\tilde{s}} \alpha  \|{\omega}\|^{{\tilde{s}}+\frac{1}{2}}_{\tilde{s}, \alpha } \|{\omega}\|^{\frac{5}{2}-\tilde{s}}_{\tilde{s}+1, \alpha }. 
	\end{align*}
\end{proof}

{\bf Proof of Lemma \ref{lemmaonX}.}
\begin{proof}
	Comparing the terms on the right hand side of (\ref{mainequ1X}), we can expect that there is a region (when $t$ and $X$ are both small), $c_{\tilde{s}}X^{1+ \frac{4}{1+2\tilde{s}}}$ is the dominating term among the two terms on the right hand side. 
	
	In order to find this specific region, we compare $c_{\tilde{s}}X^{1+ \frac{4}{1+2\tilde{s}}}$ with $c_{\tilde{s}}(\beta t)^{\frac{4}{2\tilde{s}-1}}X^{1+  \frac{4}{2\tilde{s}-1}}$.
	
	 If  $c_{\tilde{s}}X^{1+ \frac{4}{1+2\tilde{s}}}\geq c_{\tilde{s}}(\beta t)^{\frac{4}{2\tilde{s}-1}}X^{1+  \frac{4}{2\tilde{s}-1}},$ then
	 $\displaystyle X \leq \frac{c_{\tilde{s}} }{(\beta t)^{\frac{2\tilde{s}+1}{2}}}.$ 
	
	Considering the function
	\begin{align*}
	K(t)=X(t)-\frac{c_{\tilde{s}} }{(\beta t)^{\frac{2\tilde{s}+1}{2}}}.
	\end{align*}

	 From (\ref{mainequ1X}), we observe that $X$ starts with positve initial data and is an increasing function. Moreover, since $X \nearrow \infty$ as $t\nearrow T_{X}$, it will intersect the curve $\displaystyle \frac{c_{\tilde{s}} }{(\beta t)^{\frac{2\tilde{s}+1}{2}}}$. Therefore, there exists a $t_{X}$ such that $K(t_{X})=0$ and $K(t)<0$ when $t<t_{X}$.
	Therefore, when $0<t<t_{X}$ and we have
	\begin{align}
	\label{innerpro5}
	\frac{d X}{dt}<2c_{\tilde{s}}X^{1+ \frac{4}{1+2\tilde{s}}}:=c_{\tilde{s}}X^{1+ \frac{4}{1+2\tilde{s}}}.
	\end{align}
	
	When $0<t<t_{X}$, we compare $X(t)$ with $\varphi(t)$, where, $\varphi(t)$ is the solution of
	\begin{align}
	\label{innerpro6}
	\frac{d \varphi}{dt}=c_{\tilde{s}}\varphi^{1+ \frac{4}{1+2\tilde{s}}},
	\end{align}
	with $\varphi(0)=X(0)$ and $T_{\varphi}$ is the local existence time of $\varphi$.\\
	
	Applying Lemma \ref{lemma:nonlinear_Gronwall} on (\ref{innerpro5}) and (\ref{innerpro6}), we have 
	\begin{align*}
	X(t) < \varphi(t), \text{for\ all}\ t\in \left[0, \min \left\{t_{X}, T_{X}, T_{\varphi}\right\}\right].
	\end{align*}
	
	From (\ref{innerpro6}), $\varphi(t)$ will also intercepts with the curve $\displaystyle \frac{c_{\tilde{s}} }{(\beta t)^{\frac{2\tilde{s}+1}{2}}}$. Denote the interception point as $t_{\varphi}$, then $t_{\varphi}< t_{X}<T_{X}$.
	To calculate $t_{\varphi}$, we have
	\begin{align}
	\label{tphi-n}
	\varphi(t_{\varphi})=\frac{c_{\tilde{s}} }{(\beta t_{\varphi})^{\frac{2\tilde{s}+1}{2}}}.
	\end{align}
	
	Solving (\ref{innerpro6}), we have
	\begin{align}
	\label{tsol-n}
	\varphi(t)=(\varphi(0)^{-\frac{4}{1+2\tilde{s}}}-c_{\tilde{s}} t)^{-\frac{1+2\tilde{s}}{4}}.
	\end{align}
	
	Therefore:
	$\displaystyle
	(\varphi(0)^{-\frac{4}{1+2\tilde{s}}}-c_{\tilde{s}} t_{\varphi})^{-\frac{1+2\tilde{s}}{4}}=\frac{c_{\tilde{s}} }{(\beta t_{\varphi})^{\frac{2\tilde{s}+1}{2}}}.
$
	
	After simplification, we obtain:
$\displaystyle
	\varphi(0)^{-\frac{4}{1+2\tilde{s}}}-c_{\tilde{s}} t_{\varphi}=c_{\tilde{s}} \beta^{2} t_{\varphi}^{2}.
$ Therefore
	\begin{align}
	\label{equt-n}
c_{\tilde{s}} \beta^{2} t_{\varphi}^{2}+c_{\tilde{s}} t_{\varphi}=\varphi(0)^{-\frac{4}{1+2\tilde{s}}}.
	\end{align}
	
	Case (i): when $\displaystyle X({0})\geq \frac{c_{\tilde{s}} }{(\beta)^{\frac{2\tilde{s}+1}{2}}} $, then, $\displaystyle \varphi({0})\geq \frac{c_{\tilde{s}} }{(\beta)^{\frac{2\tilde{s}+1}{2}}}\Rightarrow \varphi({1})> \frac{c_{\tilde{s}} }{(\beta)^{\frac{2\tilde{s}+1}{2}}}$. This implies $t_{\varphi}<1$, then ${t^2_{\varphi}}<t_{\varphi}$, since $\beta<\frac{1}{2}$, we have 
	\begin{align}
	\label{equtphi-n}
	\varphi(0)^{-\frac{4}{1+2\tilde{s}}}\leq c_{\tilde{s}}t_{\varphi},
	\end{align}
	this implies:
	$\displaystyle
	t_{\varphi}\geq \frac{c_{\tilde{s}}}{\varphi(0)^{\frac{4}{1+2\tilde{s}}}}.
	$ Therefore
	\begin{align}
	\label{TX1}
T_{X}>t_{\varphi}\geq \frac{c_{\tilde{s}}}{\varphi(0)^{\frac{4}{1+2\tilde{s}}}}=\frac{c_{\tilde{s}}}{X(0)^{\frac{4}{1+2\tilde{s}}}}.
	\end{align}
	 
	 Case (ii): when $\displaystyle X({0})< \frac{c_{\tilde{s}} }{(\beta)^{\frac{2\tilde{s}+1}{2}}}$, then, $\displaystyle \varphi({0})< \frac{c_{\tilde{s}} }{(\beta)^{\frac{2\tilde{s}+1}{2}}}$. If $ \varphi({1})> \frac{c_{\tilde{s}} }{(\beta)^{\frac{2\tilde{s}+1}{2}}}$. This implies $t_{\varphi}<1$, same as Case (i), we have:
	 	$\displaystyle
	 T_{X}>\frac{c_{\tilde{s}}}{X(0)^{\frac{4}{1+2\tilde{s}}}}.
	 $
	 
	 If $ \varphi({1})\leq \frac{c_{\tilde{s}} }{(\beta)^{\frac{2\tilde{s}+1}{2}}}$. This implies $t_{\varphi}\geq1$, then ${t^2_{\varphi}}\geq t_{\varphi}$, then (\ref{equt-n}) becomes
	\begin{align}
	\label{equtphi-n-n}
	\varphi(0)^{-\frac{4}{1+2\tilde{s}}}\leq c_{\tilde{s}}t_{\varphi}^2,
	\end{align}
	this implies $\displaystyle
	t_{\varphi}\geq \frac{c_{\tilde{s}}}{\varphi(0)^{\frac{2}{1+2\tilde{s}}}}.
	$ Therefore
	\begin{align}
	\label{TX1n}
	T_{X}>t_{\varphi}\geq \frac{c_{\tilde{s}}}{\varphi(0)^{\frac{2}{1+2\tilde{s}}}}=\frac{c_{\tilde{s}}}{X(0)^{\frac{2}{1+2\tilde{s}}}}.
	\end{align}
	Therefore, in Case (ii), we have
	\begin{align*}
T_{X}>\min\left\{Q, Q^{1/2}\right\},
	\end{align*}
	where $\displaystyle Q=\frac{c_{\tilde{s}}}{X(0)^{\frac{4}{1+2\tilde{s}}}}$.
	\end{proof}
\section*{Acknowledgement} 
A. Biswas and J. Hudson are partially supported by NSF grant DMS-1517027. J. Tian is partially supported by the AMS Simons Travel Grant.

\section*{References}

\end{document}